\tikzstyle{red dot}=[fill=red, draw=black, shape=circle, scale=0.3]
\tikzstyle{medium box}=[fill=white, draw=black, shape=rectangle, minimum width=0.3cm, minimum height=0.5cm]
\tikzstyle{s flat}=[fill=white, draw=black, shape=rectangle, minimum width=8mm, minimum height=5mm]
\tikzstyle{black dot}=[fill=black, draw=black, shape=circle, scale=0.3]
\tikzstyle{empty dot}=[fill=none, draw=black, shape=circle, scale=0.3]
\tikzstyle{l flat}=[fill=white, draw=black, shape=rectangle, minimum width=1.8cm, minimum height=0.3cm]
\tikzstyle{s rect}=[fill=white, draw=black, shape=rectangle, minimum width=0.1cm, minimum height=0.1cm]
\tikzstyle{s vert}=[fill=white, draw=black, shape=rectangle, minimum width=5mm, minimum height=8mm]
\tikzstyle{m vert}=[fill=white, draw=black, shape=rectangle, minimum width=5mm, minimum height=12mm]
\tikzstyle{m flat}=[fill=white, draw=black, shape=rectangle, minimum width=6mm, minimum height=5mm]
\tikzstyle{mm flat}=[fill=white, draw=black, shape=rectangle, minimum height=5mm, minimum width=10mm]
\tikzstyle{mmm flat}=[fill=white, draw=black, shape=rectangle, minimum height=5mm, minimum width=12mm]
\tikzstyle{grey dot}=[fill={rgb,255: red,191; green,191; blue,191}, draw={rgb,255: red,191; green,191; blue,191}, shape=circle, scale=0.3]
\tikzstyle{mm vert}=[fill=white, draw=black, shape=rectangle, minimum width=5mm, minimum height=14mm]
\tikzstyle{20mm vert}=[fill=white, draw=black, shape=rectangle, minimum width=5mm, minimum height=20mm]
\tikzstyle{16mm vert}=[fill=white, draw=black, shape=rectangle, minimum width=5mm, minimum height=16mm]
\tikzstyle{18mm vert}=[fill=white, draw=black, shape=rectangle, minimum width=5mm, minimum height=18mm]
\tikzstyle{grey s vert}=[fill=white, draw={rgb,255: red,191; green,191; blue,191}, shape=rectangle, minimum width=5mm, minimum height=8mm]
\tikzstyle{grey s rect}=[fill=white, draw={rgb,255: red,191; green,191; blue,191}, shape=rectangle, minimum width=0.1mm, minimum height=0.1mm]
\tikzstyle{dashes}=[-, dashed]
\tikzstyle{right arrow}=[->]
\tikzstyle{left arrow}=[<-]
\tikzstyle{grey fill}=[-, fill={rgb,255: red,191; green,191; blue,191}, draw={rgb,255: red,191; green,191; blue,191}]
\tikzstyle{blue fill}=[-, fill=cyan, draw=cyan]
\tikzstyle{yellow fill}=[-, fill=yellow, draw=yellow]
\tikzstyle{green fill}=[-, fill=green, draw=green]
\tikzstyle{red wire}=[-, draw=red]
\tikzstyle{blue wire}=[-, draw=blue]
\tikzstyle{grey wire}=[-, draw={rgb,255: red,191; green,191; blue,191}, fill=none]
\tikzstyle{green wire}=[-, draw=green]
\tikzstyle{yellow wire}=[-, draw=yellow]
\tikzstyle{black fill}=[-, fill=black]
\tikzstyle{red fill}=[-, fill=red, draw=red]
\newcommand{\copyy}{\tikzfig{copy}}
\newcommand{\discard}{\tikzfig{discard}}
\newcommand{\cocopy}{\tikzfig{cocopy}}
\newcommand{\codiscard}{\tikzfig{codiscard}}
\newcommand{\conjj}{\tikzfig{conj}}
\newcommand{\unit}{\tikzfig{unit}}
\newcommand{\coconj}{\tikzfig{coconj}}
\newcommand{\counit}{\tikzfig{counit}}
\newcommand{\Bool}{\mathbb{B}}
    \newcommand{\setTwo}{\{0, 1\}}
\newcommand{\At}{\mathit{At}}
\newcommand{\axBARel}{\mathit{SATA}}
\newcommand{\catSata}{\mathsf{SATA}}
\newcommand{\bigcdot}{\ensuremath{%
   \mathchoice%
    {\mskip\thinmuskip\lower0.2ex\hbox{\scalebox{1.5}{$\cdot$}}\mskip\thinmuskip}}%
    {\mskip\thinmuskip\lower0.2ex\hbox{\scalebox{1.5}{$\cdot$}}\mskip\thinmuskip}%
    {\lower0.3ex\hbox{\scalebox{1.2}{$\cdot$}}}%
    {\lower0.3ex\hbox{\scalebox{1.2}{$\cdot$}}}%
}
\newcommand{\body}{\mathsf{body}}
\newcommand{\catBProf}{\mathsf{MonRel}}
\newcommand{\catRel}{\mathbf{Rel}}
\newcommand{\catSyn}{\mathsf{Syn}}
\newcommand{\catSynDef}{\mathsf{Syn}}
\newcommand{\clauseC}{\varphi}
\newcommand{\clauseD}{\psi}
\newcommand{\cnsqDiag}[1]{\mathsf{C}_{#1}}
    \newcommand{\cnsqOp}[1]{\mathbf{C}_{#1}}
\newcommand{\diag}[1]{\mathsf{diag}(#1)}
\newcommand{\empset}{\varnothing}
\newcommand{\funcOne}{\mathbf{1}}
    \newcommand{\funcZero}{\mathbf{0}}
\newcommand{\head}{\mathsf{head}}
\newcommand{\imCnsqDiag}[1]{\mathsf{T}_{#1}}
    \newcommand{\imCnsqOp}[1]{\mathbf{T}_{#1}}
\newcommand{\intI}{\mathcal{I}}
	\newcommand{\intJ}{\mathcal{J}}
    \newcommand{\intUndirProf}[1]{\langle #1 \rangle}
    \newcommand{\intBProf}[1]{\left\langle #1 \right\rangle}
\newcommand{\iso}{\simeq}
    \newcommand{\leastHMod}[1]{\mathbf{M}_{#1}}
\newcommand{\lr}[1]{\langle #1 \rangle}
\newcommand{\matEntry}[2]{\delta_{#1 #2}}
\newcommand{\negClause}{$\neg$-clause}
    \newcommand{\negClauses}{$\neg$-clauses}
\newcommand{\obLeft}[1]{{#1}^{\blacktriangleleft}}
    \newcommand{\obRight}[1]{{#1}^{\blacktriangleright}}
\newcommand{\progL}{\mathbb{L}}
\newcommand{\progP}{\mathbb{P}}
\newcommand{\progQ}{\mathbb{Q}}
\newcommand{\rowR}{r}
\newcommand{\swap}{\tikzfig{swap}}
\newcommand{\sysIneq}[2]{\mathsf{clause}_{#1}v(#2)}
\newcommand{\tensor}{\oplus}
\newcommand{\then}{\,;}
    \newcommand{\relThen}{\mathbin{\fatsemi}}
\newcommand{\Diagbox}[3]{
\begin{tikzpicture}[scale = 1, baseline = -0.5pt]
	\begin{pgfonlayer}{nodelayer}
		\node [style=s rect] (0) at (0, 0) {\scriptsize $#1$};
		\node [style=none] (1) at (-0.5, 0) {};
		\node [style=none] (2) at (0.5, 0) {};
		\node [style=none] (3) at (-0.5, 0.2) {\scriptsize $#2$};
		\node [style=none] (5) at (0.5, 0.2) {\scriptsize $#3$};
	\end{pgfonlayer}
	\begin{pgfonlayer}{edgelayer}
		\draw (1.center) to (0);
		\draw (0) to (2.center);
	\end{pgfonlayer}
\end{tikzpicture}}
\newcommand{\valS}{\mathbf{s}}
\newcommand{\Var}{\mathit{Var}}
\newcommand*{\inlineequation}[2][]{%
  \begingroup
    \refstepcounter{equation}%
    \ifx\\#1\\%
    \else
      \label{#1}%
    \fi
    \relpenalty=10000 %
    \binoppenalty=10000 %
    \ensuremath{%
      #2%
    }%
    ~\@eqnnum
  \endgroup
}
\begin{document}
\begin{frontmatter}
  \title{A Complete Diagrammatic Calculus \\ for Boolean Satisfiability} 
  \author{Tao Gu\thanksref{tg}}	
   \thanks[tg]{Email: \href{tao.gu.18@ucl.ac.uk}{\texttt{\normalshape tao.gu.18@ucl.ac.uk}}}
  \author{Robin Piedeleu\thanksref{rp}}	
  \thanks[rp]{Email: \href{r.piedeleu@ucl.ac.uk}{\texttt{\normalshape r.piedeleu@ucl.ac.uk}}}
  \author{Fabio Zanasi\thanksref{fz}}	
    \thanks[fz]{Email: \href{f.zanasi@ucl.ac.uk}{\texttt{\normalshape f.zanasi@ucl.ac.uk}}}
  \address{University College London}  		

\begin{abstract} 
We propose a calculus of string diagrams to reason about satisfiability of Boolean formulas, and prove it to be sound and complete. We then showcase our calculus in a few case studies. First, we consider SAT-solving. Second, we consider Horn clauses, which leads us to a new decision method for propositional logic programs equivalence under Herbrand model semantics.
\end{abstract}
\begin{keyword}String diagrams, Categorical semantics, Boolean algebra, Satisfiability, Logic programming\end{keyword}
\end{frontmatter}

\section{Introduction}\label{sec:intro}

 
In the 1970s, Cook and Levin proved independently~\cite{cook1971complexity,levin1973universal} that a number of difficult problems reduce to that of determining the existence of a satisfying assignment to a given Boolean formula.  For this reason, in spite of its (suspected) theoretical intractability, and perhaps because of its surprising practical feasibility, the problem of Boolean satisfiability has acquired a central importance in logic and computer science.

This paper proposes a new perspective on the algebra of Boolean satisfiability, in the form of a sound and complete calculus of string diagrams.

\smallskip

At first glance, it may appear we already have a satisfactory algebraic treatment of Boolean satisfiability, dating back to the landmark works of George Boole in the 19th century~\cite{boole1854investigation}. Thanks to Boole, we know that classical propositional logic can be formulated as an \emph{algebraic theory}, whose models are what we now call Boolean algebras. However, this standard treatment has a catch, which is the starting point of our work: \emph{the satisfiability of a propositional formula cannot be stated as a sentence in the algebraic theory of Boolean algebras}. The reason is simple -- algebraic theories only allow axioms as equations between terms with free variables (or, equivalently from the point of view of first-order classical provability, as equations in which every variable is universally quantified). On the other hand, a Boolean formula $f$ containing free variables $x_1,\dots, x_n$, is satisfiable if and only if
$\exists x_1\dots\exists x_n (f = 1)$,
which is outside of the algebraic realm. Call this formula SAT($f$); the algebraic theory of Boolean algebras is insufficiently expressive to encode SAT($f$) as a statement in the theory itself. 

Of course, it is a perfectly well-formed first-order statement, so we could just use first-order logic to reason about SAT and derive the (un)satisfiability of particular instances. 
However, we argue that a genuinely algebraic treatment of SAT is possible and preferable, provided that we move to a \emph{diagrammatic} syntax. To this effect, we will introduce a calculus to compose systems of Boolean constraints expressed as string diagrams, and reason about them in a purely equational way. 

Why would we need a \emph{diagrammatic} syntax? As we have just shown, the algebraic theory of Boolean algebras is insufficiently expressive, thus we are looking for some other formal system that is more closely tailored to satisfiability. Existential quantification gives the problem a particularly relational flavour: SAT($f$) does not involve evaluating a Boolean function, but checking that there exists some assignment for which the function $f$ evaluates to true. This is a fundamentally \emph{relational}, not functional constraint. And diagrammatic calculi are particularly well suited to express relational constraints, as demonstrated by the wealth of diagrammatic languages developed in recent years for different subcategories of relations (linear~\cite{zanasi2015interacting,bonchi2017interacting}, polyhedral~\cite{bonchi2021diagrammatic}, affine~\cite{bonchi2019graphical}, piecewise-linear~\cite{boisseau2022graphical}). In this setting, string diagrams have the advantage of highlighting key structural features, such as dependencies and connectivity between different sub-terms/diagrams. Finally, another related way to think about string diagrams is as a multi-sorted algebraic syntax,  generalising standard algebraic syntax to the regular fragment of first-order logic, \emph{i.e.} the fragment containing truth, conjunction, and existential quantification~\cite{bonchi2018gcq,bonchi2017functorial}. 

In order to develop our approach, we take as semantic building blocks certain sets of satisfying assignments of Boolean formulas, which we organise into a symmetric monoidal category. We then devise a diagrammatic syntax for them generated by
\begin{equation*}
    \tikzfig{generate-ar}
\end{equation*}
A diagram in this syntax is formed by composing any of the generators horizontally or vertically, connecting and crossing wires where needed. A dangling wire represents a free variable so that a diagram with $m$ wires on the left and $n$ wires on the right can be thought of as a Boolean formula in CNF with $m+n$ free variables. The intuition is that variables corresponding to wires on the left appear negatively in the associated formula, while those on the right appear positively. The correspondence between the syntax and the Boolean semantics is simple: each diagram is interpreted as the set of satisfying assignments of an associated formula.  It is helpful to give these formulas in CNF for the generators explicitly:
\begin{align*}
\copyy &\mapsto (\lnot x\lor y_1) \land (\lnot x\lor y_2)  \qquad &\discard\mapsto 1 \qquad\cocopy &\mapsto (\lnot x_1\lor y) \land (\lnot x_2\lor y)\qquad &\codiscard &\mapsto 1 \\
 \coconj &\mapsto \lnot x\lor y_1\lor y_2\qquad &\unit\mapsto y\qquad  \conjj &\mapsto \lnot x_1\lor \lnot x_2\lor y\qquad &\counit &\mapsto \lnot x
\end{align*}
with the convention that $x$s correspond to left wires and $y$s to right wires. 
Then, composing two diagrams in parallel (vertically) amounts to taking the conjunction of their associated formulas, while composing them in series (horizontally) requires existentially quantifying over the variable corresponding to the shared wire, effectively projecting it out.
\begin{wrapfigure}{r}{0.25\textwidth}
    \tikzfig{unsat-instance-ex}
\end{wrapfigure} Using these, we are able to encode arbitrary SAT instances -- statements of the form $\exists x f$ for some CNF formula $f$ -- as diagrams without any dangling wires.  For example, the SAT instance corresponding to $\exists x\exists y(\lnot x\lor y)\land (x\lor y)\land (x\lor \lnot y)\land (\lnot x\lor \lnot y)$ can be depicted as on the right.

The main technical result of this paper is a sound and complete equational theory for the intended semantics: any (in)equality that holds in the Boolean semantics, can be derived by equational reasoning at the level of the diagrams themselves. This theory, we argue, is the \emph{algebra of satisfiability}. 

Using our complete axiomatisation, we can derive the (un)satisfiability of any given SAT instance entirely diagrammatically, by applying local equational reasoning steps. For example, in a few steps, diagram above can be rewritten to $\tikzfig{wunit-wcounit}$, a diagram which represents the unsatisfiable formula $\exists x(\lnot x\land x)$. 

We also take advantage of the fact that our syntax can represent \emph{Horn clauses} to give a sound and complete encoding of propositional logic programs. We show how to represent definite logic programs diagrammatically and how their conventional semantics 
corresponds to the semantics of the associated diagrams. Paired to the completeness result, this gives us a procedure to decide semantic equivalence of logic programs, by rewriting in the equational theory of our calculus -- see e.g. Example~\ref{ex:lpequiv} below.

\noindent\textbf{Synopsis.} We introduce and motivate the semantic domain of interest in Section~\ref{sec:semantics}. We introduce the diagrammatic calculus in Section~\ref{sec:cat-sem}. We show how to use the calculus to reason about satisfiability in Section~\ref{sec:sat}, and how to use it as a proof system for equivalence of logic programs in Section~\ref{sec:LP}. We conclude with the proof of completeness in Section~\ref{sec:sound-complete}. 
Due to space limitation, we omit the proof details, which can be found at \href{https://arxiv.org/abs/2211.12629}{arXiv:2211.12629 [cs.LO]}. 

\section{Boolean formulas, Satisfiability and Monotone Relations}
\label{sec:semantics}

In this section we review the basics of Boolean formulas, and justify the introduction of monotone relations as a means to reason about their satisfiability. 

We write $\Bool = \{0, 1\}$ for the two-element Boolean algebra, with the standard order $\{(0,0), (0,1), (1,1)\}$, and $\Bool^{op}$ for the same set with the opposite order. A \emph{Boolean formula} is either $0$, $1$, a variable (from some given set of variables $x,y,\dots$), a conjunction of Boolean formulas $f_1\land f_2$, a disjunction of Boolean formulas $f_1\lor f_2$, or the negation of a Boolean formula $\lnot f$. Any Boolean formula with $n$ free variables admits a standard interpretation as a map $\Bool^n \to \Bool$ given by the usual Boolean semantics of conjunction, disjunction and negation (via their truth-tables). As is common practice, we will use the same notation to denote a Boolean formula $f$ and its interpretation $f\colon\Bool^n \to \Bool$. 

We call \emph{literal} a variable or the negation of a variable. We say that a formula is in \emph{conjunctive normal form} (also known as clausal form) (CNF) when it is given as a conjunction of disjunctions of literals, \emph{e.g.} $(\lnot x \lor y\lor z) \land (x\lor z \lor \lnot y)$. Each of the disjunctions of literals is called a clause. De Morgan duality guarantees that every Boolean formula admits a CNF formula with the same interpretation, \emph{i.e.} that defines the same map $\Bool^n \to \Bool$. The equivalent formula in CNF can be obtained by pushing all the negations to the level of variables and distributing all disjunctions over conjunctions. 

Given a Boolean formula $f: \Bool^n\to \Bool$, we call a \emph{satisfying assignment} of $f$ any tuple of Booleans $(b_1, \dots, b_n)\in \Bool^n$ such that $f(b_1, \dots, b_n) = 1$. We say that a formula is \emph{satisfiable} if it has at least one satisfying assignment. The satisfying assignments of a given formula $f$ with $n$ variables carve out a subset $[f=1] :=\{(b_1,\dots, b_n) : f(b_1,\dots, b_n) = 1\}$ of $\Bool^n$. 

As mentioned in the introduction, satisfiability -- the existence of a satisfying assignment for a given Boolean formula -- is a relational constraint. Furthermore, relations tend to admit well-behaved diagrammatic calculi. For these two reasons, we turn to relations as our semantics.
The class of relations to consider is naturally suggested to us once we examine more closely the satisfying assignments of clauses, the fundamental building block of CNF formulas. Let $c_{m,n} = \lnot x_1\lor \dots \lor \lnot x_m\lor y_1\lor \dots \lor y_n$ be a clause, with $m$ negative literals and $n$ positive ones. Note the satisfying assignments of $c_{m,n}$ form a subset $[c_{m,n}=1]$ of $\Bool^m\times\Bool^n$ that is compatible with the order over $\Bool$: if $(a_1,\dots, a_m, b_1,\dots, b_n)$ is a satisfying assignment of $c_{m,n}$, any other assignment $(a_1',\dots, a_m', b_1',\dots, b_n')$ such that $a_i'\leq a_i$ and $b_j\leq b_j'$ for all $1\leq i\leq m$ and $1\leq j\leq n$, is also satisfying. In other words, $[c_{m,n}=1]$ is an \emph{upward-closed} subset of $(\Bool^{op})^m\times \Bool^n$, where the order on the product of posets $X\times Y$ is given by $(x,y)\leq_{X\times Y} (x',y')$ iff $x\leq_X x'$ and $y\leq_Y y'$. For example, the clause $\lnot x\lor y_1\lor y_2$ has as satisfying assignments $(0,(0,0)), (1, (1,0)), (1,(0,1))$ and all tuples that are greater than these in $\Bool^{op}\times \Bool^2$. Another example: the set of satisfying assignments of $\lnot x \lor y$ is $\{(0,0), (0,1), (1,1)\}$ defines an upward-closed subset of $\Bool^{op}\times\Bool$ which we have already encountered as the order on $\Bool$ itself, \emph{i.e.}, $\{(x,y) \,|\, x\leq y\}$.

This monotonicity property of clauses and their sets of satisfying assignments suggests the following definition for the kind of relations that we wish to study.
\begin{definition}\label{def:monotone-relation}
A \emph{monotone relation} $R:X\to Y$ between two posets $\lr{X, \leq_X}$ and $\lr{Y, \leq_Y}$ is a relation $R \subseteq X \times Y$ satisfying the following monotonicity condition: if $(x, y) \in R$, then for arbitrary $x' \in X$ and $y' \in Y$, $x' \leq x$ and $y \leq y'$ imply $(x', y') \in R$. 
\end{definition}
Monotone relations compose as expected: $R \relThen S \coloneqq \{ (x, z) \mid \exists y \text{ such that } (x, y) \in R, (y, z) \in S \}$. Also, the composition of monotone relations is monotone, so we can organise them into a category.
\begin{proposition}[see e.g.~\cite{fong2019invitation}]
\label{def:monotone-relations-category}
Posets together with monotone relations between them form a category, denoted as $\catBProf$. Moreover, $\catBProf$ equipped with the product of posets and the one-element poset $\lr{\{*\}, \{ (*, *) \}}$ forms a symmetric monoidal category 
\end{proposition}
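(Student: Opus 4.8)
The plan is to verify the category and symmetric monoidal axioms directly, with one conceptual observation carrying most of the weight. The crucial point for the category structure is to identify the correct identity morphism: for ordinary relations it would be the diagonal, but here monotonicity forces the identity on a poset $X$ to be the order relation itself, $\id_X = \{(x,x') \mid x \leq_X x'\}$. First I would check that $\id_X$ is a monotone relation, which is immediate from transitivity of $\leq_X$. Associativity of $\relThen$ is then inherited verbatim from ordinary relational composition, since both $(R \relThen S) \relThen T$ and $R \relThen (S \relThen T)$ unfold to the existence of a suitable chain of intermediate elements. The one place where the monotonicity condition of Definition~\ref{def:monotone-relation} genuinely enters is the unit law: to show $\id_X \relThen R = R$, the inclusion $\supseteq$ is witnessed by reflexivity of $\leq_X$ (take the intermediate element equal to $x$), while $\subseteq$ uses exactly the defining closure property of $R$ to absorb the order step $x \leq x'$ on the left; the symmetric argument, absorbing an order step on the right, gives $R \relThen \id_Y = R$. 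Combined with the fact, already noted in the text, that the composite of two monotone relations is monotone, this establishes that $\catMonRel$ is a category.

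For the monoidal structure I would set the tensor on objects to be the poset product $X \times Y$ with the componentwise order introduced earlier, and on morphisms $R : X \to X'$, $S : Y \to Y'$ to be $R \times S := \{((x,y),(x',y')) \mid (x,x') \in R \text{ and } (y,y') \in S\}$; checking that this is monotone and that $\times$ preserves identities and composition (i.e.\ is a bifunctor) is routine from the definitions. The unit is the one-element poset. Rather than produce the structural isomorphisms by hand, I would exploit the functor from the category of posets and monotone maps into $\catMonRel$ sending a monotone map $f$ to the monotone relation $\{(x,y) \mid f(x) \leq y\}$. This sends the identity map to $\leq_X$, matching $\id_X$ above, and one checks directly that it preserves composition (using monotonicity of the maps in one direction and reflexivity in the other), so it is indeed a functor; moreover it sends a product of maps $f \times g$ to $\hat{f} \times \hat{g}$, so it is compatible with the tensor just defined and carries poset isomorphisms to isomorphisms in $\catMonRel$. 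The associator, unitors, and symmetry of $\catMonRel$ are then obtained by applying this functor to the canonical poset isomorphisms underlying the cartesian product structure.

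The coherence conditions (pentagon, triangle, and the hexagons) then come for free: every structural morphism of $\catMonRel$ is the image under this structure-preserving functor of the corresponding isomorphism in the category of posets, where the coherence diagrams already commute, and a functor sends commuting diagrams to commuting diagrams. The main obstacle, such as it is, is not any single hard computation but the temptation to mishandle the identity: everything hinges on taking $\id_X$ to be the order relation rather than the diagonal, and on recognising that the unit law is the unique point at which the monotonicity clause is actually used. Once that is in place the remaining verifications are mechanical, and the profunctorial viewpoint — monotone relations are precisely $\Bool$-enriched profunctors between posets, for which this is a standard instance of the bicategory of enriched profunctors collapsing to a category because $\Bool$ is thin — confirms that no coherence subtleties can arise.
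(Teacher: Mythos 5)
Your proposal is correct in substance, and it is worth noting at the outset that the paper itself gives no proof of this proposition: it is stated as a standard result with a citation to \cite{fong2019invitation}, the only groundwork in the surrounding text being the definition of $R \relThen S$ and the remark that composites of monotone relations are monotone. So your direct verification genuinely supplies what the paper delegates to the literature. You also isolate the one conceptually important point correctly: the identity on $\lr{X, \leq_X}$ must be the order relation $\{(x,x') \mid x \leq_X x'\}$ rather than the diagonal (the diagonal is not even a monotone relation), and the unit laws are precisely where the closure property of Definition~\ref{def:monotone-relation} is used --- absorbing an order step on either side of $R$. This is consistent with the paper's later stipulation, in Definition~\ref{def:undir-sem-int}, that the identity wire is interpreted as $\{(x,y) \mid x \leq y\}$.

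There is one small gap in the monoidal part. Transporting the associator, unitors and symmetry along the embedding of monotone \emph{maps} into $\catMonRel$ does, as you say, yield the coherence equations for free: pentagon, triangle and hexagon are equations between composites of images of poset isomorphisms, and functors preserve commuting diagrams. But it does not yield \emph{naturality} of these transformations with respect to arbitrary morphisms of $\catMonRel$, since a general monotone relation is not in the image of that embedding, so no naturality square for it is transported. This check does not fail --- for instance, for the symmetry and monotone relations $R \colon X \to X'$, $S \colon Y \to Y'$, both sides of the naturality square reduce, using upward closure in the second coordinate on one side and downward closure in the first coordinate on the other, to $\{((x,y),(y'',x'')) \mid (x,x'') \in R \text{ and } (y,y'') \in S\}$ --- but it is an additional verification of exactly the same flavour as your unit-law argument, not a consequence of functoriality. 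Your closing remark that monotone relations are $\Bool$-enriched profunctors does cover this point, since the symmetric monoidal structure on enriched profunctors is standard; but if you invoke that, most of the elementary verification becomes redundant, so you should decide which of the two arguments is carrying the proof.
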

Note that we can turn a monotone function $f$ into the monotone relation $\{(x,y)\, |\, f(x)\leq y\}\subseteq X\times Y$, so that $\catBProf$ contains the category of monotone functions as a subcategory (\Cref{sec:mono-func-mono-rel}). 

Having established their connection to satisfiability, we focus on monotone relations between tuples of Booleans, \emph{i.e.} between products of the two-element Boolean algebra $\Bool$ (and its dual $\Bool^{op}$), as our primary semantic objects of interest. In the next two sections, we will develop a diagrammatic calculus axiomatising the full symmetric monoidal subcategory of $\catBProf$ spanned by finite Boolean algebras. Moreover, the calculus will shed further light on the relationship between monotonicity and satisfiability. At first glance, from the point of view of satisfiability, it looks like something is lost by restricting our semantics to monotone relations. However, the results of Sections~\ref{sec:sat} and~\ref{sec:LP} demonstrate that, perhaps surprisingly, this is not the case. We will come back to this specific point in Remark~\ref{rmk:monotone-sat}.

\section{The Diagrammatic Calculus}
\label{sec:cat-sem}

In this section we introduce $\catSata$ (\textsc{SAT}isfiability \textsc{A}lgebra), a diagrammatic calculus for monotone relations. It will consist of a syntax of string diagrams (Definition~\ref{def:undir-syn-cat}) and an inequational theory (Figure~\ref{fig:ineq-axiom}). The semantic interpretation will be given by a symmetric monoidal functor from the syntax category  to the category of monotone relations (Definition~\ref{def:undir-sem-int}). We will later prove that the inequational theory is \emph{sound and complete} for this interpretation. 
\begin{definition}
\label{def:undir-syn-cat}
We call $\catSyn$ the strict symmetric monoidal category freely obtained from a single generating object $1$ and the following generating morphisms, depicted as string diagrams \cite{selinger2010survey}:
\begin{equation}\label{eq:catsyndef-generators}
    \tikzfig{generate-ar}
\end{equation}
We write $\catSata$ for the category $\catSyn$ quotiented by
\footnote{Strictly speaking, since $\axBARel$ is an \emph{in}equational theory, the result is a 2-category where morphisms are posets (thus poset-enriched categories). For simplicity we choose to present it as a category, and reason about the ordering on morphisms externally. The same strategy is applied to the poset-enriched category $\catBProf$ with ordering $\subseteq$. }
the axioms of Figure~\ref{fig:ineq-axiom}. For a review of how to form this quotient and, more broadly, of the general methodology of presenting symmetric monoidal categories via (in)equational theories, we refer the reader to~\cite{bonchi2017functorial}, in particular Section 3 of that work.
\end{definition}
We shall identify the monoidal product $1^{\tensor n}$ of $n$ copies of $1$ with the natural number $n$, and write $\tikzfig{id-morphism}$ for the identity morphism on $1$ and $\swap$ for the symmetry morphism on $1$. We depict the monoidal product $c\tensor d$ of two diagrams as their vertical juxtaposition, and the composition $c\then d$ as the horizontal concatenation of $c$ and $d$. We now describe the interpretation of string diagrams as monotone relations.  
\begin{definition}
\label{def:undir-sem-int}
Let $\intBProf{\cdot} \colon \catSyn \to \catBProf$ to be the (lax) symmetric monoidal functor freely obtained by mapping the generating object $1$ to $\Bool$ and the generating morphisms of $\catSyn$ as follows:
\begin{align*}
    \intBProf{\tikzfig{copy}} & = \{ (x, (y_1, y_2)) \mid x \leq y_i, ~\text{and}~ x, y_1, y_2 \in \Bool \} 
    & \quad & 
    \intBProf{\tikzfig{discard}}  \ = \ \{ (x, \bullet) \mid x \in \Bool \}
    \\
    \intBProf{\tikzfig{cocopy}} & = \{ ((x_1, x_2), y) \mid x_i \leq y, ~\text{and}~ x_1, x_2, y \in \Bool \}      & \qquad &
    \intBProf{\tikzfig{codiscard}}  \ = \ \{ (\bullet, x) \mid x \in \Bool \} 
    \\
    \intBProf{\tikzfig{conj}} & = \{ ((x_1, x_2), y) \mid x_1 \land x_2 \leq y, ~\text{and}~ x_1, x_2, y \in \Bool \} 
    & \qquad &
    \intBProf{\tikzfig{unit}} \ = \ \{ (\bullet, 1) \} 
    \\
    \intBProf{\tikzfig{coconj}} & = \{ (x, (y_1, y_2)) \mid x \leq y_1 \lor y_2, ~\text{and}~ x, y_1, y_2 \in \Bool \} 
    & \qquad &
\intBProf{\tikzfig{counit}}  \ = \ \{ (0, \bullet) \} 
\end{align*}
\noindent Note that, as $\intBProf{\cdot}$ is freely generated, $\intBProf{\tikzfig{id-morphism}} \coloneqq \{ (x, y) \mid x \leq y ~\text{and}~ x, y \in \Bool \}$, $\intUndirProf{c \then d} \coloneqq \intUndirProf{c} \relThen \intUndirProf{d}$, and $\intUndirProf{c \tensor d} \coloneqq \intUndirProf{c} \times \intUndirProf{d}$. 
\end{definition}

In general, the interpretation provided by $\intBProf{\cdot}$ is not faithful: distinct morphism of $\catSyn$ may be mapped to the same monotone relation. However, when we quotient the syntax by the inequational theory $\axBARel$ (Figure~\ref{fig:ineq-axiom}) the mapping $\catSata \to \catBProf$ we obtain from the quotiented syntax to the semantics is still a symmetric monoidal functor and is now faithful -- this is what we mean by soundness (functoriality) and completeness (faithfulness) of $\axBARel$.
\begin{figure}[htb]
    \centering
    \begin{subfigure}{0.7\textwidth}
    \centering
       \tikzfig{axiom-black}
       \caption{Degenerate bimonoid}
       \label{fig:undir-black-ax}
    \end{subfigure}
    \hfill
    \begin{subfigure}{0.7\textwidth}
       \tikzfig{axiom-white}
       \caption{Frobenius algebra}
       \label{fig:undir-white-ax}
    \end{subfigure}
    \hfill
    \begin{subfigure}{1\textwidth}
    \centering
       \tikzfig{axiom-BW}
       \caption{White-black and black-white bimonoids}
       \label{fig:undir-BW-ax}
    \end{subfigure}
    \hfill
    \begin{subfigure}{1\textwidth}
    \centering
       \tikzfig{axiom-BW-adj}
       \caption{Adjunctions}
       \label{fig:undir-BW-adj-ax}
    \end{subfigure}
    \hfill
    %
    %
    \caption{The equational theory $\axBARel$}
    \label{fig:ineq-axiom}
\end{figure}
\begin{theorem}[Soundness and Completeness]\label{thm:sound-complete-monotone-rel-FBA}
For any $c, d \in \catSynDef$, $c \leq_{\catSata} d$ if and only if $\intBProf{c} \subseteq \intBProf{d}$. 
\end{theorem}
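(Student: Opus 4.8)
The statement packages two implications of very different character, and I would treat them separately. The left-to-right direction is \emph{soundness}: since $\intBProf{\cdot}$ is defined as a (lax) symmetric monoidal functor, it sends composites to composites and monoidal products to products, so every derivable inequality is mapped to an inclusion provided each \emph{generating} axiom of $\axBARel$ is. Hence soundness reduces to a finite, mechanical check: for each (in)equation in Figure~\ref{fig:ineq-axiom} --- the degenerate bimonoid laws, the Frobenius laws, the two bimonoid interactions, and the adjunction (in)equalities --- I would compute both sides as monotone relations over $\Bool$ and verify the required inclusion. This check is organised by noting that each generator denotes an upward-closed set cut out by a clause, so the verifications are small comparisons of up-sets.

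The right-to-left direction, \emph{completeness}, is the real work, and I would attack it through a normal form. The first move is to exploit the adjunction fragment (Figure~\ref{fig:undir-BW-adj-ax}): these (in)equalities should furnish enough compact-closed-like structure (cups and caps) to \emph{bend wires}, turning any morphism $c\colon m \to n$ into a state $0 \to m \tensor n$ without changing its provable-equivalence class, and dually on the semantic side. This reduces the general inequality $c \leq_{\catSata} d$ to an inequality between states, i.e.\ between diagrams $0 \to k$ whose interpretation is simply an upward-closed subset of $\Bool^k$ --- exactly the set of satisfying assignments of a CNF formula in $k$ variables, as anticipated in Section~\ref{sec:semantics}.

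It then suffices to prove completeness for states. The plan is to show, using the black (co)monoid and white Frobenius structure together with the bimonoid laws, that every state-diagram rewrites to a \emph{canonical conjunction-of-clauses form}: a bundle of $k$ wires, fed by a layer of (co)copying that shares variables, followed by one clause generator per clause. Because a monotone relation on $\Bool^k$ is completely determined by its minimal elements, and these correspond precisely to the \emph{prime implicates} of the associated formula, I would take as normal form the diagram built from exactly the prime implicate clauses. Two states with the same semantics then have the same prime implicates, hence literally the same normal form, giving the equational case; for the genuine inequality $\intBProf{c}\subseteq\intBProf{d}$, I would derive the required \emph{subsumption} steps --- that the diagram for a clause $C$ lies below the diagram for any clause it entails, and that adding a clause only shrinks the relation --- directly from the degenerate-bimonoid and adjunction axioms, so that the normal form of $c$ provably sits below that of $d$.

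The main obstacle, as in all such completeness arguments, is establishing the normal form: first that the axioms are strong enough to \emph{reach} the conjunction-of-clauses form from an arbitrary diagram (here the Frobenius spider laws and the bimonoid distributivity laws do the heavy lifting, collapsing tangled connectivity into a clean copy-then-clause layering), and second that they are strong enough to \emph{identify} any two normal forms denoting the same up-set, which amounts to checking that clause subsumption and the idempotence and absorption of logically redundant clauses are all derivable. Pinning down a normal form fine enough to be canonical yet coarse enough to be reachable --- and confirming that no semantic equivalence between CNFs is left underivable --- is where the argument will stand or fall.
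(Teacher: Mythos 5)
Your proposal is correct, and its core is the paper's own strategy: soundness by checking each axiom under the freely generated monoidal functor, and completeness by bending wires to one-sided diagrams (Lemma~\ref{lem:bend-input}), rewriting to a conjunction-of-clauses form (the paper's pre-normal form, Lemma~\ref{lem:pre-normal-form}), and then to a canonical form built from the unique minimal set of clauses (Definition~\ref{def:normal-form}, Lemmas~\ref{lem:min-system-eq} and~\ref{lem:normal-form}, Proposition~\ref{prop:one-one-clause-NF-dcset}); your ``prime implicate'' normal form is exactly this minimal clause set. The genuine divergence is how the inequational statement is discharged. The paper reduces $\leq$ to $=$ \emph{before} any normal-form work: Lemma~\ref{lem:syn-plus-leq} shows (i) that if the diagrammatic meet of $c$ and $d$ (formed with $\copyy$ and $\cocopy$) equals $c$ then $c\leq_{\catSata} d$, and (ii) that $\intBProf{c}\subseteq\intBProf{d}$ forces the corresponding semantic equality, so completeness for equalities (Lemma~\ref{lem:complete-eq}) closes the argument in three lines. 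You instead keep the inequality and derive it between the two normal forms via clause subsumption: each clause of $d$'s normal form is subsumed by some clause of $c$'s (the paper proves precisely this semantic fact as Lemma~\ref{lem:ineq-imply}, inside the proof of uniqueness of minimal clause sets), and you then need ``weakening a clause'', ``dropping a clause'' and ``duplicating a clause'' as derivable (in)equalities. This route does work --- those steps are derivable from essentially the same axioms the paper uses to prove Lemma~\ref{lem:syn-plus-leq}(i), such as the fact that every diagram post-composed with discards is below the all-discard diagram --- but it costs you several extra derived-inequality lemmas and some multiset bookkeeping to align the two clause sets, where the meet trick makes the inequational case follow formally from the equational one; in exchange, your version exhibits the inequality as an explicit subsumption derivation, which fits nicely with the resolution reading of Section~\ref{sec:sat-solving}. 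One slip worth fixing: for an up-set in $\Bool^k$, the minimal elements correspond to prime \emph{implicants} (DNF terms), not prime implicates; the clauses of the normal form correspond rather to the maximal points of the complement. Since the irredundant prime CNF of a monotone function is unique either way, this imprecision does not affect the argument.
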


The proof will be given in Section~\ref{sec:sound-complete}. We now highlight some features of $\axBARel$. They will come handy when using it to study Boolean satisfiability and logic programming, in the next two sections.
\begin{itemize}
    \item (A1)-(A4) state that $\copyy$ and $\discard$ form a commutative comonoid, while (A5)-(A8) states that $\cocopy$ and $\codiscard$ form a commutative monoid. In standard algebraic syntax, we can implicitly associate multiple applications of a monoid operation say, to the left, in order to avoid a flurry of parentheses, \emph{e.g.} $a \lor b \lor c = (a \lor b) \lor c$. Thanks to laws (A1)-(A4), the same principle applies to an associative $2\to 1$ diagram of $\axBARel$. This justifies introducing generalised $n$-ary operations $\scalebox{0.8}{\tikzfig{n-ary-monoid}}$ as syntactic sugar. The same works for comonoids, by simply flipping the definitions as $\scalebox{0.8}{\tikzfig{n-ary-comonoid}}$.
    \item (B1)-(B10) state that $\conjj, \unit, \coconj, \counit$ forms a Frobenius algebra~\cite{carboni1987cartesian}.
    A first important consequence is that this structure makes 
$\catSata$ a \emph{compact closed category}~\cite{kelly1980coherence}, a category in which one can interpret fixed-points or iteration~\cite{hasegawa2012models}. 
%
Specifically, we have a cup $\tikzfig{cup}$ and cap $\tikzfig{cap}$ defined as $\tikzfig{cup-white-1}$ and $\tikzfig{cap-white-1}$ respectively, satisfying the `snake' equation $\scalebox{0.8}{\tikzfig{yanking-eq}}$. Note that axiom (A14) is defined with these as syntactic sugar.

Moreover, Frobenius algebras greatly reduce the mental load in keeping track of different ways of composing $\conjj, \unit, \coconj, \counit$, as they turn out to be equal. This simplification is more formally stated as the \emph{spider theorem} --- see e.g.~{\cite[Theorem 5.21]{heunen2019categories}}.
\begin{proposition}[Spider theorem]\label{thm:spider}
If $\conjj, \unit, \coconj, \counit$ form a commutative Frobenius structure, any \emph{connected} morphism $m \to n$ made out of $\conjj, \unit, \coconj, \counit$, identities, symmetries, using composition and the monoidal product, is equal to the normal form \tikzfig{spider-normal-form}. 
\end{proposition}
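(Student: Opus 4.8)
The plan is to prove the statement by \emph{structural induction on the number of generators} in the diagram, reducing everything to a single technical device: a \emph{fusion lemma} asserting that two spiders glued along one or more wires combine into a single spider. First I would fix notation for the target normal form, writing $s_{m,n}$ for the spider \tikzfig{spider-normal-form}, and observe that it can be defined concretely as a left-nested tree of $m-1$ multiplications $\conjj$ (collapsing the $m$ inputs onto a single wire) followed by a right-nested tree of $n-1$ comultiplications $\coconj$, with the boundary cases $m=0$ and $n=0$ supplied by $\unit$ and $\counit$. With this definition the four generators and the identity are themselves spiders: $\conjj = s_{2,1}$, $\unit = s_{0,1}$, $\coconj = s_{1,2}$, $\counit = s_{1,0}$, and $\id = s_{1,1}$.

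The workhorse is the fusion lemma. The core case is gluing a single output leg of $s_{m,n}$ to a single input leg of $s_{m',n'}$: using (co)associativity and (co)commutativity to reorganise the two trees, and the Frobenius law to slide the multiplication tree of the second spider past the comultiplication tree of the first, one shows the composite equals $s_{m+m'-1,\,n+n'-1}$. This case needs only the (co)monoid and Frobenius axioms. When the two spiders are glued along $k \ge 2$ wires, each connection beyond the first joins an output and an input of the \emph{same} (already fused) spider, producing a loop; collapsing such a loop is exactly the \emph{special} law $\coconj \then \conjj = \id$, and $k-1$ applications reduce the $k$-wire gluing to the single-wire case, giving $s_{m+m'-k,\,n+n'-k}$. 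Two further easy facts complete the toolkit: spiders are invariant under permutations of their legs (immediate from (co)commutativity), so symmetries can be absorbed, and a bare wire or a crossing is just a tensor of copies of $s_{1,1}$.

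With the fusion lemma in hand I would run the induction. Viewing a diagram as a graph whose nodes are the generator occurrences and whose edges are the connecting wires, remove one node $g$; the remaining diagram $D'$ has one fewer generator, so by the induction hypothesis each of its connected components is a spider. Since $g$ is itself a small spider and the whole diagram is \emph{connected}, reattaching $g$ glues it to these component spiders along its legs; iterating the fusion lemma — using leg-permutation invariance to line up wires and the special law to absorb any loop created when $g$ meets the same component twice — merges everything into a single spider whose input and output counts are precisely those of the original diagram. The base case, a lone generator or a bare wire/crossing, is handled above.

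The main obstacle, and the only place where anything beyond routine diagram manipulation is required, is the loop-collapse step of the fusion lemma: one must argue carefully that every cycle in the underlying graph — created by multiple or self-connections — can be removed without changing the morphism, so that the normal form depends only on the numbers $m$ and $n$ of dangling wires. A cleaner but less elementary alternative would be to invoke the universal property of the free commutative Frobenius structure, whose connected morphisms are classified by one-dimensional cobordisms (equivalently, corelations) from $m$ points to $n$ points; under that identification a connected diagram is literally a single connected cobordism, unique up to the Frobenius laws and hence equal to $s_{m,n}$. I would keep the inductive argument as the primary proof and mention the cobordism picture only as corroborating intuition.
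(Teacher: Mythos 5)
There is a genuine gap, and it sits exactly at the point this proposition is designed to be careful about. The statement assumes only a \emph{commutative Frobenius structure}, and accordingly the normal form in the proposition is \emph{not} the loop-free spider $s_{m,n}$ you construct: it is a spider that may carry loops. You can see this from the paper itself, which immediately after the proposition says that axiom (A14) --- an axiom of $\axBARel$ that is \emph{not} a Frobenius axiom --- is needed to ``simplify the normal form of Proposition~\ref{thm:spider} even further to get rid of loops'', and which stresses that in this theory the loop $\coconj \then \conjj$ is unusual precisely because it is \emph{not} the identity (nor $\counit\,\unit$). Your fusion lemma disposes of multi-wire gluings by invoking ``the special law $\coconj \then \conjj = \id$'', but specialness is not among the hypotheses, is not derivable from them, and fails in the very model this paper axiomatises. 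Worse, the statement your induction would establish (every connected diagram equals a loop-free spider determined by $m$ and $n$ alone) is false at this level of generality: the free symmetric monoidal category on a commutative Frobenius structure is that of two-dimensional cobordisms, where connected morphisms $m \to n$ are classified by $(m, n, \text{genus})$; a torus with one input and one output boundary circle is connected but not equal to the cylinder $s_{1,1}$. For the same reason your proposed ``cleaner alternative'' also misfires: one-dimensional cobordisms/corelations classify \emph{special} (indeed extraspecial) commutative Frobenius monoids, not general commutative Frobenius structures.

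What survives of your argument is the standard skeleton: the single-wire fusion step, invariance of spiders under leg permutations, and the induction on the number of generators all go through using only the (co)monoid and Frobenius laws. The repair is to carry the number of loops as part of the normal form: fusing two spiders along $k$ wires yields a spider with $k-1$ additional loops, and loops can be slid and collected using Frobenius and (co)commutativity but never erased. Note also that the paper does not prove this proposition at all --- it cites it as known (Heunen--Vicary, Theorem 5.21) --- and then performs loop removal as a \emph{separate} derivation in its own, stronger theory, namely equation (O) obtained via (A14) (Corollary~\ref{cor:loop-removal}); keeping that step outside the spider theorem is not a stylistic choice but a logical necessity, and it is exactly the step your proof smuggles into hypotheses where it does not hold.
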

In the statement, the word `connected' refers to the diagram as a undirected graph: we assume there is at least one path made up of wires, between any two white nodes in the diagram.

In our axiomatic theory, we can simplify the normal form of Theorem~\ref{thm:spider} even further to get rid of loops. Making crucial use of axiom (A14), we can derive \inlineequation[eq:not-special]{\tikzfig{not-special}}. 
Note it is unusual for a Frobenius algebra to satisfy \eqref{eq:not-special} as those that have appeared in the literature satisfy a version of \eqref{eq:not-special} with the rhs set to the identity (\emph{e.g.}~\cite{zanasi2015interacting}) or to $\counit\;\unit$ (\emph{e.g.}~\cite{coecke2010compositional}). 
Thanks to the spider theorem and \eqref{eq:not-special}, we can see any \emph{simply connected} composition of white monoids and comonoids as a single node, whose only relevant structure is its number of dangling wires on the left and on the right. The normal form guarantees that we can unambiguously depict any such morphism $m\to n$ as a \emph{spider} with $m$ wires on the left and $n$ on the right: $\tikzfig{spider}$.
With this syntactic sugar, the axioms of the Frobenius structure $\conjj, \unit, \coconj, \counit$ can be concisely expressed via the following \emph{spider fusion} scheme:
\[\tikzfig{spider-fusion}\]

    \item The C block of axioms state that $(\copyy, \discard, \unit, \conjj)$ and $(\coconj, \counit, \codiscard, \cocopy)$ form two bimonoids. Semantically, these hold when the underlying poset is a lattice (has binary meets and joins). They also imply that $\catSata$ contains as a monoidal subcategory, the category of Boolean matrices with the direct sum as monoidal product (\emph{cf.}~Appendix~\ref{sec:matrix-diagrams}).

    \item The D axioms present a number of adjunctions in the 2-categorical sense (or rather, Galois connections, because the 2-morphisms are simply inclusions). Two morphisms $f\colon X \to Y$ and $g\colon Y\to X$ are adjoint if $id_X \leq f\then g$ and $g\then f\leq id_Y$. We write this as $f\dashv g$ ($f$ left adjoint to $g$). The situation can be summarised by the following six adjunctions:
    \begin{align*}
        \conjj \dashv \copyy  \dashv \cocopy \dashv \coconj \qquad \qquad \qquad 
        \unit \dashv \discard \dashv \codiscard \dashv \counit
    \end{align*}
    The adjunctions $\copyy \dashv \cocopy$ and $\discard \dashv \codiscard$ give the defining inequations of cartesian bicategories~\cite{carboni1987cartesian}.
    
    To understand where the other adjunctions come from, it is helpful to adopt a semantic perspective. For example, recall that  $\intBProf{\conjj} = \{((x_1,x_2),y) \mid x_1\land x_2 \leq y\}$ and $\intBProf{\copyy} = \{(x,(y_1,y_2)) \mid x\leq y_1, x\leq y_2\} = \{(x,(y_1,y_2)) \mid x\leq y_1 \land y_2\}$. Thus, one can see the adjunction $\conjj \dashv \copyy$ as arising from the duality between the two different ways of turning the monotone function $\land : \Bool^2\to \Bool$ into a monotone relation (by placing it on either side of the $\leq$ symbol in the corresponding set comprehension).
    
    Semantically, they guarantee that the underlying poset $\Bool$ is a \emph{lattice}, \emph{i.e.}, has binary meets and joins. Together with the Frobenius axioms (B9)-(B10), they imply that $\Bool$ is a complemented distributive lattice, in other words, a Boolean algebra.
\end{itemize}
\begin{remark}
Note that we did not aim for a minimal presentation of the theory. There are obvious redundancies: for example, axiom (C4) immediately implies (D4). Similarly, many equations of the B block can be derived from those of the D block and the corresponding properties of the black generators in the A block. We have included these redundancies in order to avoid burdening the reader with too many lemmas deriving equations that we will need often, and to highlight key algebraic structures that occur in related theories (\emph{e.g.} bimonoids). 
\end{remark}




\section{$\catSata$ as the Algebra of Satisfiability}\label{sec:sat}

\subsection{From diagrams to CNF formulas}

The language we have introduced in the previous section can be seen as a diagrammatic notation to reason about Boolean formulas expressed in conjunctive normal form. As hinted at in the introduction, we can associate to every diagram a CNF formula, with the relational semantics of the diagram given by its set of satisfying assignments. The generators can be interpreted as follows:
\begin{align*}
\copyy &\mapsto (\lnot x \lor y_1) \land (\lnot x \lor y_2)  \qquad &\discard\mapsto 1 \qquad\cocopy &\mapsto (\lnot x_1 \lor y) \land (\lnot x_2 \lor y)\qquad &\codiscard \mapsto 1 \\
 \coconj &\mapsto \lnot x \lor y_1 \lor y_2\qquad &\unit\mapsto y\qquad  \conjj &\mapsto \lnot x_1\lor \lnot x_2\lor y\qquad &\counit\mapsto \lnot x
\end{align*}
Note that an identity wire does not identify two variables $x=y$; rather, it represents the clause $\lnot x\lor y$. 
The monoidal product is conjunction: if diagrams $d_1$ and $d_2$ correspond to some formulas $f_1$ and $f_2$ respectively, then their monoidal product corresponds to $f_1\land f_2$. 
Composition is more subtle. As a plain identity wire represents the clause $\lnot x\lor y$, attaching two wires does not identity two variables, but merely adds this clause and \emph{existentially quantifies} over $x$ and $y$. More precisely, given a formula $f$, let $\exists x f := f[x=1]\lor f[x=0]$, where $f[x=b]$ denotes $f$ in which we have replaced all occurrences of $x$ by the Boolean value $b$. Further assume, for simplicity, that we compose $\Diagbox{c}{m}{} \mapsto e$ and $ \Diagbox{d}{}{n}\mapsto f$ along a single wire: let $y$ be the variable that denotes the single wire in the codomain of $c$ and $x$ the single wire in the domain of $d$; then   
$\tikzfig{cnf-composition}$.

One of the key observations is that the generalised white nodes introduced above, correspond to an arbitrary clause with $m$ negative literals and $n$ positive literals: $\tikzfig{spider} \quad \mapsto\quad  \lnot x_1\lor \cdots \lor \lnot x_m\lor y_1 \lor \cdots \lor y_n$.
Note that if we disallow the use of $\coconj$, we can only represent clauses with at most one positive literal, better known as \emph{Horn clauses}. We will make extensive use of Horn clauses in the next section, to give a diagrammatic account of propositional logic programs. 
\begin{remark}
We could have given an infinitary presentation of the same symmetric monoidal category with one generator for each clause containing $m$ negative literals and $n$ positive ones (keeping the same black generators).
We were able to give a presentation with a finite number of generators (and inequations) - with at most three wires - for the same reason that SAT can be reduced to $3$-SAT. 
\end{remark}

\subsection{Picturing SAT}\label{sec:sat-encoding}
\begin{wrapfigure}{r}{0.3\textwidth}
   \scalebox{0.8}{\tikzfig{sat-encoding-intuition}}
\end{wrapfigure}
The payoff is that we can use the diagrammatic calculus to encode the satisfiability of arbitrary  Boolean formulas in a very natural way. Given a Boolean formula $f$ in CNF,  SAT($f$) can be depicted as a closed diagram (one of type $0\to 0$) of the form on the right, where the white nodes in the middle encode clauses and the black nodes take care of the variable management (copying and deleting) needed to connect variables to the clauses in which they appear, \emph{i.e.} to the corresponding white nodes. The boxes are just permutations of the wires.

There are - up to equality - only two diagrams of type $0\to 0$, namely $\tikzfig{id-0}$ (the empty diagram) and $\tikzfig{wunit-wcounit}$, interpreted as the relations $\{\bullet\}$ and $\empset$, respectively.
Following the completeness of our equational theory, the soundness of our encoding of SAT implies that a formula $f$ is satisfiable iff the associated diagram is equal to $\tikzfig{id-0}$, as stated in Theorem~\ref{thm:diagrammatic-sat} below.
Furthermore, using our equational theory, we can derive the (un)satisfiability of a SAT instance purely algebraically, justifying the claim that our diagrammatic calculus captures the algebra of satisfiability. 

Let us describe the encoding of SAT($f$) in more detail. Another way to understand the diagram above is as two relations that represent the negative and positive occurrences of variables - given by the black nodes on the left and on the right respectively - in each clauses - given by the white nodes at the center. We can encode these two incidence relations as Boolean matrices: given a CNF formula $f$ with $n$ free variables and $k$ clauses (both arbitrarily ordered), let $N(f)$ be the $k\times n$ Boolean matrix with $N(f)_{i,j} = 1$ iff $\lnot x_i$ appears in the $j$-th clauses of $f$, and $P(f)$ be the $n \times k$ Boolean matrix where $P(f)_{j,i} = 1$ iff $x_i$ appears in the $j$-th clauses of $f$. 
In other words, $N(f)$ encodes negative occurrences of variables, and $P(f)$ their positive occurrences in each clause. 

We use the standard diagrammatic representation of matrices using a commutative bimonoid to define diagrams for $N(f)$ and $P(f)$ ---a translation which is recalled in Appendix~\ref{sec:matrix-diagrams}. 
We use below the same names for the associated diagrams: $N(f)$ denotes the corresponding matrix encoded using $\copyy, \discard, \unit, \conjj$, and $P(f)^T$ the transpose of $P(f)$ encoded using $\coconj, \counit, \codiscard, \cocopy$. 

\begin{theorem}\label{thm:diagrammatic-sat}
A CNF formula $f$ is satisfiable if and only if 
   $\ \scalebox{0.7}{\tikzfig{sat-encoding}} \; = \; \tikzfig{id-0}$. 
\end{theorem}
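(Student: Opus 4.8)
The plan is to reduce the statement to a purely semantic computation and then discharge it using the completeness half of Theorem~\ref{thm:sound-complete-monotone-rel-FBA}, together with the fact (noted just before the statement) that there are only two diagrams of type $0\to 0$ up to provable equality. Write $D$ for the closed diagram $\scalebox{0.7}{\tikzfig{sat-encoding}}$ built from $f$. Since $D$ has type $0\to 0$, it is provably equal either to $\tikzfig{id-0}$ or to $\tikzfig{wunit-wcounit}$, whose interpretations are $\{\bullet\}$ and $\empset$ respectively. Because $c = d$ in $\catSata$ amounts to $c \leq_{\catSata} d$ and $d \leq_{\catSata} c$, applying Theorem~\ref{thm:sound-complete-monotone-rel-FBA} in both directions gives $D = \tikzfig{id-0}$ iff $\intBProf{D} = \{\bullet\}$. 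Hence it suffices to prove the semantic equivalence: $\intBProf{D} = \{\bullet\}$ if and only if $f$ is satisfiable (and $\intBProf{D} = \empset$ otherwise).

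For this semantic step I would unfold $\intBProf{\cdot}$ on $D$. As $\intBProf{\cdot}$ sends composition to relational composition (with its built-in existential quantifier) and the monoidal product to cartesian product, the relation $\intBProf{D}\subseteq\{\bullet\}\times\{\bullet\}$ contains $(\bullet,\bullet)$ precisely when there is an assignment of Boolean values to all internal wires of $D$ that is consistent with every local constraint of the generators. I would organise this global assignment around the $n$ variable wires: variable $x_i$ carries a value $v_i\in\Bool$, which the black bimonoid $N(f)$ distributes to the negative-literal ports of the clause-spiders and which $P(f)^T$ gathers from their positive-literal ports. By the spider normal form and the generators' semantics, each white clause-spider contributes exactly the constraint $\bigwedge(\text{negative inputs})\leq\bigvee(\text{positive outputs})$, i.e. the condition that the corresponding clause be true. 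The goal is then to show that such a consistent assignment of the internal wires exists iff some $\vec v\in\Bool^n$ satisfies every clause, i.e. $f(\vec v)=1$.

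The main obstacle, and the crux of the argument, is that the copy and merge generators are interpreted \emph{laxly}: $\intBProf{\copyy}=\{(x,(y_1,y_2))\mid x\leq y_1,\ x\leq y_2\}$ rather than as equality, so the values reaching the occurrences of a variable need not coincide with $v_i$. I would handle this with a monotonicity argument, exactly the phenomenon flagged in Remark~\ref{rmk:monotone-sat}. The forward direction is then immediate: given a satisfying $\vec v$, feeding each port its tight value $v_i$ witnesses $(\bullet,\bullet)\in\intBProf{D}$. For the converse one must rule out spurious lax witnesses. Here the clause constraint $\bigwedge(\text{inputs})\leq\bigvee(\text{outputs})$ is antitone in its negative inputs and monotone in its positive outputs, while the lax generators only ever relate a port value to $v_i$ in the single order-direction the constraint tolerates; consequently the extremal (tight) choice in which every port attached to $x_i$ carries exactly $v_i$ is always at least as easy to satisfy. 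Thus any consistent lax assignment can be rounded to the genuine Boolean assignment $\vec v$, which still satisfies every clause. This yields the semantic equivalence and hence the theorem.
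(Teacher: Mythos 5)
Your proposal is correct and takes essentially the same route as the paper's own proof: both reduce the theorem to the semantic claim that $\intBProf{D} = \{\bullet\}$ iff $f$ is satisfiable (invoking soundness and completeness, Theorem~\ref{thm:sound-complete-monotone-rel-FBA}), and both then dispose of the laxness of the monotone semantics by a monotonicity argument. The paper performs your port-level ``rounding'' in compact matrix form --- from $\mathbf{y}\leq\mathbf{x}$ and $N(f)\mathbf{x}\leq P(f)^T\mathbf{y}$ it deduces $N(f)\mathbf{x}\leq P(f)^T\mathbf{x}$, so the lax witness collapses onto the genuine assignment $\mathbf{x}$, whose clause-wise reading is exactly satisfaction of $f$.
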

\begin{example}\label{ex:sat-2}
Consider $f = (\lnot x\lor y)\land (x\lor y)\land (x\lor \lnot y)\land (\lnot x\lor \lnot y)$. 
Following the notation in Theorem~\ref{thm:diagrammatic-sat}, let $N(f) = \tikzfig{cnf-eg-neg}$ where the left wires represent the \emph{negative} occurrences of $x$ and $y$ respectively, from top to bottom, and the right wires represent the four clauses $(\lnot x\lor y), (x\lor y), (x\lor \lnot y)$, and  $(\lnot x\lor \lnot y)$ also from top to bottom. Similarly, let $P(f)^T = \tikzfig{cnf-eg-pos}$ encode all positive occurrences of $x$ and $y$ (right wires) into each of the clauses (left wires). Therefore SAT($f$) can be encoded diagrammatically as
\[\tikzfig{sat-instance-ex-2}\]
As we will explain below, in Subsection~\ref{sec:sat-solving}, the central inequality can be seen as an instance of \emph{resolution}.
\end{example}
\begin{example}\label{ex:sat-1}
$(\lnot x\lor y)\land (x\lor \lnot y)$. Here, we have two variables that appear precisely once as negative and positive literals in two clauses:
$P(f)$ and $N(f)$ are both equal to a single identity wire.
This gives the following, which we can prove is satisfiable equationally:
\[\tikzfig{sat-instance-ex-1}\]
\end{example}
\begin{remark}
Inequalities can be used to prove that a formula is (un)satisfiable more efficiently. As we have explained above, there are - up to equality - only two diagrams of type $0 \to 0$, namely  $\tikzfig{wunit-wcounit} \leq \tikzfig{id-0}$. Thus, if $d(f) \leq \tikzfig{wunit-wcounit}$ for some formula $f$, we can strengthen this inequality to an equality and conclude that $f$ is unsatisfiable. Similarly, if we can show that $\tikzfig{id-0} \leq d(f)$ then $d(f)=\tikzfig{id-0}$ so $f$ is satisfiable.
\end{remark}
\begin{remark}
\label{rmk:monotone-sat}
It is helpful at this point to pause the development in order to examine the relationship between monotonicity and satisfiability. As we have already pointed out, at first glance, it appears that restricting ourselves to monotone relations is insufficient to study satisfiability. Indeed, monotone formulas (\emph{i.e.} those which do not use negation) are always satisfiable. Correspondingly, at the level of the diagrammatic syntax, we do not have access to a primitive negation operation, which non-monotonic. As a result, we can never explicitly enforce that a variable is the negation of another. So how can we encode SAT without negation? The key idea is that we do have a restricted form of negation, which corresponds to changing the direction of wires, using the cups and caps.

Even though the diagrammatic translation is straightforward, there first appears to be a mismatch between a given CNF formula and the semantics of the diagram representing a SAT instance. A variable appearing as a positive and negative literal in a given formula $f$, is depicted as two wires -- one for its positive occurrences and one for its negative occurrences.
When we wish to existentially quantify over a variable of a SAT instance, we simply trace out the corresponding two wires, connecting them.
In the relational semantics, the resulting loop joining the two wires represents 
the additional clause $\lnot y\lor x$, where $y$ always appear positively in other clauses, and $x$ always appears negatively. This leaves the possibility of setting both variables to $0$, which satisfies $\lnot y\lor x$ but does not match the intuition that these two variables should behave like negated version of each other. However, the assignment $y=0, x=0$ does not affect the satisfiability of the overall diagram representing the corresponding SAT instance\footnote{In light of Theorem~\ref{thm:diagrammatic-sat}, we say that a closed diagram $d$ is satisfiable when $\intBProf{d} = \{\bullet\}$.}. Indeed, an issue can only arise if the assignment $y=0, x=0$ is required to make the diagram satisfiable. But this can never be the case because $y$ always appears positively in any clause other than $\lnot y\lor x$. This implies that the satisfiability of the clauses in which $y$ appears will only depend on the other variables and therefore leave the overall satisfiability of the diagram unaffected.
%
\end{remark}

\subsection{SAT-solving, diagrammatically?}\label{sec:sat-solving}

Now that we know how to represent SAT instances in our diagrammatic language, it is useful to give meaning to the axioms in the light of this interpretation. First, the (co)associativity, (co)unitality and (co)commutativity of the various (co)monoids are equivalent to the associativity, unitality and commutativity of $\land$ and $\lor$ in propositional logic. The loop removal axiom from Fig.~\ref{fig:undir-black-ax} admits a natural interpretation in terms of satisfiability: if $\exists x (C\lor x\lor \lnot x)$ for some clause $C$, we can always satisfy $x\lor \lnot x$ and therefore we can remove constraint imposed by the whole clause $C\lor x\lor \lnot x$, regardless of the assignment of values to the other variables appearing in $C$. This axiom allows us to remove this kind of redundant constraints. 

That the white nodes form a Frobenius algebra is more interesting: it stems from the fact that $\exists x.(C\lor x)\land (D\lor \lnot x)$ allows us to deduce the clause $C\lor D$, and is thus the diagrammatic translation of the \emph{resolution} rule~\cite{robinson1965machine} in classical propositional logic! 

Resolution has come to occupy a central role in SAT-solving as modern SAT-solvers can be thought of as resolution proof systems. It is well-known that solvers based on a variant of the Davis-Putnam-Logemann-Loveland (DPLL) algorithm~\cite{davis1960computing,davis1962machine} implement a form of \emph{tree-like} resolution, while those that are augmented with clause learning (\emph{e.g.} CDCL) correspond to more general forms of resolution proofs, that allow for the reuse of learned clauses~\cite{pipatsrisawat2011power}. 
Tree-like resolution proofs can be easily interpreted diagrammatically. Recall from our interpretation of SAT instances that clauses are simply white nodes whose left (\emph{resp.} right) wires encode negative (\emph{resp.} positive) literals. Thus, if we have two clauses $C\lor x$ and $\lnot x \lor D$, the corresponding diagram will contain a subdiagram equal to the leftmost one in~\eqref{eq:satexample} below. The black nodes in the subdiagram represent the fact that $x$ might also appear (negatively or positively) in other clauses. 
From this, resolution allows us to deduce $C\lor D$; tree-like resolution proofs further impose that, if we apply a deduction step, we do not use the original clauses $C\lor x$ and $\lnot x \lor D$ again. This has a simple diagrammatic counterpart---we can apply axiom (D6) to connect the two white nodes corresponding to the two clauses in premise of a resolution step directly, effectively replacing them with a single node that represents the learned clause:
\begin{equation}\label{eq:satexample}
\scalebox{0.8}{\tikzfig{tree-resolution}}
\end{equation}
The remaining wire for other uses of $x$ is unaffected. We have already used the same idea in Example~\ref{ex:sat-2} above, in order to resolve the subdiagrams corresponding to the clauses $\lnot x\lor y$ and $x\lor y$.

General resolution -- where we are allowed to keep the two clauses in the premise of the rule -- also has a diagrammatic counterpart. In fact, it is implemented as a sequence of equalities (instead of the inequality we have used above to interpret tree-like resolution), reflecting the fact that the process does not forget any information. The idea is to first copy the two clauses $C\lor x$ and $\lnot x \lor D$ to leave them available for any further resolution with other clauses that use the variable $x$. We illustrate this process with two clauses (containing only three literals for simplicity this time) below:
\[\scalebox{0.8}{\tikzfig{general-resolution}}\]
In the resulting diagram, the quaternary white node represents the clause $C\lor D$, while the other two represent the original clauses $C$ and $D$. 


\section{$\catSata$ as a proof system for logic programming}
\label{sec:LP}
In this section we turn to logic programming. 
$\catSata$ allows us to compute the usual semantics of propositional definite logic programs via diagrammatic equational reasoning (Theorem~\ref{thm:LP-calculate}). 
More importantly, we obtain a decidable, purely equational procedure to decide the equivalence of program, by proving the equivalence of their representing diagrams in $\catSata$ (Theorem~\ref{thm:LP-eq}). 

First we briefly recall the basics of definite logic programming, and refer the readers to \cite{lloyd2012foundations} for details. 
Logic programming is a programming paradigm which reduces computation to proof search in formal logic. It is widely used in knowledge representation \cite{baral1994logic}, database theory \cite{ceri1989you}, constraint solving \cite{jaffar1987constraint}. We focus on the simplest yet prototype class of logic programs, called \emph{propositional definite logic programs}, formally defined as follows. We fix some finite set of atoms $\At = \{ a_1, \dots, a_k \}$. A \emph{Horn clause} $\clauseD$ is a formula of the form $b_1, \dots, b_k \to a$, where $a, b_1, \dots, b_k \in \At$ and $b_1 \dots, b_k$ are distinct. Here the atom $a$ and the set $\{ b_1, \dots, b_k \}$ are called the \emph{head} ($\head(\clauseD)$) and the \emph{body}  ($\body(\clauseD)$) of the clause, respectively. 
If a clause has an empty body, then it is also called a \emph{fact}. 
A \emph{propositional definite logic program} (or simply \emph{logic program}) is a finite set of Horn clauses based on $\At$. 
Intuitively, a clause $b_1, \dots, b_k \to a$ reads `if $b_1, \dots, b_k$ hold, then $a$ also holds', which gives the intuitive meaning of a program $\progL$ to be the set of all atoms that are deducible from $\progL$ in finitely many steps. 
This is formally defined as the \emph{least Herbrand
model semantics}. 
An \emph{interpretation} $\intI$ over $\At$ is an assignment $\intI \colon \At \to \Bool$. 
For simplicity, we will also write an interpretation $\intI$ over $\At$ as a subset $\{ a \in \At \mid \intI(a) = 1 \}$.

A logic program $\progL$ defines an \emph{immediate consequence operator} $\imCnsqOp{\progL} \colon \Bool^{\At} \to \Bool^{\At}$ such that, given an interpretation $\intI$,
\[
\imCnsqOp{\progL}(\intI)(a) =
\begin{cases}
1 & \text{there exists $\progL$-clause}~ b_1, \dots, b_k \to a ~\text{such that}~ \intI(b_1) = \cdots = \intI(b_k) = 1 \\
0 & \text{otherwise}
\end{cases}
\]
The \emph{Herbrand model} (denoted as $\leastHMod{\progL}$) is then the least fixed point of $\imCnsqOp{\progL}$. Such least fixed point always exists because $\Bool^\At$ carries a complete lattice structure, and $\imCnsqOp{\progL}$ is monotone on it. 
%
We use the following notation: $\land$ and $\lor$ are defined pointwise on $\At$, $\funcOne$ and $\funcZero$ are the constant functions mapping arbitrary $a \in \At$ to $1$ and $0$, respectively.

From a deductive perspective, a logic program $\progL$ can be viewed as a black box: it takes a set of atoms $A$ as input, and outputs the set of atoms that are deducible from $A$ using $\progL$. This is formalised as the \emph{consequence operator} $\cnsqOp{\progL} \colon \Bool^\At \to \Bool^\At$ of program $\progL$, such that
$\cnsqOp{\progL}(A)$ is the least Herbrand model of the program $\progL \cup \{ \to a \mid a \in A \}$. In particular, $\cnsqOp{\progL}(\empset)$ is exactly $\leastHMod{\progL}$. 
\begin{example}
\label{eg:LP-cnsq-op}
Consider the program $\progP = \{ \to a; b \to d; c \to d; c,d \to b \}$ over $\At = \{ a, b, c, d \}$. 
Its least Herbrand model $\leastHMod{\progP}$ is $\{ a \}$. %
The immediate consequence operator $\imCnsqOp{\progP}$ maps, for instance, $\{ c \}$ to $\{ a, d \}$. 
The consequence operator $\cnsqOp{\progP}$ maps, for instance, $\{ c \}$ to $\{ a, b, c, d \}$. 
\end{example}
%

We will represent (immediate) consequence operators of logic programs via our diagrammatic language.
Immediate consequence operators have been studied from a string diagram perspective in \cite{gu2021functorial}. Compared with \eqref{eq:im-conseq-diag} below, there each clause is represented by a box, while our diagrammatic language enables us to `open the boxes' by representing a definite clause as $\tikzfig{conj-k}$.  
The key observation is the intuitive reading of clause $b_1, \dots, b_k \to a$ as $b_1\land \dots \land b_k \leq a$, thus representable as the $\catSyn$-morphism $\tikzfig{k-conj}$.
\begin{definition}
\label{def:imcnsq-diagram}
The \emph{immediate consequence diagram} $\imCnsqDiag{\progL}$ for program $\progL$ is an $n \to n$ morphism defined as follows, where the $i$-th input and output wires stands for atom $a_i$, for $i = 1,\dots, n$.
\begin{equation}\label{eq:im-conseq-diag}
    \tikzfig{im-conseq}
\end{equation}
The two blue blocks $\tau_1$ and $\tau_3$ consist respectively of copiers $\tikzfig{copy}$ and cocopiers $\tikzfig{cocopy}$: $k_i$ and $\ell_i$ are the numbers of appearance of atom $a_i$ in all the \emph{bodies} and heads of $\progL$-clauses, respectively. 
The yellow block $\tau_2$ is the parallel composition of $\conjj$, where $q$ is the number of clauses in $\progL$, and for each clause $\clauseD_j$, $m_j$ is the size of $\body(\clauseD_j)$. 
The grey blocks $\sigma_1$ and $\sigma_2$ consist of appropriately many $\swap$ that change the wire orders to match the domains and codomains of the blue and yellow blocks. In particular, the $i$-th input (\textit{resp.} output) wire is connected to the $j$-th $\conjj$ if $a_i \in \body(\clauseD_j)$ (\textit{resp.} $a_i = \head(\clauseD)$).
\end{definition}

\begin{wrapfigure}{r}{0pt}
    \tikzfig{conseq-diag}
\end{wrapfigure}
Based on this, we can compositionally represent the consequence operator $\imCnsqOp{\progL}$.
\begin{definition}
\label{def:cnsq-diagram}
The \emph{consequence diagram} for program $\progL$ is a $\catSyn$-morphism $\cnsqDiag{\progL} \colon n \to n$ defined on the right, where $\imCnsqDiag{\progL}$ is the immediate consequence diagram from Definition~\ref{def:imcnsq-diagram}.
\end{definition}
Note that the interpretation $\intBProf{\imCnsqDiag{\progL}}$ is not exactly the operator $\imCnsqOp{\progL}$: the former is a monotone relation while the latter is a monotone function. Nevertheless we say a monotone relation $R \subseteq X \times Y$ \emph{represents} a monotone function $f \colon X \to Y$ if for arbitrary $x \in X$ and $y \in Y$, $(x, y) \in R$ if and only if $f(x) \leq y$. 
Crucially, if monotone relations $R$ and $S$ represent monotone functions $f$ and $g$ respectively, then $f = g$ if and only if $R = S$. 
%
\begin{proposition}\label{prop:cnsq-represent}
The monotone relation $\intBProf{\imCnsqDiag{\progL}}$ (resp. $\intBProf{\cnsqDiag{\progL}}$) represents the operator $\imCnsqOp{\progL}$ (resp. $\cnsqOp{\progL}$).
\end{proposition}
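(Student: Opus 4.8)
The plan is to prove both statements by a single compositional argument, reducing everything to the behaviour of the generators plus one fixpoint computation. The backbone is a \emph{representation lemma} that I would state and prove first: representability is compatible with the symmetric monoidal structure. Concretely, if a monotone relation $R$ represents a monotone function $f$ and $S$ represents $g$, then (i) $R \relThen S$ represents $g \circ f$ whenever the types match, (ii) $R \times S$ represents $f \times g$, and (iii) identities and symmetries represent identities and swaps. The proofs are short: for (i), $(x,z) \in R \relThen S$ means $\exists y.\, f(x) \leq y \wedge g(y) \leq z$; the backward direction takes $y = f(x)$, and the forward direction uses monotonicity of $g$ to get $g(f(x)) \leq g(y) \leq z$. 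Part (ii) is immediate from the product order, and (iii) is direct. Since $\intBProf{\cdot}$ is a symmetric monoidal functor, this lemma lets me read off the represented function of any diagram built from generators by composing the represented functions of the pieces.

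Next I would verify that each generator occurring in $\imCnsqDiag{\progL}$ represents the expected elementary monotone function: the copier represents duplication $x \mapsto (x,x)$, the node $\conjj$ represents meet $\wedge$, the cocopier represents join $\vee$, and the wire-permutations represent the corresponding permutations. It is important to treat the nullary/sugar cases carefully, since the generalised $n$-ary nodes of Definition~\ref{def:imcnsq-diagram} are syntactic sugar: an empty body uses the unit, which represents the constant $1$ (matching the fact that facts are always derived); an atom that is never a head uses $\codiscard$, which represents the constant $0$; and an atom never occurring in a body uses the discard. Composing these through the blocks $\tau_1, \sigma_1, \tau_2, \sigma_2, \tau_3$, the represented function sends an interpretation $\intI$ to the interpretation whose value at $a_i$ is $\bigvee_{\head(\clauseD_j) = a_i} \bigwedge_{b \in \body(\clauseD_j)} \intI(b)$. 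This is $1$ exactly when some $\progL$-clause with head $a_i$ has all body atoms true under $\intI$, i.e.\ it coincides with $\imCnsqOp{\progL}(\intI)$; hence $\intBProf{\imCnsqDiag{\progL}}$ represents $\imCnsqOp{\progL}$.

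For the consequence diagram I would first record the semantic identity $\cnsqOp{\progL}(A) = \lfp(\phi_A)$, where $\phi_A(X) = A \vee \imCnsqOp{\progL}(X)$: the immediate consequence operator of $\progL \cup \{\to a \mid a \in A\}$ is precisely $X \mapsto A \vee \imCnsqOp{\progL}(X)$, and its least fixpoint is the least Herbrand model $\leastHMod{\progL \cup \{\to a \mid a \in A\}} = \cnsqOp{\progL}(A)$. Since $\cnsqDiag{\progL}$ is obtained from $\imCnsqDiag{\progL}$ by a feedback loop using the compact-closed (Frobenius) structure, the remaining task is to show that this trace computes the least fixpoint at the level of relations. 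I would compute the traced relation directly from the white cup $\{(\bullet,(a,b)) \mid a \vee b = 1\}$ and cap $\{((c,d),\bullet) \mid c \wedge d = 0\}$: eliminating the cup/cap witnesses reduces the condition $(A,Y) \in \intBProf{\cnsqDiag{\progL}}$ to the existence of some $b$ with $\phi_A(b) \leq b$ and $\phi_A(b) \leq Y$ (the details of witness elimination depend mildly on the exact wiring but robustly yield this prefixpoint condition). A short Knaster--Tarski argument then shows this is equivalent to $\lfp(\phi_A) \leq Y$: if such a prefixpoint $b$ exists then $\lfp(\phi_A) \leq b$ and monotonicity gives $\lfp(\phi_A) = \phi_A(\lfp(\phi_A)) \leq \phi_A(b) \leq Y$; conversely $b = \lfp(\phi_A)$ witnesses the condition. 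Thus $\intBProf{\cnsqDiag{\progL}}$ represents $A \mapsto \lfp(\phi_A) = \cnsqOp{\progL}$.

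The routine parts are the generator checks and the product/identity cases of the representation lemma. The main obstacle I anticipate is the consequence-diagram step: getting the trace computation right for the specific white Frobenius cup and cap, and confirming that eliminating the feedback witnesses yields exactly the least prefixpoint condition rather than merely \emph{some} fixpoint. The prefixpoint characterisation of the least fixpoint, together with the monotonicity of $\phi_A$, is what pins this down, and it is precisely here that the order-theoretic content of the compact-closed trace --- interpreting iteration via least fixpoints --- is genuinely used.
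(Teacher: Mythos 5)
Your proposal is correct and follows essentially the same route as the paper: the compositionality lemma you prove first is precisely the paper's Proposition~\ref{prop:rep-comp}, your unfolding of the white cup $\{(\bullet,(a,b)) \mid a \lor b = 1\}$ and cap $\{((c,d),\bullet) \mid c \land d = 0\}$ yields exactly the paper's reduced condition (there exists $\intK$ with $\intI \lor \imCnsqOp{\progL}(\intK) \leq \intK$ and $\intI \lor \imCnsqOp{\progL}(\intK) \leq \intJ$), and your Knaster--Tarski step together with the identity $\cnsqOp{\progL}(\intI) = \lfp\, U.\, \intI \lor \imCnsqOp{\progL}(U)$ (the paper's Lemma~\ref{lem:cnsqOp-fixed-point}) completes the argument exactly as in the paper. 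The only organizational difference is in the immediate-consequence half, where the paper unfolds $\intBProf{\imCnsqDiag{\progL}}$ directly into per-clause conditions $\intI(i_1) \land \cdots \land \intI(i_r) \leq \intJ(i_s)$, while you obtain the same conclusion more modularly by checking generator-level representations and invoking compositionality.
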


\begin{example}
Recall $\progP$ from Example~\ref{eg:LP-cnsq-op}. 
$\imCnsqDiag{\progP}$ is given below left, coloured as in Definition~\ref{def:imcnsq-diagram}.
\end{example}

\begin{wrapfigure}{l}{0pt}
    \scalebox{0.9}{\tikzfig{eg-imcnsq-diag-1}}
\end{wrapfigure}
Now we are ready to present the two main results for logic programs. 
First, $\axBARel$ provides a diagrammatic calculus to compute $\cnsqOp{\progL}$, in particular to calculate the least Herbrand model as $\cnsqOp{\empset}$. One can represent an interpretation $\intI \in \setTwo^{\At}$ as a $\catSyn$-morphism $d^{\intI} \coloneqq d^{\intI}_1 \tensor \cdots \tensor d^{\intI}_n$ of type $0 \to n$, where each $d^\intI_i = \codiscard$ if $\intI(i) = 0$, and $d^\intI_i = \unit$ if $\intI(i) = 1$.

\begin{theorem}
\label{thm:LP-calculate}
$\cnsqOp{\progL}(\intI)= \intJ$ if and only if $d
^\intI \then \cnsqDiag{\progL} = d^\intJ$.
\end{theorem}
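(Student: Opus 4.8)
The plan is to pass through the completeness theorem and then reason entirely at the level of monotone relations. By Theorem~\ref{thm:sound-complete-monotone-rel-FBA}, two $\catSata$-morphisms are equal precisely when their interpretations coincide (apply the inequational statement in both directions). Hence $d^\intI \then \cnsqDiag{\progL} = d^\intJ$ holds iff $\intBProf{d^\intI \then \cnsqDiag{\progL}} = \intBProf{d^\intJ}$ as monotone relations of type $\{\bullet\} \to \Bool^{\At}$, so it suffices to compute and compare these two relations.

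First I would identify the interpretation of $d^\intK$ for an arbitrary interpretation $\intK$. Since $d^\intK$ is a monoidal product of units and codiscards and $\intBProf{\cdot}$ sends $\tensor$ to the product of relations, unfolding the generator interpretations of Definition~\ref{def:undir-sem-int} yields $\intBProf{d^\intK} = \{ (\bullet, y) \mid \intK \leq y \}$, the principal up-set generated by $\intK$ in $\Bool^{\At}$. In particular $\intBProf{d^\intJ} = \{(\bullet, y) \mid \intJ \leq y\}$.

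Next I would compute the left-hand side. Functoriality of $\intBProf{\cdot}$ gives $\intBProf{d^\intI \then \cnsqDiag{\progL}} = \intBProf{d^\intI} \relThen \intBProf{\cnsqDiag{\progL}}$, and by Proposition~\ref{prop:cnsq-represent} the relation $\intBProf{\cnsqDiag{\progL}}$ represents $\cnsqOp{\progL}$, so $(x,y)$ belongs to it iff $\cnsqOp{\progL}(x) \leq y$. Unwinding the composite, $(\bullet, y)$ lies in it iff there exists $x \geq \intI$ with $\cnsqOp{\progL}(x) \leq y$. The decisive step is that this existential collapses onto $x = \intI$: since $\cnsqOp{\progL}$ is monotone, $x \geq \intI$ forces $\cnsqOp{\progL}(\intI) \leq \cnsqOp{\progL}(x) \leq y$, while conversely $x = \intI$ always witnesses the quantifier. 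Hence $\intBProf{d^\intI \then \cnsqDiag{\progL}} = \{(\bullet, y) \mid \cnsqOp{\progL}(\intI) \leq y\}$, the principal up-set of $\cnsqOp{\progL}(\intI)$.

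Finally, two principal up-sets of a poset are equal iff their generators are, so the two relations coincide iff $\cnsqOp{\progL}(\intI) = \intJ$, completing the argument. The only genuinely load-bearing step is the collapse of the existential in the previous paragraph, which is exactly where monotonicity of the consequence operator enters; the remainder is bookkeeping with Definition~\ref{def:undir-sem-int} together with appeals to completeness (Theorem~\ref{thm:sound-complete-monotone-rel-FBA}) and the representation result (Proposition~\ref{prop:cnsq-represent}).
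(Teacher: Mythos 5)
Your proposal is correct and follows essentially the same route as the paper: reduce the diagrammatic equality to a semantic one via Theorem~\ref{thm:sound-complete-monotone-rel-FBA}, then exploit that $\intBProf{\cnsqDiag{\progL}}$ represents $\cnsqOp{\progL}$ (Proposition~\ref{prop:cnsq-represent}). The only difference is presentational: where the paper cites the compositionality of representations (Proposition~\ref{prop:rep-comp}) and the faithfulness of representation (Proposition~\ref{prop:rel-rep-faithful}), you inline exactly those arguments as explicit computations with principal up-sets --- your ``collapse of the existential'' via monotonicity is precisely the proof of the paper's compositionality lemma specialised to this instance.
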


Second, we can turn the problem of the equivalence of two logic programs into proving whether their consequence operator diagrams are equivalent. By Proposition~\ref{prop:cnsq-represent} and Theorem~\ref{thm:sound-complete-monotone-rel-FBA}, we have:
\begin{theorem}
\label{thm:LP-eq}
Given two logic programs $\progL_1$ and $\progL_2$ on $\At$, $\cnsqOp{\progL_1} = \cnsqOp{\progL_2}$ if and only if $\cnsqDiag{\progL_1} = \cnsqDiag{\progL_2}$.
\end{theorem}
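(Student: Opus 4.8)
The plan is to prove the biconditional by factoring through the semantic interpretation $\intBProf{\cdot}$, chaining Proposition~\ref{prop:cnsq-represent} (which governs the passage between consequence operators and the monotone relations denoted by their diagrams) with the soundness and completeness theorem (Theorem~\ref{thm:sound-complete-monotone-rel-FBA}, which governs the passage between monotone relations and diagrams up to the equational theory $\axBARel$). Concretely, I would establish the two equivalences
\[
\cnsqOp{\progL_1} = \cnsqOp{\progL_2}
\iff
\intBProf{\cnsqDiag{\progL_1}} = \intBProf{\cnsqDiag{\progL_2}}
\iff
\cnsqDiag{\progL_1} = \cnsqDiag{\progL_2},
\]
where the middle equality is of monotone relations in $\catBProf$ while the rightmost is equality in the quotient category $\catSata$, i.e.\ derivability in $\axBARel$.

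For the first equivalence I would invoke Proposition~\ref{prop:cnsq-represent}: the relations $\intBProf{\cnsqDiag{\progL_i}}$ represent the operators $\cnsqOp{\progL_i}$ for $i=1,2$. I then apply the observation recorded just before that proposition, that a monotone function is uniquely recoverable from any relation representing it. Since $\Bool^{\At}$ is a complete lattice, one has $f(x) = \bigwedge\{y \mid (x,y)\in R\}$ whenever $R$ represents $f$, so $R = \{(x,y)\mid f(x)\leq y\}$ is completely determined by $f$ and conversely. Hence two representing relations coincide exactly when the functions they represent do, yielding the first equivalence in both directions.

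For the second equivalence I would use Theorem~\ref{thm:sound-complete-monotone-rel-FBA} in both directions. Soundness gives that $\cnsqDiag{\progL_1} = \cnsqDiag{\progL_2}$ in $\catSata$ implies $\intBProf{\cnsqDiag{\progL_1}} = \intBProf{\cnsqDiag{\progL_2}}$. For the converse, I note that set equality of the two interpretations is two-sided inclusion, $\intBProf{\cnsqDiag{\progL_1}} \subseteq \intBProf{\cnsqDiag{\progL_2}}$ and its reverse; completeness upgrades each inclusion to an inequality $\leq_{\catSata}$, and antisymmetry of the ordering on $\catSata$-morphisms forces $\cnsqDiag{\progL_1} = \cnsqDiag{\progL_2}$. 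Composing the two equivalences closes the argument.

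Since both Proposition~\ref{prop:cnsq-represent} and Theorem~\ref{thm:sound-complete-monotone-rel-FBA} are available, the theorem is essentially a corollary and I do not anticipate a genuine obstacle in the deduction itself; the real content has already been discharged in showing that the consequence diagram faithfully represents the operator and in establishing completeness. The one point demanding care is the bookkeeping of which category each equality inhabits: the equalities of $\cnsqDiag{\progL_i}$ must be read in the quotient $\catSata$ for completeness to apply, and it is worth recording that this is precisely what renders the resulting equivalence check decidable, as it reduces to derivability in the finitely presented theory $\axBARel$.
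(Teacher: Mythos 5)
Your proposal is correct and follows essentially the same route as the paper's proof: both factor the equivalence through $\intBProf{\cnsqDiag{\progL_i}}$, using Proposition~\ref{prop:cnsq-represent} together with the faithfulness of representation (the paper cites Proposition~\ref{prop:rel-rep-faithful}, which you re-derive via the meet formula $f(x)=\bigwedge\{y \mid (x,y)\in R\}$) for the first equivalence, and Theorem~\ref{thm:sound-complete-monotone-rel-FBA} for the second. Your explicit handling of the inequality-to-equality step (two inclusions, completeness, antisymmetry in $\catSata$) is a detail the paper leaves implicit, but it is the same argument.
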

It is worth observing that, as we will see in Section~\ref{sec:sound-complete}, the completeness proof is essentially an algorithm that rewrites an arbitrary $\catSyn$-morphism into a normal form diagram (Definition~\ref{def:normal-form}). This means that the procedure described in Theorem~\ref{thm:LP-eq} to establish equivalence of logic programs in terms of consequence operators is in fact decidable.

\begin{example}\label{ex:lpequiv}
Recall $\progP$ from Example~\ref{eg:LP-cnsq-op}, and let $\progQ = \{ \to a; a, b \to d; c \to b. \}$. We claim that $\progP$ and $\progQ$ are equivalent with respect to consequence operator. 
One can show this by proving the equivalence of their consequence diagrams, namely $\cnsqDiag{\progP} = \cnsqDiag{\progQ}$:
\[
\scalebox{0.9}{\tikzfig{eg-cnsq-eq-1}}
\]
\[
\scalebox{0.9}{\tikzfig{eg-cnsq-eq-2}}
\]
\end{example}


\begin{remark}
It is helpful to point out that if we interpret $\catSyn$ in the category $\catRel$ of relations (via a functor $[-] \colon \catSyn \to \catRel$), then $\cnsqDiag{\progL}$ --- an intuitive candidate to represent consequence operators --- is not interpreted as $\{(\intI, \intJ) \mid \cnsqOp{\progL}(\intI) = \intJ \}$. In short, for the semantics to be correct, one has to define $[\copyy] = \{ (x, (x, x)) \mid x \in \Bool \}$ and $[\cocopy] = \{ ((x_1, x_2), y) \mid x_1, x_2, y \in \Bool, x_1 \lor x_2 = y \}$. But then the program $\progL = \{ a \to a \}$ has $\cnsqDiag{\progL} = \scalebox{0.7}{\tikzfig{vacuous-feedback}}$, and $[\cnsqDiag{\progL}] = \{ (x, y) \mid x, y \in \Bool, x \leq y \} \neq \{ (x, x) \mid x \in \Bool \} =  \cnsqOp{\progL}$.
\end{remark}

\section{Soundness and Completeness}
\label{sec:sound-complete}

This section is devoted to the proof of Theorem~\ref{thm:sound-complete-monotone-rel-FBA}, i.e. that the equational theory $\axBARel$ is a complete axiomatisation of monotone relations over (finite) Boolean algebras. 

The `only if' direction (soundness) of Theorem~\ref{thm:sound-complete-monotone-rel-FBA} is straightforward: we just need to show that each axiom is a valid semantic (in)equality. We take (C2) as an example, but will omit the verification of the other axioms:
\begin{align*}
    \intBProf{\scalebox{0.8}{\tikzfig{eg-soundness-2}}} & = \{ ((x_1, x_2), (y_1, y_2)) \in \Bool^{4} \mid \exists z_1, z_2, z_3, z_4 \in \Bool: x_1 \leq z_1 \land z_2, x_2 \leq z_3 \land z_4, z_1 \land z_3 \leq y_1, z_2 \land z_4 \leq y_2 \} \\
    & = \{ ((x_1, x_2), (y_1, y_2)) \in \Bool^{4} \mid \exists z_3, z_4 \in \Bool: x_2 \leq z_3 \land z_4, x_1 \land z_3 \leq y_1, x_1 \land z_4 \leq y_2 \} \\
    & = \{ ((x_1, x_2), (y_1, y_2)) \in \Bool^{4} \mid \exists z_3, z_4 \in \Bool: x_1 \land x_2 \leq y_1, x_1 \land x_2 \leq y_2 \} \\
    & = \{ ((x_1, x_2), (y_1, y_2)) \in \Bool^{4} \mid x_1 \land x_2 \leq y_1 \land y_2 \} = \intBProf{\scalebox{0.8}{\tikzfig{eg-soundness-1}}}
\end{align*}

We now focus on the `if' direction (completeness). Before the technical developments, we provide a roadmap of the proof. We first show that we can make two simplifying assumptions: (1) the completeness statement about inequalities/inclusions can be reduced to one that involves only equalities (Lemma~\ref{lem:syn-plus-leq}); (2) We simplify the problem further by showing that we can restrict the proof of completeness to diagrams $m\to 0$ without loss of generality (Lemma~\ref{lem:bend-input}). 

To prove completeness for $m\to 0$ diagrams, we adopt a normal form argument that is common for the completeness proof of many diagrammatic calculi~\cite{piedeleu2018picturing,piedeleu2020automata}. 
In a nutshell, a normal form defines a certain syntactic representative for each semantic object in the image of the interpretation, and shows that if two diagrams have the same semantics, then they are both equal in $\axBARel$ to this syntactic representative. 
The structure of the normal form argument is as follows. 
\begin{itemize}
    %
    %
    %
    \item We give a procedure to rewrite any diagram into one in \emph{pre-normal form}, using equations of $\axBARel$ (Lemma~\ref{lem:pre-normal-form}). Intuitively, pre-normal form diagrams represent sets of clauses of the form $\lnot x_1\lor \cdots \lor \lnot x_n$.
    \item Each downward closed subset is the set of satisfying assignment for a minimal set of such clauses (Lemma~\ref{lem:min-system-eq}). The diagrammatic counterpart of this minimal set of clauses is our chosen normal form for each equivalence class of diagrams (Definition~\ref{def:normal-form}). We show that every diagram in pre-normal form is equal to one in normal form (Lemma~\ref{lem:normal-form}).
    \item Finally, since every semantic object has a canonical diagrammatic representative -- a normal form diagram -- completeness follows from the fact that two diagrams having the same semantics have the same normal form (Proposition~\ref{prop:one-one-clause-NF-dcset}).
    \end{itemize}

\noindent As explained above, completeness for equalities is enough to derive that for inequalities.  
This follows from the next lemma, which essentially says that in lattices, inequalities can be defined using equalities and meets. 
\begin{lemma}
\label{lem:syn-plus-leq}
For diagrams $c, d$, 
(i) if $\scalebox{0.8}{\tikzfig{plus-c-d}} = c$, then $c \leq d$ and (ii) if $\intBProf{c}\subseteq \intBProf{d}$ then  $\intBProf{\scalebox{0.8}{\tikzfig{plus-c-d}}}=\intBProf{c}$.
\end{lemma}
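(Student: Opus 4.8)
The plan is to recognise the diagram $\tikzfig{plus-c-d}$ as the binary \emph{meet} $c\wedge d$ of $c$ and $d$ in the (poset-enriched) hom-category of $\catSata$. For $c,d\colon m\to n$, this diagram copies the $m$ input wires (using the black comonoid $\copyy$) into both $c$ and $d$, and then merges the two resulting $n$-tuples of outputs (using the black monoid $\cocopy$); schematically it is $\copyy\then(c\tensor d)\then\cocopy$. Since the black generators $\copyy,\discard,\cocopy,\codiscard$ satisfy the defining inequalities of a cartesian bicategory (the adjunctions $\copyy\dashv\cocopy$ and $\discard\dashv\codiscard$ from the D block), this construction computes the meet in each hom-poset. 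I will then exploit exactly two facts about this meet: the semantic fact that meets of monotone relations are intersections, and the syntactic projection law $\tikzfig{plus-c-d}\leq_{\catSata} d$.

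For part (ii) I would argue purely semantically. Because $\intBProf{\cdot}$ sends $\copyy$ and $\cocopy$ to the relational copy and its adjoint, it preserves the cartesian-bicategory meet, so $\intBProf{\tikzfig{plus-c-d}}=\intBProf{c}\cap\intBProf{d}$. Concretely, unfolding the composite, $(x,y)\in\intBProf{\tikzfig{plus-c-d}}$ iff there exist $u,v,s,t$ with $x\leq u$, $x\leq v$, $(u,s)\in\intBProf{c}$, $(v,t)\in\intBProf{d}$, $s\leq y$ and $t\leq y$; using downward-closure of $\intBProf{c},\intBProf{d}$ in the input and upward-closure in the output, one may take $u=v=x$ and $s=t=y$, collapsing this to $(x,y)\in\intBProf{c}\cap\intBProf{d}$. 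The hypothesis $\intBProf{c}\subseteq\intBProf{d}$ then gives $\intBProf{c}\cap\intBProf{d}=\intBProf{c}$, hence $\intBProf{\tikzfig{plus-c-d}}=\intBProf{c}$.

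For part (i) the only substantial ingredient is the derivable projection inequality $\tikzfig{plus-c-d}\leq_{\catSata} d$; granting it, the hypothesis $\tikzfig{plus-c-d}=c$ immediately yields $c=\tikzfig{plus-c-d}\leq_{\catSata} d$. To derive the projection I would bound the $c$-branch from above by the top relation: from the adjunction inequality $\id\leq_{\catSata}\discard\then\codiscard$ we get $c\leq_{\catSata} c\then\discard\then\codiscard$, and since in a cartesian bicategory $\discard$ is the top of its hom-poset (so that $c\then\discard\leq_{\catSata}\discard$), the $c$-branch is dominated by $\discard\then\codiscard$. Replacing it by this top relation and simplifying, using the comonoid counit law $\copyy\then(\discard\tensor\id)=\id$ together with the monoid unit law $(\codiscard\tensor\id)\then\cocopy=\id$ and bifunctoriality, rewrites $\copyy\then((\discard\then\codiscard)\tensor d)\then\cocopy$ to $d$, which gives $\tikzfig{plus-c-d}\leq_{\catSata} d$.

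The main obstacle is this last step: establishing, purely equationally, that $\discard$ is the top element of its hom-poset (equivalently $c\then\discard\leq_{\catSata}\discard$). This is precisely where one must invoke the cartesian-bicategory axioms rather than the bare (co)monoid and bimonoid laws; it is a standard consequence of the adjunctions in the Carboni--Walters theory, but it is the one point where the argument is not a routine rewrite. Everything else reduces to bookkeeping with the (co)unit laws and with the monotone-closure properties of the semantics.
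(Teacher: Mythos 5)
Your part (ii) is correct and is essentially the paper's own argument: compute $\intBProf{\copyy\then(c\tensor d)\then\cocopy}$ directly, use monotonicity of $\intBProf{c}$ and $\intBProf{d}$ (downward closure in the input, upward closure in the output) to collapse the existentially quantified middle values to $u=v=x$, $s=t=y$, conclude that the composite denotes $\intBProf{c}\cap\intBProf{d}$, and apply the hypothesis. The skeleton of your part (i) also matches the paper's: everything reduces to the projection inequality $\copyy\then(c\tensor d)\then\cocopy\leq_{\catSata} d$, which in turn reduces to the single inequality $c\then\discard\leq_{\catSata}\discard$ (discarding after $c$ is below discarding outright) plus (co)unit bookkeeping.

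However, there is a genuine gap exactly at the point you flag as "the main obstacle". You justify $c\then\discard\leq_{\catSata}\discard$ by claiming it is "a standard consequence of the adjunctions in the Carboni--Walters theory". That is not accurate: in a cartesian bicategory, the lax naturality of $\discard$ (every 1-cell is a lax comonoid homomorphism, in particular $f\then\discard\leq\discard$) is part of the \emph{definition}, not a consequence of the adjunctions $\copyy\dashv\cocopy$ and $\discard\dashv\codiscard$. Moreover, $\catSata$ is not handed to you as a cartesian bicategory; it is presented by the axioms of Figure~\ref{fig:ineq-axiom}, none of which states lax naturality for an arbitrary diagram $c$. Appealing to Carboni--Walters is therefore circular: whether $\catSata$ satisfies that extra axiom is precisely what has to be established. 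The paper closes this gap by structural induction on $c$: the inductive cases (sequential and parallel composition) are immediate, and each generator is checked against concrete axioms --- (A10) for $\cocopy$, (A12) for $\codiscard$, (A2)-(A3) for $\copyy$, trivially for $\discard$, (C3) for $\conjj$, (C4) for $\unit$, (A10) followed by (D10) for $\coconj$, and (D8) followed by (C4) for $\counit$. (Your opening claim that the adjunctions alone make $\copyy\then(c\tensor d)\then\cocopy$ a meet has the same flaw, though it is not load-bearing since you treat (ii) semantically.) With this induction supplied, your argument for (i) coincides with the paper's; without it, the derivation is incomplete.
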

In addition, we can restrict the completeness proof to diagrams with codomain $0$. This is the consequence of axioms (B9)-(B10) which endow $\{ \coconj, \counit, \conjj, \unit \}$ with the structure of a Frobenius algebra and $\catSata$ with the structure of a compact closed category, allowing us to move wires from the left to the right and vice-versa.
\begin{lemma}
\label{lem:bend-input} 
For arbitrary natural numbers $m, n$, $\catSyn[m, n] \iso \catSyn[m+n, 0]$.
\end{lemma}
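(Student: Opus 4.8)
The plan is to prove the isomorphism by \emph{bending wires} one at a time, realised entirely inside the syntactic category $\catSyn$. Although $\catSyn$ is freely generated, the white generators $\conjj,\unit,\coconj,\counit$ already supply a cup $\eta\colon 0\to 2$ and a cap $\epsilon\colon 2\to 0$ (the composites $\tikzfig{cup-white-1}$ and $\tikzfig{cap-white-1}$ introduced earlier). These are honest $\catSyn$-morphisms, so pre- and post-composing with them defines genuine functions on hom-sets. It therefore suffices to establish the single-wire case $\catSyn[m,n]\iso\catSyn[m+1,n-1]$ for every $n\geq 1$, and then chain $n$ such isomorphisms to get $\catSyn[m,n]\iso\catSyn[m+1,n-1]\iso\cdots\iso\catSyn[m+n,0]$.

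For the single-wire step I would define $\Phi\colon\catSyn[m,n]\to\catSyn[m+1,n-1]$ by $\Phi(d)=(d\tensor\id_1)\then(\id_{n-1}\tensor\epsilon)$, which caps off the last output of $d$ against a freshly introduced $(m{+}1)$-st input, and the reverse map $\Psi\colon\catSyn[m+1,n-1]\to\catSyn[m,n]$ by $\Psi(e)=(\id_m\tensor\eta)\then(e\tensor\id_1)$, which cups the last input of $e$ back into an output. A quick type check gives $\Phi(d)\colon m+1\to n-1$ and $\Psi(e)\colon m\to n$ as required, and both are well defined on the hom-sets of $\catSyn$ since they use only composition, the monoidal product, and the fixed generators.

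The core of the argument is that $\Phi$ and $\Psi$ are mutually inverse. Expanding $\Psi(\Phi(d))$ and using the interchange law (functoriality of $\tensor$) to slide $d$ out of the way, one finds $\Psi(\Phi(d))=(d\tensor\eta)\then(\id_{n-1}\tensor\epsilon\tensor\id_1)$, in which the inserted cup and cap meet along the bent wire and form a single \emph{snake} on $d$'s last output; by the yanking equation $\tikzfig{yanking-eq}$ this snake collapses to the identity, leaving $d$. The symmetric computation yields $\Phi(\Psi(e))=e$. This is the one place where the equational theory is genuinely used: the yanking/snake law is not a law of the free SMC but is derived in $\axBARel$ from the Frobenius axioms (B9)--(B10) (equivalently from (A14)), which is precisely what makes $\catSata$ compact closed. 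Consistently with the paper's convention of treating equality of $\catSyn$-diagrams externally (i.e. modulo $\axBARel$), $\Phi$ and $\Psi$ are mutually inverse in this sense, so the hom-sets are in bijection.

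I expect the main obstacle to be bookkeeping rather than conceptual: carefully tracking the symmetries $\swap$ needed to bring the wire being bent into the last position (so that $\Phi$ and $\Psi$ can be stated uniformly) and checking that the interchange manipulations isolate a single snake on exactly that wire. A secondary subtlety worth flagging is that mutual-inverseness \emph{is} the snake equation: at the purely free level of $\catSyn$ the composite $\Psi\after\Phi$ differs from the identity by an equationally removable snake, and it is exactly the availability of (B9)--(B10) that upgrades the bending functions to an honest bijection of hom-sets.
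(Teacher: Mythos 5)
Your proposal is correct and follows essentially the same route as the paper: both realise the isomorphism by bending wires with the white cup and cap and verify mutual inverseness via the snake equation coming from the Frobenius axioms (B9)--(B10), with equality of diagrams understood modulo $\axBARel$. The only difference is presentational --- you bend one wire at a time and chain $n$ single-wire bijections, whereas the paper's maps $p$ and $q$ bend all $n$ wires at once.
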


\subsection{Pre-normal form}
\label{subsec:pre-normal-form}
\begin{wrapfigure}{r}{0pt}
    \tikzfig{downward-normal-form}
\end{wrapfigure}
In this subsection, we show that every diagram is equivalent to one of \emph{pre-normal form}, that is, which factorises as on the right. We then explore their close connection with certain Boolean clauses, a key tool in completeness proof. 
Let us explain how to interpret the factorisation: each grey block represents a diagram which is formed only using the specified subset of generators (using both composition and monoidal product), possibly including some permutations of the wires. 
The equational theory allows us to rewrite any diagram to pre-normal form. 
\begin{lemma}[Pre-normal form]
\label{lem:pre-normal-form}
Every diagram $m \to 0$ is equal to one of pre-normal form.
\end{lemma}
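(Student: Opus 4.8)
The plan is to exhibit pre-normal form as the output of a terminating rewriting procedure that uses only equations of $\axBARel$. Writing the target shape explicitly, I aim to rewrite an arbitrary $d\colon m \to 0$ into a composite $B \then S$ in which $B\colon m \to p$ is built solely from the black comonoid generators $\copyy, \discard$ (with identities and symmetries), and $S\colon p \to 0$ is a parallel composition of white spiders of types $k_i \to 0$, each such spider representing a purely negative clause $\lnot x_1 \lor \cdots \lor \lnot x_{k_i}$.

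First I would normalise the white fragment: by the spider theorem (Proposition~\ref{thm:spider}) together with the loop-removal equation (O), every maximal connected subdiagram made of $\conjj, \unit, \coconj, \counit$ collapses to a single white spider. After this, $d$ is an interconnection of black generators $\copyy, \discard, \cocopy, \codiscard$ and white spiders of various types $k \to l$, linked by identities and symmetries.

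Second --- the heart of the argument --- I would separate colours, driving the black comonoid generators to the left boundary and the white spiders to the right, while eliminating the black monoid generators $\cocopy, \codiscard$ entirely. The bimonoid laws of the C block are the engine here: read one way they let a $\copyy$ or $\discard$ commute past a white node; read the other way they let a $\cocopy$ or $\codiscard$ that meets a white spider be absorbed into it (merging wires, or vanishing through a unit law). Since $d$ has empty codomain, the only sinks for a wire are the white counits hidden inside the $k\to 0$ spiders, together with $\discard$; an induction on the number of black monoid occurrences --- moving each one rightward, via the C block when it meets a white node and via the A- and D-block laws when it meets another black generator, until it is absorbed --- shows that $\cocopy$ and $\codiscard$ can all be removed. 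Dually, any white spider output ($l>0$) can only be closed off by a counit, and fusing these back in by spider fusion collapses every surviving spider to type $k\to 0$.

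Finally, routing the copied wires through the appropriate symmetries exhibits the claimed factorisation $B \then S$. I expect the second phase to be the main obstacle: one must verify that the C-block bimonoid laws (supplemented by the A- and D-block laws for black--black interactions) really do suffice to push every black monoid generator to a white sink, and that the procedure terminates --- for which I would exhibit a well-founded measure, such as the total number of $\cocopy, \codiscard$ occurrences together with their distance from the output boundary, that strictly decreases under each rewrite.
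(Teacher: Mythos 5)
Your overall target (a factorisation into a black-comonoid part followed by purely negative white clauses) and your first step (spider fusion plus loop removal (O)) are sound, but the heart of your argument --- the colour-separation phase --- has a genuine gap, and it is precisely the part of the problem that the paper's proof is designed to handle. Because the white generators form a Frobenius algebra, $\catSata$ is compact closed: a diagram $m \to 0$ in general contains cups and caps, hence feedback loops, and these loops can pass through \emph{black} nodes (e.g.\ a white spider whose output is copied by $\copyy$ and fed back into its own inputs). For such diagrams there is no global left-to-right orientation, so ``driving the black comonoid generators to the left boundary'', ``moving each $\cocopy$ rightward until it is absorbed'', and your proposed measure ``distance from the output boundary'' are not even well defined; spider fusion and (O) only remove loops lying entirely in the white fragment, and nothing in your argument removes a loop that passes through black structure. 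Moreover, even in the acyclic case your induction ``on the number of black monoid occurrences'' fails as stated: the C-block bimonoid laws do not absorb a $\cocopy$ or $\codiscard$ into a white node, they commute it past while \emph{duplicating} it (one $\cocopy$ meeting one $\coconj$ becomes two of each), so the count you induct on strictly increases under the very rewrites you rely on.

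The paper's proof confronts these points head on. It first proves a traced canonical form for arbitrary diagrams by structural induction (Lemma~\ref{lem:traced-canonical-form}), then uses the compact structure to bend all wires and obtain an \emph{existential canonical form}: a pre-normal form diagram pre-composed with some number of white cups $\cupWhite$ (Proposition~\ref{prop:exist-canonical-form}). The real work is an induction on the number of cups, eliminating them one at a time via a connectivity analysis of the underlying matrix diagram: if the two sides of a cup are connected through the matrix, one gets a loop, which is deleted using Proposition~\ref{prop:loop} (this crucially needs axiom (A14) --- the ingredient your proposal never invokes); wires connected on the same side of a cup are merged by the idempotency of Proposition~\ref{prop:idempotent-equalities}; once everything is disconnected, the matrix splits by Proposition~\ref{prop:disconnect-decompose}, and the cup is finally removed with the generalised bialgebra law (Proposition~\ref{prop:gen-bialg}) together with the Frobenius equations (B9)--(B10). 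If you want to rescue a rewriting-style proof along your lines, you would need, at minimum, a substitute for this loop-elimination step and a termination measure that survives both node duplication and the absence of an acyclic orientation --- which is essentially what the paper's cup-counting induction provides.
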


\begin{wrapfigure}{r}{0pt}
    \tikzfig{matrix-diagram}
\end{wrapfigure}
The first two blocks of pre-normal forms deserve special attention. We say that $d$ is a \emph{matrix diagram} if it factorises as on the right.
\noindent In our semantics, matrix diagrams $m\to n$ denote relations that are the satisfying assignments of sets of Horn clauses, \emph{i.e.} clauses  with at most one positive (unnegated) literal. For each Horn clause, the negated literals correspond to a subset of the left wires, while the only positive literal corresponds to the right wire to which they are connected. Intuitively,  matrix diagrams can be thought of simply as matrices with Boolean coefficients: each row encodes the negative literals of a Horn clause through its coefficients. An arbitrary $n\times m$ matrix $A$ can be represented by a matrix diagram with $m$ wires on the left and $n$ wires on the right---the left ports can be interpreted as the columns and the right ports as the rows of $A$ (\emph{cf.}~\cite[\S 3.2]{zanasi2015interacting}). The $j$-th wire on the left is connected to the $i$-th wire on the right whenever $A_{ij}=1$; when $A_{ij}=0$ they remain disconnected. For example,
the matrix $A = 
\begin{psmallmatrix}
0 & 1 & 1\\
0 & 1 & 0 \end{psmallmatrix}
$ 
can be depicted as $\scalebox{0.8}{\tikzfig{matrix-ex}}$. 
Conversely, given a diagram of this form, we can recover the corresponding matrix by looking at which ports are connected to which.
The equational theory $\axBARel$ is complete for the monoidal subcategory generated by $\copyy, \discard, \unit, \conjj$.
\begin{proposition}[Matrix completeness]\label{prop:matrix-completeness}
Let $c,d \colon m\to n$ be two diagrams formed only of the generators $\copyy, \discard, \unit, \conjj$. We have $\intBProf{c} = \intBProf{d}$ iff $c=d$ in $\axBARel$.
\end{proposition}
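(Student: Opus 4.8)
The plan is to show that the four generators $\copyy,\discard,\unit,\conjj$, taken modulo the relevant fragment of $\axBARel$, present exactly the PROP of Boolean matrices, and that $\intBProf{\cdot}$ restricts to a faithful embedding of this PROP into $\catMonRel$. The right-to-left direction ($c=d$ implies $\intBProf{c}=\intBProf{d}$) is immediate from soundness of $\axBARel$, so the work is entirely in the other direction. For that I would proceed in two stages: a syntactic normalisation of every such diagram into a matrix diagram, and a semantic injectivity argument showing that the relation denoted by a matrix diagram determines its matrix uniquely.

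\emph{Step 1 (matrix normal form).} First I would prove that every diagram built from $\copyy,\discard,\unit,\conjj$ is equal in $\axBARel$ to a matrix diagram, i.e.\ one factorising as a layer of copiers/discards (the black comonoid, encoding columns), a permutation of wires, and a layer of conjunctions/units (the $(\unit,\conjj)$-monoid, encoding rows). This is the standard bialgebra normal-form argument: the comonoid laws (A1)--(A4) and the monoid laws for $(\unit,\conjj)$ let me associate and reorder within each layer, while the bimonoid (bialgebra) laws of the C block let me rewrite every ``multiply-then-copy'' pattern $\conjj\then\copyy$ into a ``copy-then-multiply'' pattern, pushing all copiers to the left and all conjunctions to the right, with the unit/counit laws absorbing stray $\unit$ and $\discard$. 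The one ingredient beyond a plain bimonoid is the idempotency law $\copyy\then\conjj=\id$ (valid in $\axBARel$), which collapses two parallel edges into one and is exactly what forces the coefficients to live in $\Bool$ rather than $\mathbb{N}$. The connectivity of the resulting matrix diagram defines an $n\times m$ Boolean matrix $A$, and by comonoid/monoid coherence any two matrix diagrams with the same connectivity are provably equal, so $A$ is a complete invariant of the normal form.

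\emph{Step 2 (semantic injectivity).} Next I would compute that the matrix diagram of $A$ denotes $R_A=\{(x,y)\mid \text{for each output } i,\ \bigwedge_{j:\,A_{ij}=1} x_j \le y_i\}$, with the empty meet read as $1$ (arising from a bare $\unit$). This relation represents, in the sense defined just before Proposition~\ref{prop:cnsq-represent}, the monotone function $f_A(x)_i=\bigwedge_{j:\,A_{ij}=1} x_j$; since a monotone relation represents at most one monotone function, it suffices to recover $A$ from $f_A$. Evaluating $f_A$ at the vector that is $0$ in coordinate $j$ and $1$ elsewhere yields $0$ in output $i$ precisely when $A_{ij}=1$, so $A$ is determined by $R_A$ and the assignment $A\mapsto R_A$ is injective.

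Combining the two stages: given $\intBProf{c}=\intBProf{d}$, Step 1 gives $c=_{\axBARel}c_A$ and $d=_{\axBARel}d_B$ for matrix diagrams of matrices $A,B$; by soundness $R_A=\intBProf{c_A}=\intBProf{c}=\intBProf{d}=\intBProf{d_B}=R_B$; Step 2 yields $A=B$; and the coherence remark gives $c_A=_{\axBARel}d_B$, whence $c=_{\axBARel}d$. I expect the main obstacle to be Step 1: making the normalisation fully rigorous --- showing the bialgebra rewrites terminate and produce a unique connectivity --- and in particular verifying that the idempotency law $\copyy\then\conjj=\id$ is available and applied enough to collapse all parallel edges, so that the syntactic normal form is genuinely in bijection with $\Bool$-matrices rather than $\mathbb{N}$-matrices.
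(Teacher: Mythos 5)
Your proposal is correct and takes essentially the same route as the paper: both arguments factor through the presentation of Boolean matrices by commutative idempotent bimonoids (your Step 1 is exactly the content of the result the paper cites from \cite[Proposition~3.9]{zanasi2015interacting}, adapted to semirings, together with strengthening (D2) to the idempotency equality), and both conclude by embedding Boolean matrices faithfully into monotone relations via the meet-defined monotone functions $f_A$ and the fact that a monotone relation represents at most one monotone function (Proposition~\ref{prop:rel-rep-faithful}). The only differences are presentational: you sketch the bialgebra normal-form argument inline where the paper outsources it to the citation, and your recovery of $A$ from $f_A$ by evaluating at the vectors with a single $0$ coordinate makes explicit what the paper dismisses as ``clearly''.
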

We can extract a unique matrix from a pre-normal form diagram, which we will call its \emph{representing matrix}. Moreover,  
Proposition~\ref{prop:matrix-completeness} allows us to simplify our reasoning in the proof of completeness: in light of this result, we will assume  -- wherever it is convenient -- that we can identify any two diagrams formed of $\copyy, \discard, \unit, \conjj$ that represent the same matrix. This often means that we can be as lax as necessary when identifying diagrams modulo the (co)associativity of $\copyy$ or $\conjj$ and the (co)unitality of $(\copyy, \discard)$ and $(\conjj, \unit)$. In particular, given a Boolean matrix $A$, we can define a unique (up to equality in $\axBARel$) pre-normal form diagram with representing matrix $A$. 

%
%
%

Pre-normal forms are closely related to certain sets of Boolean clauses which we now introduce. 
We fix a finite set of variables $\Var = \{ x_1, \dots, x_m \}$. By analogy with the corresponding terminology for Boolean formulas (\emph{cf.} Section~\ref{sec:semantics}), we call a subset of $\Var$ a \emph{\negClause{}}.
In the rest of this section, a \negClause{} $\{x_{i_1}, \cdots, x_{i_k}\}$ is intended to represent the diagram $\tikzfig{cup-white-k}$ whose semantics is given by $\{x_{i_1}, \cdots, x_{i_k}\,|\, x_{i_1} \land \cdots \land x_{i_k} \leq 0\}$ which is equal to the set of satisfying assignments of $\lnot x_{i_1} \lor \cdots \lor \lnot x_{i_k}$\footnote{Strictly speaking, because all literals are negative, we should write $\{\lnot x_{i_1}, \cdots, \lnot x_{i_k}\}$ for the corresponding clause. However, to save the reader a flurry of negations, we will simply write it as the set $\{x_{i_1}, \cdots, x_{i_k}\}$.}.
%
%
For conciseness, we sometimes write an assignment $\valS \colon \Var \to \Bool$ as 
an element of $\Bool^m$, whose $i$-th element is $\valS(i+1)$, for $i = 0, \dots, m-1$. Given a \negClause{} $\varphi=\{x_{i_1}, \cdots, x_{i_k}\}$, we say that an assignment $(b_1,\dots, b_m)$ satisfies $\varphi$ if $\lnot b_{i_1}\lor \cdots\lor \lnot b_{i_k} = 1$ (equivalently, if $b_{i_1}\land \cdots \land b_{i_k} = 0$). We say that it satisfies a set of \negClause{} if it satisfies all \negClause{} in the set simultaneously. 

\noindent\textbf{From pre-normal form to \negClauses{}
} 
Pre-normal form diagrams matter because one can derive a set of \negClauses{} from them, which is closely related with their semantics (Proposition~\ref{prop:one-one-clause-NF-dcset}). 
Given some pre-normal form diagram $d \colon m \to 0$, we define $\sysIneq{}{d}$ to be the 
set of \negClauses{} over $m$ variables as follows.  
Each row $\rowR$ of the representing matrix $A$ of $d$ generates a \negClause{} $\varphi$ such that $x_i \in \varphi$ if and only if $\rowR[i] = 1$, and $\sysIneq{}{d}$ is the set of all \negClauses{} generated by the rows of $A$. 
The diagram $d$ and the set of \negClauses{} $\sysIneq{}{d}$ are semantically equivalent in the following sense. 
\begin{proposition}
\label{prop:gen-ineq-is-semantics}
Let $d \colon m \to 0$ be a pre-normal form diagram. Then $\intBProf{d}$ is exactly the set of all satisfying assignments of $\sysIneq{}{d}$, the le
set of \negClauses{} generated by $d$.
\end{proposition}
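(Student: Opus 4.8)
The plan is to unfold the definition of the pre-normal form diagram $d$, read off its semantics directly from the functor $\intBProf{\cdot}$, and check that the resulting set coincides with the satisfying assignments of $\sysIneq{}{d}$. Write $A$ for the representing matrix of $d$, an $n\times m$ Boolean matrix whose $i$-th row generates the $\neg$-clause $\varphi_i=\{x_j\mid A_{ij}=1\}$. By construction (Lemma~\ref{lem:pre-normal-form} together with the extraction of the representing matrix), a pre-normal form diagram $m\to 0$ factors as the matrix diagram for $A$, of type $m\to n$, post-composed with $n$ white counits that cap each clause wire; writing $A$ also for this matrix diagram, we have $d = A\then \counit^{\tensor n}$.

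The crux is the semantics of the matrix diagram. I claim
\[
\intBProf{A}=\Bigl\{\bigl((x_1,\dots,x_m),(y_1,\dots,y_n)\bigr)\ \Big|\ \textstyle\bigwedge_{j:A_{ij}=1}x_j\leq y_i \text{ for all } i\Bigr\}.
\]
This is the relational analogue of the standard representation of Boolean matrices by a bimonoid; I would prove it by induction on the structure of the matrix diagram, equivalently by composing the interpretations of $\copyy,\discard,\unit,\conjj$ block by block as in Definition~\ref{def:undir-sem-int}. The only delicate point is that composition of monotone relations introduces an existential quantifier over each internal wire: when a variable $x_j$ is copied toward several clauses and conjoined with others, one must verify that quantifying out the intermediate values leaves exactly the constraint $\bigwedge_{j:A_{ij}=1}x_j\leq y_i$. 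This works because both $\copyy$ and $\conjj$ are interpreted with $\leq$ bounding the internal wires \emph{from below} (the copies in $\intBProf{\copyy}$ and the output in $\intBProf{\conjj}$ lie on the right of $\leq$), so the tightest witness for each internal wire is obtained by setting it to the relevant conjunction of inputs, and this witness realises precisely the row conjunction. I expect this inductive computation to be the main, though routine, obstacle. (It also handles edge cases correctly: an empty row is formed by $\unit$ and forces $y_i=1$, matching the unsatisfiable empty $\neg$-clause once the cap is applied.)

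Granting the claim, composing with the caps finishes the argument. Since $\intBProf{\counit}=\{(0,\bullet)\}$ forces its input to be $0$, we get $\intBProf{\counit^{\tensor n}}=\{((y_1,\dots,y_n),\bullet)\mid y_i=0 \text{ for all } i\}$, and by functoriality $\intBProf{d}=\intBProf{A}\relThen\intBProf{\counit^{\tensor n}}$. Unfolding the definition of $\relThen$ and substituting $y_i=0$ collapses the existential and yields
\[
\intBProf{d}=\Bigl\{(x_1,\dots,x_m)\ \Big|\ \textstyle\bigwedge_{j:A_{ij}=1}x_j=0 \text{ for all } i\Bigr\}.
\]
Finally, by the definition of $\sysIneq{}{d}$ recalled above, an assignment $(b_1,\dots,b_m)$ satisfies $\varphi_i$ exactly when $\bigwedge_{j:A_{ij}=1}b_j=0$, so the displayed set is precisely the set of assignments satisfying every $\neg$-clause in $\sysIneq{}{d}$ simultaneously. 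This is the claimed equality.
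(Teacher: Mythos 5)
Your proposal is correct, and it in fact supplies an argument that the paper itself never spells out: Proposition~\ref{prop:gen-ineq-is-semantics} is stated without proof, and the appendix only proves the converse translation (Lemma~\ref{lem:diag-sol-of-sys}, a direct ``each cap imposes one constraint, the semantics is the intersection of the constraints'' argument). Your route --- decompose $d$ as a matrix diagram $A$ post-composed with $\counit^{\tensor n}$, compute $\intBProf{A}=\{(\mathbf{x},\mathbf{y})\mid A\mathbf{x}\leq \mathbf{y}\}$, then observe that the caps force every $y_i$ to $0$ --- is exactly the reading the paper relies on implicitly, so the two approaches coincide in substance. One simplification worth noting: the inductive computation you flag as the main obstacle does not need to be done by hand. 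The paper's appendix already proves that ``being represented by'' a monotone relation is closed under sequential and parallel composition (Proposition~\ref{prop:rep-comp}), and the proof of matrix completeness (Proposition~\ref{prop:matrix-completeness}) observes that each of $\copyy, \discard, \unit, \conjj$ represents the evident monotone function and that a matrix diagram therefore represents $\mathbf{x}\mapsto A\mathbf{x}$ (meet-based matrix application). Your claimed formula for $\intBProf{A}$ is precisely the statement that this relation represents that function, so it follows by citing those two results rather than by a fresh structural induction; your monotonicity explanation (the internal wires are bounded from below, so the tightest witnesses realise the row conjunctions) is the correct intuition for why that compositionality holds. The remaining steps --- the computation of $\intBProf{\counit^{\tensor n}}$, the collapse of the existential under $\relThen$, and the match with the definition of $\sysIneq{}{d}$ and of satisfaction of \negClauses{} (including the empty-row/empty-clause edge case) --- are all correct as written.
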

\begin{example}
The diagram $d = \scalebox{0.8}{\tikzfig{eg-pre-NF-give-ineq-3}}$ is in pre-normal form and $\sysIneq{}{d} = \{\{x_1, x_2\}, \{x_1, x_3\}, \{\}\}$. 
$\intBProf{d} = \empset$ is the set of all satisfying assignments of $\sysIneq{}{d}$. 
\end{example}

\noindent\textbf{From \negClauses{} to pre-normal form} 
Conversely, given a set of \negClauses{} $\Phi$ over $\Var$, we can define a diagram $\diag{\Phi} \colon m \to 0$ in pre-normal form. 
Assume that $|\Phi| = n$, and we list the \negClauses{} in $\Phi$ using lexicographical order as $\varphi_1, \dots, \varphi_n$ (for instance, $\{ x_1, x_3 \}$ appears before $\{ x_2 \}$).
Then $\diag{\Phi}$ is defined as the pre-normal form diagram with representing $n\times m$ Boolean matrix $A$ given by $A_{ij} = 1$ precisely if $x_j \in \varphi_i$. 
%
%
Again, $\Phi$ and $\diag{\Phi}$ are semantically equivalent in the following sense. 
\begin{lemma}
\label{lem:diag-sol-of-sys}
$\intBProf{\diag{\Phi}}$ is exactly the set of satisfying assignments of $\Phi$. 
\end{lemma}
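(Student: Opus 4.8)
The plan is to avoid recomputing the semantics of $\diag{\Phi}$ from scratch, and instead reduce the statement to Proposition~\ref{prop:gen-ineq-is-semantics}. By construction, $\diag{\Phi}$ is a pre-normal form diagram, so Proposition~\ref{prop:gen-ineq-is-semantics} applies directly: $\intBProf{\diag{\Phi}}$ is exactly the set of satisfying assignments of $\sysIneq{}{\diag{\Phi}}$, the set of \negClauses{} generated by $\diag{\Phi}$. It therefore suffices to establish the purely combinatorial identity $\sysIneq{}{\diag{\Phi}} = \Phi$, after which the lemma follows immediately.

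To prove this identity I would make explicit that the passage from \negClause{} sets to pre-normal form diagrams and the passage $\sysIneq{}{\cdot}$ from pre-normal form diagrams back to \negClause{} sets are mutually inverse at the level of representing matrices. By definition $\diag{\Phi}$ has representing matrix $A$ with $A_{ij} = 1$ iff $x_j \in \varphi_i$, where $\varphi_1, \dots, \varphi_n$ is the lexicographic listing of $\Phi$. On the other hand, $\sysIneq{}{\diag{\Phi}}$ is obtained by reading each row $\rowR$ of $A$ as the \negClause{} $\{ x_j \mid \rowR[j] = 1 \}$ and collecting the results into a set. Hence the $i$-th row of $A$ yields precisely $\{ x_j \mid A_{ij} = 1 \} = \{ x_j \mid x_j \in \varphi_i \} = \varphi_i$, so the set of clauses read off from the rows of $A$ is exactly $\{ \varphi_1, \dots, \varphi_n \} = \Phi$.

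A few bookkeeping points deserve attention. The lexicographic order is used only to pin down a concrete matrix representative; since $\sysIneq{}{\cdot}$ returns a \emph{set} of clauses, this ordering is irrelevant to the final equality. Because $\Phi$ is a set with $|\Phi| = n$, the clauses $\varphi_1, \dots, \varphi_n$ are pairwise distinct, so the rows of $A$ are distinct and no clause is duplicated or lost when passing back to a set. Finally, $\diag{\Phi}$ is only defined up to equality in $\axBARel$, but this causes no difficulty: all such representatives share the same representing matrix, and by soundness (the easy direction of Theorem~\ref{thm:sound-complete-monotone-rel-FBA}, already established) they share the same interpretation under $\intBProf{\cdot}$.

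Combining these observations gives $\sysIneq{}{\diag{\Phi}} = \Phi$, so by Proposition~\ref{prop:gen-ineq-is-semantics} the relation $\intBProf{\diag{\Phi}}$ is exactly the set of satisfying assignments of $\Phi$, as required. The only mild obstacle is the careful verification that the matrix-to-clauses and clauses-to-matrix passages genuinely invert one another, together with the attendant set-versus-list and distinctness bookkeeping; beyond that, the result is an immediate corollary of the previously proven Proposition~\ref{prop:gen-ineq-is-semantics}.
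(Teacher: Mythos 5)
Your proposal is correct, but it follows a genuinely different route from the paper's own proof. The paper argues by direct semantic computation: in $\diag{\Phi}$, each group of input wires that is copied and fed into a common white node imposes exactly the constraint that the corresponding \negClause{} be satisfied (a variable not occurring in $\Phi$ is discarded, yielding a trivial constraint), and $\intBProf{\diag{\Phi}}$ is the intersection of these constraints over all clauses of $\Phi$, which by definition is the set of satisfying assignments of $\Phi$. You instead reduce the lemma to Proposition~\ref{prop:gen-ineq-is-semantics} via the round-trip identity $\sysIneq{}{\diag{\Phi}} = \Phi$, verified at the level of representing matrices, together with a soundness argument for the well-definedness of $\intBProf{\diag{\Phi}}$ across $\axBARel$-equal representatives. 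This is legitimate: Proposition~\ref{prop:gen-ineq-is-semantics} precedes the lemma in the paper's development, your matrix bookkeeping (distinctness of clauses, irrelevance of the lexicographic ordering) is right, and the reduction is clean. What your route buys is brevity and an explicit statement of the fact that $\sysIneq{}{\cdot}$ and $\diag{\cdot}$ are mutually inverse translations, a fact the paper only uses implicitly later, in the proof of Proposition~\ref{prop:one-one-clause-NF-dcset} (the correspondence between (ii) and (iii)). What it costs is self-containedness: the paper never actually proves Proposition~\ref{prop:gen-ineq-is-semantics} --- it is stated without proof, and the only place the underlying semantic computation is carried out is precisely the paper's proof of this lemma. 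So your argument shifts, rather than eliminates, the semantic content; a full audit of the paper would still require supplying that computation somewhere, and it would go through for $\sysIneq{}{d}$ essentially verbatim as the paper's direct argument does for $\diag{\Phi}$.
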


\begin{example}
\label{eg:sys-ineq-to-diag}
Let $\Var = \{ x_1, x_2, x_3 \}$, and consider the following two sets of \negClauses{}: 
$\Gamma = \{\{x_1\}, \{x_2, x_3\}\}$ and $\Phi = \{\{x_1\}, \{x_1, x_2\}, \{x_2, x_3\}\}$. 
They have the same set of satisfying assignments $A = \{ 000, 010, 001 \}$; their associated pre-normal form diagrams are 
$\diag{\Gamma} = \scalebox{0.7}{\tikzfig{eg-pre-NF-give-ineq-1}}$, 
$\diag{\Phi} = \scalebox{0.7}{\tikzfig{eg-pre-NF-give-ineq-2}}$, 
which verify $\intBProf{\diag{\Gamma}} = \intBProf{\diag{\Phi}} = A$. 
Moreover, $\Phi$ contains the `redundant' \negClause{} $\{x_1, x_2\}$ because the clause $\{x_1\}$ already implies that $\{x_1, x_2\}$ is satisfied (if $\lnot x_1 = 1$, then $\lnot x_1 \lor \lnot x_2 = 1$), while $\Gamma$ contains no such redundancies. Soon we shall see that $\Gamma$ is the smallest set of \negClauses{} whose set of satisfying assignments is $A$, and $\diag{\Gamma}$ is in normal form.

\end{example}

\subsection{Normal form and completeness.} 
Recall that for completeness we want to select one representative diagram -- in normal form --  for each semantic object. 
%
%
%
We use sets of \negClauses{} as an intermediate step and choose such representative to be the \emph{minimal set of clauses} among those with a given set of satisfying assignments: a set of \negClauses{} $\Phi$ is \emph{minimal} if dropping any \negClause{} from $\Phi$ returns a set $\Phi'$ whose set of satisfying assignments is a proper superset of those of $\Phi$. Minimal sets of \negClauses{} are unique, thus every given set of satisfying assignment for some set of \negClauses{} has a least set of \negClauses{}.
\begin{lemma}
\label{lem:min-system-eq}
If two minimal 
sets of \negClauses{} $\Sigma$ and $\Gamma$ have the same satisfying assignments, then $\Sigma = \Gamma$.
\end{lemma}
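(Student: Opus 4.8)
The plan is to reduce the statement to a purely order-theoretic fact about the subset ordering on \negClauses{}, and then match the clauses of $\Sigma$ and $\Gamma$ one at a time using a canonical witness assignment. Throughout, let me write $S(\Phi) \subseteq \Bool^m$ for the set of assignments satisfying every \negClause{} in $\Phi$. Recall that a single \negClause{} $\varphi$ is violated by exactly those assignments $b$ whose support $\{i : b_i = 1\}$ contains $\varphi$, so $S(\varphi)$ is down-closed and $S(\Phi) = \bigcap_{\varphi \in \Phi} S(\varphi)$. The first thing I would record is the order-theoretic dictionary for single clauses: for \negClauses{} $\varphi, \psi$ one has $S(\psi) \subseteq S(\varphi)$ if and only if $\psi \subseteq \varphi$. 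This is immediate, since the set of violating assignments $\{b \mid \varphi \subseteq \mathrm{supp}(b)\}$ shrinks as $\varphi$ grows; it expresses that the shorter clause is the logically stronger one.

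The key structural step is to show that every minimal set of \negClauses{} is an \emph{antichain} for $\subseteq$. Suppose $\psi \subsetneq \varphi$ with both in a minimal $\Phi$. By the dictionary $S(\psi) \subseteq S(\varphi)$, and since $\psi$ still belongs to $\Phi \setminus \{\varphi\}$, the factor $S(\psi)$ already appears in $S(\Phi \setminus \{\varphi\}) = \bigcap_{\chi \neq \varphi} S(\chi)$. Hence intersecting with $S(\varphi)$ is redundant, giving $S(\Phi \setminus \{\varphi\}) = S(\Phi)$, which contradicts minimality (minimality requires a \emph{proper} superset). Thus no two distinct clauses of a minimal set are comparable.

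With the antichain property in hand, for each clause $\varphi$ of a minimal set I would use the \emph{canonical minimal violating assignment} $b^\varphi$, namely the assignment whose support is exactly $\varphi$. By construction $b^\varphi$ violates $\varphi$, and it violates another clause $\chi$ precisely when $\chi \subseteq \varphi$; since the set is an antichain, $b^\varphi$ violates $\varphi$ and no other clause, so $b^\varphi \notin S(\Phi)$. The main argument is then a short chase. Fix $\varphi \in \Sigma$. Since $b^\varphi \notin S(\Sigma) = S(\Gamma)$, some $\psi \in \Gamma$ is violated by $b^\varphi$, i.e.\ $\psi \subseteq \varphi$. Applying the same reasoning to $b^\psi \notin S(\Gamma) = S(\Sigma)$ produces some $\chi \in \Sigma$ with $\chi \subseteq \psi \subseteq \varphi$. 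As $\Sigma$ is an antichain containing both $\chi$ and $\varphi$ with $\chi \subseteq \varphi$, we get $\chi = \varphi$, which forces $\psi = \varphi$ and hence $\varphi \in \Gamma$. This proves $\Sigma \subseteq \Gamma$, and the symmetric argument (swapping the roles of $\Sigma$ and $\Gamma$, using that $\Gamma$ is also minimal) gives $\Gamma \subseteq \Sigma$.

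I expect the only genuine subtlety to be the choice of witness. Minimality abstractly supplies, for each $\varphi$, merely \emph{some} assignment satisfying $\Phi \setminus \{\varphi\}$ but not $\varphi$, whereas the clean conclusion $\psi \subseteq \varphi$ requires the \emph{smallest} such assignment $b^\varphi$; the antichain lemma is exactly what licenses replacing an arbitrary witness by $b^\varphi$. Establishing that lemma is therefore the crux of the proof, and everything else is routine bookkeeping with the subset order and the dictionary relating it to inclusion of satisfying sets.
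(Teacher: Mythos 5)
Your proof is correct and takes essentially the same route as the paper's: your canonical witness $b^\varphi$ (the assignment with support exactly $\varphi$) gives a direct proof of the key subsumption fact that the paper isolates as Lemma~\ref{lem:ineq-imply} (there proved by contradiction via a different falsifying assignment), and your chase $\chi \subseteq \psi \subseteq \varphi$ collapsed by the antichain property is precisely the paper's chain argument, with the antichain consequence of minimality used inline there rather than stated as a separate lemma. No gaps; the argument goes through as written.
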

\begin{lemma}
\label{lem:dcset-solution-ineq}
A subset of $\Bool^{\Var}$ is downward closed if and only if it is the satisfying assignment of some set of \negClauses{} over $\Var$. 
\end{lemma}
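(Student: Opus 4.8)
The plan is to prove both implications by computing, for a single \negClause{}, its set of satisfying assignments explicitly, after which the statement reduces to elementary order theory on $\Bool^{\Var}$. The key fact I would establish first is that satisfaction of one \negClause{} is a downward-closed condition, and that the map sending a \negClause{} $\varphi$ to the characteristic vector of its underlying set of variables is a bijection onto $\Bool^{\Var}$ under which the set of assignments \emph{violating} $\varphi$ becomes the principal up-set generated by that vector.

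For the `if' direction, I would take an arbitrary \negClause{} $\varphi = \{x_{i_1},\dots,x_{i_k}\}$ and show its satisfying set $\mathrm{Sat}(\varphi) = \{(b_1,\dots,b_m)\mid b_{i_1}\land\cdots\land b_{i_k}=0\}$ is downward closed: if $b\in\mathrm{Sat}(\varphi)$ then some coordinate $b_{i_j}=0$, so any $b'\le b$ has $b'_{i_j}\le b_{i_j}=0$ and hence $b'\in\mathrm{Sat}(\varphi)$. Since the satisfying set of a set $\Phi$ of \negClauses{} is the intersection $\bigcap_{\varphi\in\Phi}\mathrm{Sat}(\varphi)$, and intersections of downward-closed sets are downward closed, this direction follows immediately.

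For the `only if' direction, given a downward-closed $D\subseteq\Bool^{\Var}$, I would pass to its complement $U=\Bool^{\Var}\setminus D$, which is upward closed. The crucial step is to note that the set of assignments violating $\varphi$ is $\{b\mid b_i=1\text{ for all }x_i\in\varphi\}$, which is exactly the principal up-set $\uparrow\!\chi_\varphi$ generated by the characteristic vector $\chi_\varphi$ of $\varphi$. Since $\varphi\mapsto\chi_\varphi$ is a bijection between \negClauses{} and $\Bool^{\Var}$, I would set $\Phi=\{\varphi\mid\chi_\varphi\in U\}$. As an assignment fails $\Phi$ iff it violates some member, the non-satisfying set of $\Phi$ is $\bigcup_{u\in U}\uparrow\! u$, and this equals $U$ precisely because $U$ is upward closed. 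Hence the satisfying set of $\Phi$ is $D$, as required. (If a smaller witness is wanted, letting $\Phi$ range only over the minimal elements of $U$ already suffices, which foreshadows the minimal/normal forms of Lemma~\ref{lem:min-system-eq}.)

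I do not expect a serious obstacle: the heart of the argument is the standard fact that up-sets in a finite poset are generated by their elements, transported along the \negClause{}/assignment dictionary. The only points requiring care are the bookkeeping of the two order conventions (satisfaction is downward closed while violation is upward closed) and the degenerate cases --- the empty \negClause{}, whose violating set is all of $\Bool^{\Var}$ (corresponding to $u=(0,\dots,0)$) and whose satisfying set is therefore empty, and the full \negClause{} $\Var$, violated only by the top assignment. Checking that these extreme clauses behave correctly under the correspondence is the one place where I would be fully explicit.
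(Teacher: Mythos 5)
Your proof is correct, and its overall skeleton matches the paper's: the easy direction is identical (satisfaction of a single \negClause{} is a downward-closed condition, and the satisfying set of a set of clauses is an intersection of such sets), and both proofs of the hard direction work by exhibiting a canonical, maximally large witness set of clauses. In fact your witness set $\Phi = \{\varphi \mid \chi_\varphi \in U\}$ is \emph{extensionally equal} to the paper's $\Gamma$ (all clauses whose satisfying sets contain the given down-set $S$): for down-closed $S$, $S \subseteq \mathrm{Sat}(\varphi)$ iff $S \cap \uparrow\!\chi_\varphi = \empset$ iff $\chi_\varphi \notin S$. Where you genuinely diverge is the verification. The paper argues by contradiction: assuming some $\valT \notin S$ satisfies $\Gamma$, it constructs a separating clause (from components where $\valT$ is $1$ but some element of $S$ is $0$) that lies in $\Gamma$ yet is violated by $\valT$. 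You instead verify the equality $\mathrm{Sat}(\Phi) = D$ by one direct computation: the violating set of $\Phi$ is $\bigcup_{u \in U} \uparrow\! u$, which equals $U$ precisely because $U$ is up-closed. Your packaging buys a contradiction-free argument in which the clause/vector bijection and the single identity $\bigcup_{u\in U}\uparrow\! u = U$ do all the work, the degenerate cases (empty clause, empty $S$) fall out automatically, and the correspondence between clauses and elements of $\Bool^{\Var}$ that underlies the later uniqueness statements (Lemma~\ref{lem:min-system-eq}, Proposition~\ref{prop:one-one-clause-NF-dcset}) is made explicit; your closing remark about restricting to minimal elements of $U$ is exactly the minimality notion used there. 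The paper's pointwise separation argument is the local form of the same order-theoretic fact, and its style is closer to the implication lemma (Lemma~\ref{lem:ineq-imply}) used elsewhere, but as written it requires a more delicate clause construction than your global computation.
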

Lemmas~\ref{lem:min-system-eq}-\ref{lem:dcset-solution-ineq} imply the existence of a unique minimal set of \negClauses{} for each semantic object. 
\begin{lemma}
\label{lem:iso-dcsubset-minclause}
For every downward closed subset $A$ of $\Bool^{\Var}$ there is a unique minimal
set of \negClauses{} whose solution set is $A$.
\end{lemma}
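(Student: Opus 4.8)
The plan is to derive both halves of the statement directly from the two preceding lemmas, handling existence and uniqueness separately. The only genuinely new ingredient is a finite pruning argument for existence; uniqueness will be immediate.

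\emph{Existence.} Since $A$ is downward closed, Lemma~\ref{lem:dcset-solution-ineq} supplies \emph{some} set of \negClauses{} $\Phi$ over $\Var$ whose set of satisfying assignments is exactly $A$. This $\Phi$ need not be minimal, so I would prune it. Call a \negClause{} $\varphi \in \Phi$ \emph{redundant} if the set of satisfying assignments of $\Phi \setminus \{\varphi\}$ still equals that of $\Phi$. As long as some redundant \negClause{} exists, remove one: by definition this leaves the solution set unchanged and strictly decreases $|\Phi|$. Because $\Var$ is finite there are only finitely many \negClauses{} (subsets of $\Var$), so $\Phi$ is finite and the process terminates at a set $\Phi_{\min}$ whose solution set is still $A$. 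At termination no \negClause{} is redundant; since dropping a \negClause{} can only enlarge the solution set, ``not redundant'' means this enlargement is proper. Hence $\Phi_{\min}$ is minimal in the sense of the definition and has solution set $A$.

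\emph{Uniqueness.} Suppose $\Phi_1$ and $\Phi_2$ are both minimal sets of \negClauses{} with solution set $A$. Then they have the same satisfying assignments, both equal to $A$, so Lemma~\ref{lem:min-system-eq} forces $\Phi_1 = \Phi_2$.

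I do not expect a real obstacle here: once Lemmas~\ref{lem:min-system-eq} and~\ref{lem:dcset-solution-ineq} are in hand, the statement is essentially bookkeeping, the only points needing care being termination of the pruning loop (guaranteed by finiteness of $\Var$) and the remark that removing constraints never shrinks the solution set, so that ``non-redundant'' coincides with the strict-superset condition in the definition of minimality. As an alternative to pruning, one can describe the minimal set explicitly: the solution set of $\Phi$ is the complement of the union of the principal up-sets generated by the characteristic vectors $\chi_\varphi$ of its \negClauses{}, and since $A$ is downward closed its complement $\Bool^{\Var} \setminus A$ is an up-set; the unique minimal $\Phi$ is then the one whose \negClauses{} are the characteristic vectors of the minimal elements of $\Bool^{\Var} \setminus A$. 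This yields existence and minimality directly, at the cost of a short separate verification that these minimal elements both cover and minimally generate the up-set.
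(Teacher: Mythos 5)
Your proposal is correct and follows essentially the same route as the paper, which derives this lemma directly from Lemma~\ref{lem:dcset-solution-ineq} (existence of \emph{some} set of \negClauses{} with solution set $A$) and Lemma~\ref{lem:min-system-eq} (uniqueness of minimal such sets). Your explicit finite-pruning argument merely fills in the existence step that the paper leaves implicit, and is a sound way to do so.
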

%
The normal form diagrams are defined as the diagrammatic counterpart of 
minimal sets of \negClauses{}. 
\begin{definition}
\label{def:normal-form}
Suppose $d \colon m \to 0$ is a pre-normal form diagram with representing matrix $A$. 
We say $d$ is of \emph{normal form} if there are no two distinct rows $i, j$ of $A$ such that $A_{ik} \leq A_{jk}$ for all $k = 1, \dots, m$. 
\end{definition}

\begin{wrapfigure}{r}{0pt}
    \tikzfig{eg-normal-form-6}
\end{wrapfigure}

Note that swapping two rows in the representing matrix $A$ of diagram $d$ does not change $\sysIneq{}{d}$ (the set of \negClauses{} generated by $d$), so what we really care about is normal form diagrams up to some permutation of the rows of their representing matrices. 
Let $d$ and $e$ be two normal form diagrams with representing $n \times m$ matrices $A$ and $B$, respectively. $d$ and $e$ are \emph{equivalent up to commutativity} if $A$ is $B$ with its rows permuted: for all $i \in \{ 1, \dots, n \}$, there exists some $j \in \{ 1, \dots, n \}$ such that $A_{i k} = B_{j k}$ for all $k = 1, \dots, m$, and vice versa. 

\begin{example}
\label{eg:normal-form}
Consider the pre-normal form diagrams above right. 
$c_1$ is not of normal form because it represents the matrix $\begin{psmallmatrix} 
1 & 1 & 0 \\
1 & 1 & 1 
\end{psmallmatrix}$. 
%
%
$c_2$ and $c_3$ are both of normal form, representing matrices 
$\begin{psmallmatrix}
1 & 1 & 0\\
0 & 1 & 1 \end{psmallmatrix}$ 
and
$\begin{psmallmatrix}
0 & 1 & 1\\
1 & 1 & 0 \end{psmallmatrix}$ 
respectively, thus equivalent up to commutativity. 
\end{example}
\begin{proposition}
\label{prop:one-one-clause-NF-dcset}
There is a 1-1 correspondence between the following three sets: 
\begin{enumerate}
    \item  downward closed subsets of $\Bool^m$;
    \item minimal sets of \negClauses{} over $\Var = \{ x_1, \dots, x_m \}$;
    \item normal form diagrams of type $m \to 0$, modulo equivalence up to commutativity.
\end{enumerate}
\end{proposition}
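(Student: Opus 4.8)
The plan is to factor the three-way correspondence into two bijections, between~(1) and~(2) and between~(2) and~(3), and then compose them. The link between~(1) and~(2) is essentially already packaged by the preceding lemmas. Define a map $\beta$ sending a minimal set of \negClauses{} to its set of satisfying assignments; by Lemma~\ref{lem:dcset-solution-ineq} this set is downward closed, so $\beta$ lands in~(1). In the other direction, Lemma~\ref{lem:iso-dcsubset-minclause} assigns to each downward closed $A \subseteq \Bool^m$ the unique minimal set of \negClauses{} with solution set $A$, giving a map $\alpha \colon (1) \to (2)$. That $\alpha$ and $\beta$ are mutually inverse is immediate: $\beta \circ \alpha = \id$ because $\alpha(A)$ has solution set $A$ by construction, and $\alpha \circ \beta = \id$ because a minimal set with solution set $A$ must coincide with the unique minimal set for $A$ (Lemma~\ref{lem:min-system-eq}).

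For the bijection between~(2) and~(3), I would use the two translations already introduced: $\Phi \mapsto \diag{\Phi}$ and $d \mapsto \sysIneq{}{d}$. First I need to check that these maps restrict correctly, namely that $\diag{\Phi}$ is in normal form exactly when $\Phi$ is minimal and, dually, that $\sysIneq{}{d}$ is minimal exactly when $d$ is in normal form. The representing matrix $A$ of $\diag{\Phi}$ has one row per \negClause{}, with $A_{ik} = 1$ iff $x_k \in \varphi_i$, so the normal-form condition of Definition~\ref{def:normal-form}---no distinct rows $i,j$ with $A_{ik} \leq A_{jk}$ for all $k$---translates precisely into the set-theoretic statement that no \negClause{} of $\Phi$ is contained in another. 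With this translation in hand, the two maps are mutually inverse up to commutativity: starting from a minimal $\Phi$, reading the rows of $\diag{\Phi}$ back off yields $\sysIneq{}{\diag{\Phi}} = \Phi$; starting from a normal-form $d$, forming $\diag{\sysIneq{}{d}}$ returns the same matrix up to a reordering of its rows, which is exactly equivalence up to commutativity. Semantic compatibility on both sides is guaranteed by Lemma~\ref{lem:diag-sol-of-sys} and Proposition~\ref{prop:gen-ineq-is-semantics}.

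The crux of the argument, and the step I expect to require the most care, is the equivalence between minimality of a set of \negClauses{} and the no-containment (normal-form) condition. One direction is easy: if $\varphi_i \subseteq \varphi_j$ for distinct $i, j$, then any assignment satisfying the clause of $\varphi_i$ also satisfies that of $\varphi_j$, so $\varphi_j$ is redundant and $\Phi$ fails to be minimal. The converse is the delicate part. Assuming $\Phi$ is not minimal, some \negClause{} $\varphi_j$ is redundant, and I would exhibit a witnessing assignment $s$ that sets exactly the variables of $\varphi_j$ to $1$ and all others to $0$. By construction $s$ fails the clause of $\varphi_j$, so by redundancy it must fail some other clause $\varphi_i$; but failing $\varphi_i$ forces every variable of $\varphi_i$ to be $1$ under $s$, whence $\varphi_i \subseteq \varphi_j$, and as the clauses are distinct this is a strict containment, contradicting the normal-form condition. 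This witness-assignment argument is the heart of the proof; everything else is bookkeeping. Composing the bijections $(1)\leftrightarrow(2)$ and $(2)\leftrightarrow(3)$ then yields the full three-way correspondence.
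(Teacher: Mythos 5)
Your proposal is correct and takes essentially the same route as the paper's own proof: it factors the correspondence into (1)$\leftrightarrow$(2), established via Lemmas~\ref{lem:dcset-solution-ineq}, \ref{lem:min-system-eq} and~\ref{lem:iso-dcsubset-minclause}, and (2)$\leftrightarrow$(3), witnessed by the translations $\diag{\cdot}$ and $\sysIneq{}{\cdot}$ considered modulo permutation of rows. The only difference is one of detail: the paper merely asserts (``We notice that\dots'') that $\diag{\cdot}$ and $\sysIneq{}{\cdot}$ restrict to maps between minimal sets of \negClauses{} and normal form diagrams, whereas you actually prove this key fact, namely the equivalence between minimality and the no-containment condition of Definition~\ref{def:normal-form}, via the witness-assignment argument.
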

\begin{remark}\label{rem:full-completeness}
The proposition above implies a stronger result plain completeness : $\axBARel$ is \emph{fully} complete for the given semantics, as every monotone relation between tuples of Booleans is in the image of the interpretation functor $\intBProf{\cdot}$.
\end{remark}
Having established the connection between semantic objects and normal form diagrams, we show that indeed every $\catSyn$-morphism is equal to one in normal form. 
Crucially we need the following lemma, which allows us to remove redundant clauses from a diagram in pre-normal form to obtain one in normal form:
if some \negClause{} $\phi$ is strictly larger than another $\psi$, then one can drop $\phi$ from $\Phi$ and the resulting set of \negClauses{} has the same set of satisfying assignments with the original $\Phi$. 
\begin{lemma}
\label{lem:eliminate-redun-ineq}
For arbitrary natural numbers $m, n$, 
\scalebox{0.8}{\tikzfig{remove-redundant-ineq-1}}
\end{lemma}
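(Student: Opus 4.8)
The plan is to exhibit an explicit rewriting of the left-hand diagram into the right-hand one using the axioms of $\axBARel$. Write $\psi \subsetneq \phi$ for the two \negClauses{}, and let $\chi = \phi \setminus \psi$ collect the variables occurring in $\phi$ but not in $\psi$; on the left-hand side each shared variable (those in $\psi$) is fed through a black copy $\copyy$ into both the $\psi$-node and the $\phi$-node, while the $\chi$-wires feed only the $\phi$-node. Using the spider theorem (Proposition~\ref{thm:spider}) I would first present each clause node as a single white spider terminating in a $\counit$, and then bundle inputs: conjoin (with $\conjj$) the shared copies destined for $\psi$ into one wire, those destined for the $\phi$-node into another, and the $\chi$-wires into a single wire $v$. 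The goal becomes to show that the $\phi$-node is redundant, so that after simplification only the $\psi$-node survives and the $\chi$-wires are discarded.

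The first genuine step isolates the shared contribution into a single wire. The subdiagram that copies the shared inputs and then conjoins each bundle separately is built entirely from the bialgebra generators $\copyy, \discard, \unit, \conjj$; by Proposition~\ref{prop:matrix-completeness} I may replace it by any diagram in that fragment with the same matrix semantics, in particular by the one that first conjoins all shared inputs into a single wire $w$ and then copies $w$ with $\copyy$. After this rearrangement the diagram feeds $w$ through a black copy, sending one copy straight to the $\counit$ coming from $\psi$, and the other copy into the $\phi$-conjunction together with $v$, before a final $\counit$.

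The core of the argument is then a purely local absorption identity: copying $w$, forcing one copy to $0$ with $\counit$, and sending the other copy into the white conjunction with $v$ followed by $\counit$, equals forcing $w$ to $0$ and discarding $v$. I would factor this through two mixed black/white identities, namely
\[
\copyy \then (\counit \tensor \id) \;=\; \counit \then \codiscard
\qquad\text{and}\qquad
(\codiscard \tensor \id) \then \conjj \then \counit \;=\; \discard,
\]
each of which is sound for the semantics and derivable from the bimonoid laws (the C block), the unit--counit law $\codiscard \then \counit = \id$ (an empty diagram, itself a bimonoid law), and the relevant adjunctions (the D axioms). The first identity says a wire pinned to $0$ annihilates the conjunction it enters, liberating the other input; the second says a freshly created input, conjoined and then pinned to $0$, imposes no constraint. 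Chaining them turns the entire $\phi$-branch into a discard on $v$; re-expanding $v$ as the conjunction of the $\chi$-wires and applying the bimonoid law $\conjj \then \discard = \discard \tensor \discard$ distributes this into a separate $\discard$ on each $\chi$-wire, exactly matching the right-hand side.

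I expect the main obstacle to be the two absorption identities above: they straddle the black (copying) and white (logical) structures, so, unlike the purely bialgebraic rearrangement, they cannot be dispatched by matrix completeness and instead demand a careful chain of bimonoid and adjunction steps. A secondary, more routine difficulty is handling arbitrary numbers of shared and $\chi$-variables uniformly; this is absorbed by the spider theorem together with a straightforward induction on the number of variables, so that the whole argument reduces to the single-wire absorption identity established above.
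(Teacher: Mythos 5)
Your proposal is correct and follows essentially the same route as the paper: the paper's proof first applies the generalised mixed bialgebra law (Proposition~\ref{prop:gen-bialg-mix}) to conjoin the shared inputs into a single wire that is then copied --- exactly the rearrangement you justify via matrix completeness (Proposition~\ref{prop:matrix-completeness}) --- and then applies Proposition~\ref{prop:monoid-unit-dif-color} twice, whose mixed-colour equations are precisely your two absorption identities. The only cosmetic difference is that the paper pre-packages those identities as a standalone proposition, proved from the bimonoid and adjunction axioms just as you predict, rather than deriving them inline.
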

\begin{example}
Recall the two diagrams $\diag{\Gamma}$ and $\diag{\Phi}$ from Example~\ref{eg:sys-ineq-to-diag}, and we note that $\diag{\Phi}$ is in normal form while $\diag{\Gamma}$ is not. Yet we can apply Lemma~\ref{lem:eliminate-redun-ineq} (to the grey block) to rewrite $\diag{\Gamma}$ (on the lhs) into the normal form diagram $\diag{\Phi}$ (on the rhs): 
$\scalebox{0.6}{\tikzfig{eg-eliminate-redun-ineq-2}}$
\end{example}
%


\begin{lemma}[Normal form]
\label{lem:normal-form}
Every diagram $d \colon m \to 0$ is equal to a diagram in normal form.
\end{lemma}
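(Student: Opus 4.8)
The plan is to first reduce $d$ to pre-normal form and then to iteratively delete redundant clauses until the condition of Definition~\ref{def:normal-form} is satisfied. Concretely, by Lemma~\ref{lem:pre-normal-form} we may assume without loss of generality that $d$ is already in pre-normal form, with some representing $n\times m$ Boolean matrix $A$. Recall that $d$ fails to be in normal form exactly when there are two distinct rows $i\neq j$ with $A_{ik}\leq A_{jk}$ for every $k$; in the language of the associated clauses this says that the \negClause{} $\varphi_j$ generated by row $j$ contains the \negClause{} $\varphi_i$ generated by row $i$, so that $\varphi_j$ is redundant (a disjunction of strictly more negated literals is already implied by $\varphi_i$).

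First I would treat the degenerate case of two identical rows, $\varphi_i=\varphi_j$. Here the two rows encode the very same clause, and I would merge them into one using the idempotence available from the white (co)monoid structure (spider fusion together with the bimonoid laws of the C block), strictly decreasing the number of rows without changing the semantics. Once all duplicate rows have been removed, any remaining violation of the normal-form condition is a genuine strict containment $\varphi_i\subsetneq \varphi_j$. To eliminate it I would bring the offending pair of clauses into the shape required by Lemma~\ref{lem:eliminate-redun-ineq}: using Proposition~\ref{prop:matrix-completeness} we may freely reorder rows (permute the corresponding white nodes) and permute the left wires so that $\varphi_i$ and $\varphi_j$ occupy the position matched by the left-hand side of that lemma. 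Applying Lemma~\ref{lem:eliminate-redun-ineq} then deletes the larger \negClause{} $\varphi_j$, yielding a pre-normal form diagram equal to $d$ in $\axBARel$ whose representing matrix has one fewer row.

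Each such step strictly decreases the number of rows of the representing matrix, so after finitely many steps the process halts at a pre-normal form diagram $d'$ with $d=d'$ in $\axBARel$. At that point no two distinct rows of its matrix are comparable under the componentwise order, which is precisely the condition of Definition~\ref{def:normal-form}; hence $d'$ is in normal form, as required. Semantic invariance of every rewrite is automatic, since each step is an equality licensed by the lemmas; by Proposition~\ref{prop:gen-ineq-is-semantics} deleting a redundant \negClause{} leaves the set of satisfying assignments unchanged, so the resulting normal form represents the same downward closed set as $d$, consistently with Proposition~\ref{prop:one-one-clause-NF-dcset}.

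I expect the main obstacle to be the bookkeeping in the elimination step rather than any deep difficulty: Lemma~\ref{lem:eliminate-redun-ineq} is stated for one fixed diagrammatic configuration, so the real work is to argue that, up to the equalities granted by matrix completeness (Proposition~\ref{prop:matrix-completeness}) and the commutativity of the white structure, an arbitrary pair of comparable clauses can always be massaged into exactly that configuration. The subtle point worth stating carefully is the handling of equal rows, which is not a strict containment and therefore lies outside the scope of Lemma~\ref{lem:eliminate-redun-ineq}; this is why the duplicate-removal phase via idempotence must be performed first, before repeatedly applying the redundancy-elimination lemma.
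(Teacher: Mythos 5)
Your proposal is correct and follows essentially the same route as the paper: rewrite $d$ into pre-normal form via Lemma~\ref{lem:pre-normal-form}, then repeatedly apply Lemma~\ref{lem:eliminate-redun-ineq} to delete a redundant clause whenever two rows of the representing matrix are comparable, with termination guaranteed because each step strictly decreases the number of rows. Your separate idempotence-based treatment of duplicate rows is a reasonable extra precaution rather than a genuinely different argument (the paper applies the elimination lemma to any comparable pair without distinguishing equality from strict containment), and you are right, where the paper's own write-up slips on the index, that it is the \emph{larger} clause $\varphi_j$ that must be deleted.
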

%

%
%
Finally we are ready to prove the completeness part of Theorem~\ref{thm:sound-complete-monotone-rel-FBA}. 
We start with the following completeness statement (involving only equalities), which follows from Proposition~\ref{prop:one-one-clause-NF-dcset} and Lemma~\ref{lem:normal-form}. 
\vspace{-10pt}
\begin{lemma}
\label{lem:complete-eq}
For any diagrams $c, d \colon m \to n$, if $\intBProf{c} = \intBProf{d}$, then $c = d$.
\end{lemma}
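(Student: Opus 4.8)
The plan is to reduce everything to the case of diagrams with codomain $0$ and then invoke the canonicity of the normal form. First I would use the compact closed structure to bend the $n$ output wires of $c$ and $d$ onto the input side. By Lemma~\ref{lem:bend-input} this yields diagrams $\bar c, \bar d \colon m+n \to 0$, and the isomorphism $\catSyn[m,n]\iso\catSyn[m+n,0]$ is realised by composing with cups and caps; since these operations are provably mutually inverse (the snake equation), $c = d$ in $\axBARel$ if and only if $\bar c = \bar d$. The same bending is mirrored semantically: because $\intBProf{\cdot}$ sends the Frobenius generators (axioms (B9)-(B10)) to the corresponding cups and caps in $\catBProf$, it respects the compact closed structure, so $\intBProf{c}=\intBProf{d}$ if and only if $\intBProf{\bar c}=\intBProf{\bar d}$. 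It therefore suffices to prove the statement for diagrams of type $k \to 0$, writing $k=m+n$.

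Next I would put both diagrams into normal form. By Lemma~\ref{lem:normal-form} there are normal form diagrams $\hat c, \hat d \colon k \to 0$ with $\bar c = \hat c$ and $\bar d = \hat d$ in $\axBARel$. Soundness (the already-established `only if' direction of Theorem~\ref{thm:sound-complete-monotone-rel-FBA}) gives $\intBProf{\bar c}=\intBProf{\hat c}$ and $\intBProf{\bar d}=\intBProf{\hat d}$, whence $\intBProf{\hat c}=\intBProf{\hat d}$. Observe that the semantics of any $k\to 0$ diagram is a downward closed subset of $\Bool^k$, since a monotone relation into the one-element poset is exactly a down-set of its domain; so $\intBProf{\hat c}=\intBProf{\hat d}=:A$ is such a down-set.

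Finally I would appeal to the canonicity captured by Proposition~\ref{prop:one-one-clause-NF-dcset}. Under its 1-1 correspondence, the down-set $A$ determines a unique minimal set of $\neg$-clauses and a unique normal form diagram modulo equivalence up to commutativity. Both $\hat c$ and $\hat d$ are normal form diagrams with semantics $A$, so they must be equivalent up to commutativity, i.e. their representing matrices agree after a permutation of rows. It remains to observe that equivalence up to commutativity implies genuine equality in $\axBARel$: permuting the rows of the representing matrix only reorders the parallel $\neg$-clause subdiagrams, which is realised by a symmetry on the connecting wires and is thus a provable equality by the coherence of the symmetric monoidal structure. Hence $\hat c = \hat d$, so $\bar c = \hat c = \hat d = \bar d$, and bending the wires back gives $c = d$.

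Most of the difficulty has already been discharged by the earlier lemmas; in this final assembly the main obstacle is to check that the reduction to codomain $0$ is faithful on both sides simultaneously --- that bending wires preserves and reflects provable equality and semantic equality at once --- which rests on $\catSata$ and the interpretation into $\catBProf$ being compact closed in a compatible way. A secondary point requiring care is the passage from equivalence up to commutativity to on-the-nose equality in $\axBARel$, which must be justified purely by (co)commutativity and symmetric monoidal coherence rather than by any semantic appeal, on pain of circularity.
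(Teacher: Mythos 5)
Your proof is correct and follows essentially the same route as the paper's: reduce to codomain $0$ via Lemma~\ref{lem:bend-input}, rewrite both diagrams to normal form via Lemma~\ref{lem:normal-form}, and conclude from the uniqueness given by Proposition~\ref{prop:one-one-clause-NF-dcset}. If anything, you are more explicit than the paper on the two delicate points --- that wire-bending is faithful both syntactically and semantically, and that equivalence up to commutativity can be promoted to genuine equality in $\axBARel$ via symmetric monoidal coherence --- which the paper's proof leaves implicit.
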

\noindent We can then derive completeness from the following line of reasoning. 
\vspace{-10pt}
\begin{proof*}{Proof of Theorem~\ref{thm:sound-complete-monotone-rel-FBA}.} 
We assume that $\intBProf{c} \subseteq \intBProf{d}$. By Lemma~\ref{lem:syn-plus-leq}\emph{(ii)}, 
$\intBProf{\scalebox{0.8}{\tikzfig{plus-c-d}}} = \intBProf{c}$. By Lemma~\ref{lem:complete-eq}, this implies $\scalebox{0.8}{\tikzfig{plus-c-d}} = c$, thus $c \leq d$ according to Lemma~\ref{lem:syn-plus-leq} \emph{(i)}.
\end{proof*}


\section{Conclusion}\label{sec:conclusion}

An alternative approach to the problem of Boolean satisfiability is to consider the symmetric monoidal category of \emph{all} relations between tuples of Booleans. A close cousin of this category -- that of spans between finite sets with cardinality $2^n$ -- was axiomatised in~\cite{comfort2020zx}. Adding a single axiom (to enforce transitivity of the order on $\Bool$) should be sufficient to obtain a complete axiomatisation of the corresponding category of relations. The resulting calculus would certainly be expressive enough to encode SAT instances and derive their (un)satisfiability equationally. However, the equational theory is more complex (reflecting the larger semantic domain), with two distinct Frobenius structures. Moreover, the monotonicity of our semantics, while seemingly limiting at first, allows us to \emph{topologise} certain algebraic aspects of SAT: negation is represented as a change of the direction of wires, an essential feature of our calculus. These choices result in a tighter correspondence with formulas in CNF and a simpler equational theory, with close links to existing SAT-solving algorithms, as we have begun to investigate in Section~\ref{sec:sat-solving}. We hope to explore these connections further in future work, as well as the possibility of devising new heuristics for SAT-solving guided by the equational theory. 

On the logic programming side, 
the current paper extends work from \cite{gu2021functorial}, in which two of the authors of the current paper propose a functorial semantics for logic programs. 
Using the completeness result of the paper, we are able to go further to obtain a diagrammatic calculus to compute Herbrand semantics of logic programs, and show two programs are equivalent by proving the equivalence of two diagrams. 
There are two main directions of further exploration. 
The first is to extend the current framework to include more logical components, for example to consider logic programming admitting variables and/or negation. 
The second is to invent a diagrammatic calculus for probabilistic and weighted logic programming. 

\section*{Acknowledgement} Thanks to Guillaume Boisseau and Cole Comfort for helpful discussion on the completeness proof and  related~work. RP and FZ acknowledge support from EPSRC grant EP/V002376/1.

\bibliographystyle{entics}
\bibliography{refs}

\appendix
\section{Representing monotone functions as monotone relations}
\label{sec:mono-func-mono-rel}
Let us first restate the definition of a monotone relation representing a monotone function.
\begin{definition}
\label{def:represent-monotone-func}
Let $\lr{X, \leq_X}, \lr{Y, \leq_Y}$ be two partially ordered sets. We say a monotone function $f \colon X \to Y$ is \emph{represented by} a monotone relation $R \subseteq X \times Y$ if for arbitrary $x\in X$ and $y \in Y$, $(x, y) \in R$ if and only if $f(x) \leq y$. 
\end{definition}

\begin{proposition}
\label{prop:rel-rep-faithful}
If two monotone functions $f, g$ are represented by the same monotone relation $R$, then $f = g$. 
\end{proposition}
\begin{proof}
We prove by contradiction. Suppose $f \neq g$ and they are represented by the same relation $R$, then there exists $x\in X$ such that $f(x) \neq g(x)$. Yet by the definition of $R$, both $(x, f(x))$ and $(x, g(x))$ are in $R$. We claim that this entails $f(x) = g(x)$.

Since $R$ represents $g$, $(x, f(x)) \in R$ implies that there exists $x' \in X$ such that $x \leq x'$ and $g(x') \leq f(x)$. Note that $g$ is monotone, so $g(x) \leq g(x') \leq f(x)$. Similarly, $R$ represents $f$ and $f$ is monotone imply that $f(x) \leq g(x)$. Now that $\lr{Y, \leq_Y}$ is a partial order means that $\leq_Y$ is antisymmetric, so $f(x) = g(x)$, which contradicts the assumption that $f(x)  \neq g(x)$. Therefore $f = g$.
\end{proof}

\begin{proposition}
\label{prop:rep-comp}
The notion `being represented by' is closed under both sequential and parallel compositions:
\begin{enumerate}
    \item If $f \colon X \to Y$ and $g \colon Y \to Z$ are represented by $R\subseteq X \times Y$ and $S \subseteq Y \times Z$, respectively, then $f \then g \colon X \to Z$ is represented by $R \relThen S$. 
    \item If $f \colon X \to Y$ and $g \colon U \to V$ are represented by $R \subseteq X \times Y$ and $S \subseteq U \times V$, respectively, then $f \times g \colon X \times U \to Y \times V$ is represented by $R \times S \subseteq (X \times U) \times (Y \times V)$.
\end{enumerate}
\end{proposition}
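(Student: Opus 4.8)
The plan is to prove both clauses by directly unfolding the defining biconditional of \emph{represents} from Definition~\ref{def:represent-monotone-func}: a monotone relation represents a monotone function exactly when membership in the relation coincides with the graph-inequality $f(x)\leq y$. So in each case I fix arbitrary elements of the relevant product and show that membership in the composite relation is equivalent to the composite function's graph-inequality, reasoning by two inclusions.

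For clause (1), I would show that $(x,z)\in R\relThen S$ iff $(f\then g)(x)\leq z$, i.e. $g(f(x))\leq z$. For the forward direction, assume $(x,z)\in R\relThen S$; by definition of $\relThen$ there is a witness $y$ with $(x,y)\in R$ and $(y,z)\in S$. Since $R$ represents $f$, the first gives $f(x)\leq y$; since $S$ represents $g$, the second gives $g(y)\leq z$. Here is the one step requiring care: the witness $y$ need not equal $f(x)$, so I cannot conclude directly. Instead I invoke monotonicity of $g$: from $f(x)\leq y$ we get $g(f(x))\leq g(y)\leq z$, as desired. For the converse, assume $g(f(x))\leq z$ and supply the explicit witness $y\coloneqq f(x)$: reflexivity $f(x)\leq f(x)$ together with $R$ representing $f$ yields $(x,f(x))\in R$, and $g(f(x))\leq z$ together with $S$ representing $g$ yields $(f(x),z)\in S$, so $(x,z)\in R\relThen S$.

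For clause (2), I would use that the order on a product of posets is componentwise. Fixing $((x,u),(y,v))$, membership $((x,u),(y,v))\in R\times S$ is by definition the conjunction $(x,y)\in R$ and $(u,v)\in S$; since $R$ represents $f$ and $S$ represents $g$, this is equivalent to the conjunction $f(x)\leq y$ and $g(u)\leq v$, which is precisely $(f(x),g(u))\leq (y,v)$, i.e. $(f\times g)(x,u)\leq (y,v)$. Hence $R\times S$ represents $f\times g$. This clause involves no real content beyond the componentwise definition of the product order.

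The only genuine obstacle is the existential quantifier in the definition of $\relThen$ for clause (1): one must resist the temptation to assume the witness $y$ is $f(x)$ and instead route the argument through monotonicity of $g$. Everything else is a routine unfolding of definitions, and I would keep the write-up short, emphasising exactly that point.
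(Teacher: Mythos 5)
Your proposal is correct and matches the paper's own proof essentially step for step: clause (1) uses the same witness $y \coloneqq f(x)$ for the converse and the same appeal to monotonicity of $g$ for the forward direction, and clause (2) is the componentwise unfolding that the paper dismisses as immediate. The only difference is presentational—you spell out clause (2) explicitly, which the paper leaves to the reader.
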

\begin{proof}
For (i), we show that for arbitrary $x \in X$ and $z\in Z$, $(x, z) \in R \relThen S$ if and only if $(f\then g) (x) \leq z$. For the `only if' direction, suppose $(x, z) \in R \relThen S$, then there exists $y \in Y$ such that $(x, y) \in R$ and $(y, z) \in S$. Since $R$ and $S$ represent $f$ and $g$ respectively, we know that $f(x) \leq y$ and $g(y) \leq z$, so by the monotonicity of $g$, $(f\then g)(x) = g(f(x)) \leq g(y) \leq z$ holds. For the `if' direction, suppose $(f\then g)(x) \leq z$, then $f(x) \leq f(x)$ implies that $(x, f(x)) \in R$. Also, $g(f(x)) \leq z$ implies that $(f(x), z) \in S$. So $(x, z) \in R\relThen S$. 

(ii) is immediate from the fact that both $f \times g$ and $R \times S$ are defined pointwise. 
\end{proof}

\section{Some useful diagrammatic (in)equations}\label{sec:derived-laws}
This section contains some (in)equations in $\axBARel$. While some of them appear in several related monoidal theories (\emph{e.g.}, the generalised bimonoid equations of Proposition~\ref{prop:gen-bialg}), the equation derived in Proposition~\ref{prop:loop} is a distinguishing feature of $\axBARel$, as far as we know. 
\begin{proposition}
\label{prop:monoid-unit-dif-color}
\tikzfig{unit-monoid-dif-color}
\end{proposition}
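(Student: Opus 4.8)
The plan is to derive this equation syntactically from the axioms of $\axBARel$ in Figure~\ref{fig:ineq-axiom}, rather than appealing to the completeness half of Theorem~\ref{thm:sound-complete-monotone-rel-FBA}: since this proposition is one of the elementary building blocks reused elsewhere, a direct derivation avoids any risk of circularity. Semantically the content is the absorption law $1 \lor 1 = 1$: the white unit $\unit$ denotes the state $1 \in \Bool$, the black monoid multiplication $\cocopy$ denotes join, so feeding two copies of $\unit$ into $\cocopy$ must return $\unit$. Thus the target identity is $(\unit\tensor\unit)\then\cocopy = \unit$ (together with its colour-swapped dual $(\codiscard\tensor\codiscard)\then\conjj = \codiscard$, the law $0\land 0 = 0$). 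My first move would be to rewrite the left-hand side so that the two white units are manufactured from a single one.

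First I would invoke the mixed white/black bimonoid laws of block~C: in the bimonoid $(\copyy,\discard,\unit,\conjj)$ the white unit is copied by the black comultiplication, i.e. $\unit\then\copyy = \unit\tensor\unit$. Substituting this into the left-hand side turns $(\unit\tensor\unit)\then\cocopy$ into $\unit\then\copyy\then\cocopy$, so the whole problem reduces to understanding the black composite $\copyy\then\cocopy \colon 1 \to 1$. The crux is then the identity $\copyy\then\cocopy = \id_1$, which is semantically clear since the copy-then-merge loop collapses. I expect this to be available directly from the black \emph{degenerate bimonoid} axioms of block~A (the loop-removal law of Figure~\ref{fig:undir-black-ax}), in which case the proof is immediate: $\unit\then\copyy\then\cocopy = \unit\then\id_1 = \unit$, using only two rewrites.

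The hard part, and the step I would check most carefully, is whether $\copyy\then\cocopy = \id_1$ holds as an equality or must itself be derived. If only the cartesian-bicategory adjunction $\copyy\dashv\cocopy$ of block~D is at hand, it supplies just $\id_1 \leq \copyy\then\cocopy$, hence one inclusion $\unit \leq (\unit\tensor\unit)\then\cocopy$ after pre-composing with $\unit$ and using poset-enrichment. The reverse inclusion $(\unit\tensor\unit)\then\cocopy \leq \unit$ would then be the real work: I would obtain it by combining the adjunction with the second, dual bimonoid $(\coconj,\counit,\codiscard,\cocopy)$ of block~C, using $\cocopy\then\counit = \counit\tensor\counit$ and the orthogonality of the white unit and counit to squeeze $\copyy\then\cocopy$ down to the identity, and then conclude equality by antisymmetry of the external ordering on morphisms. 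The symmetric dual law is handled by the same argument with the roles of the colours (and of the A/C blocks) interchanged.
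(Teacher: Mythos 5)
Your instinct to give a direct equational derivation rather than lean on completeness matches the paper, and the easy half of your first equation is fine: $\unit \then \copyy = \unit \tensor \unit$ is indeed a block-C axiom, and $\id \leq \copyy \then \cocopy$ is one of the D-block adjunction inequalities. But the argument has two genuine gaps, and it likely proves less than the proposition states.

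The main gap is the symmetry you invoke for the dual equation. Colour swap is \emph{not} a symmetry of $\axBARel$: the black generators form a special (idempotent) bimonoid, whereas the white generators form a Frobenius algebra that is deliberately \emph{not} special. The colour-swap of your crux $\copyy \then \cocopy = \id$ is $\coconj \then \conjj = \id$, and this is false in the theory: by the derived law (O) --- a consequence of the loop axiom (A14), which is also \emph{not} the same thing as the specialness you attribute to block A --- one has $\coconj \then \conjj = \discard \then \codiscard$, the totally disconnected relation; the paper explicitly remarks that this failure of specialness is what distinguishes its Frobenius structure from those in the literature. This is precisely why the paper proves the first \emph{two} equations by separate derivations and gets the remaining two by \emph{reflecting} every diagram in those proofs: mirror symmetry is a symmetry of the axioms, colour swap is not. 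Your dual equation can be repaired, but by a different route, e.g. $(\codiscard \tensor \codiscard) \then \conjj = \codiscard \then \copyy \then \conjj = \codiscard$, using the black bimonoid law $\codiscard \then \copyy = \codiscard \tensor \codiscard$ of block A together with $\copyy \then \conjj = \id$, i.e. (D2) and its derivable converse (Proposition~\ref{prop:idempotent-equalities}). Relatedly, your fallback for the hard inequality $\copyy \then \cocopy \leq \id$ does not work: post-composing with $\counit$ and using $\cocopy \then \counit = \counit \tensor \counit$ only tells you how the composite acts on white costates, there is no sound rule for cancelling that costate, and the fact you would need at that point, $\copyy \then (\counit \tensor \counit) = \counit$, is one of the equations of the very family being proved. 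The paper's derivation (the converse of (D5), Proposition~\ref{prop:idempotent-equalities}) stays entirely in the black fragment: insert the unit $\id \leq \discard \then \codiscard$ of the adjunction $\discard \dashv \codiscard$ on the wire joining $\copyy$ to $\cocopy$, then collapse with black counitality and unitality, giving $\copyy \then \cocopy \leq \copyy \then (\id \tensor (\discard \then \codiscard)) \then \cocopy = \id$.

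Finally, judging from how the proposition is used (steps two and three of the proof of Lemma~\ref{lem:eliminate-redun-ineq}, where the remaining input of the monoid is an arbitrary wire, not another unit), the equations at stake are absorption laws such as $(\unit \tensor \id) \then \cocopy = \discard \then \unit$ and its mirror $\copyy \then (\counit \tensor \id) = \counit \then \codiscard$, rather than your unit-only identities. The latter follow from the former by plugging $\unit$ into the free wire and applying (C4), but not conversely; so even with the gaps above repaired, your argument would establish only a special case, one that is too weak for the redundant-clause elimination this proposition exists to support.
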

\begin{proof}
We only prove the first two equations, and the other two follow from a similar proof by reflecting all the diagrams in the proof.
For the first equation, 
    \[
        \tikzfig{unit-monoid-dif-color-1}
    \]
For the second equation, 
    \[
        \tikzfig{unit-monoid-dif-color-2}
    \]
\end{proof}
\begin{proposition}
\label{prop:loop}
\begin{equation*}
    \tikzfig{lem-loop-1}
\end{equation*}
\end{proposition}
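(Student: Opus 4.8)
The plan is to reduce the loop on the left-hand side of \tikzfig{lem-loop-1} to the simpler diagram on the right by exploiting the compact closed structure of $\catSata$ together with the bimonoid interaction laws. First I would unfold the cup and cap that close the loop into their defining white generators, using the abbreviations $\tikzfig{cup-white-1}$ and $\tikzfig{cap-white-1}$ introduced after the Frobenius axioms. This turns the loop into a single \emph{connected} configuration of white nodes (possibly with a black generator threaded through it), at which point the spider theorem (Proposition~\ref{thm:spider}) lets me collapse all connected white nodes into one spider, reducing the number of cases I have to track down to the wires dangling on the boundary.

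Next I would slide the black generators into position using the white--black and black--white bimonoid laws of the C block, rearranging so that the feedback wire passes through exactly one adjacent pair of generators; the residual self-loop is then eliminated by the loop-removal behaviour of the A block together with equation~(O). Because the completeness theorem is already available, I would first verify the target equality \emph{semantically}, computing $\intBProf{\cdot}$ of both sides as monotone relations over $\Bool$, so that Theorem~\ref{thm:sound-complete-monotone-rel-FBA} guarantees a derivation exists and, crucially, confirms the exact shape of the right-hand side before I commit to the syntactic chain of fusion, bimonoid, and loop-removal steps.

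The main obstacle I anticipate is the bookkeeping forced by the fact that the white Frobenius structure is \emph{not} special: unlike the familiar special or extraspecial cases, a closed white loop here collapses to a genuine scalar (the $\tikzfig{wunit-wcounit}$ data) rather than vanishing into the identity or into $\counit\,\unit$, so I must be careful via equation~(O) not to silently discard it. Getting the interaction between~(O) and the bimonoid laws in the right order---so that every black node is already in place by the time the loop is removed---is the delicate point. Once the principal orientation is settled, any symmetric variant should follow by reflecting the entire derivation, exactly as in the proof of Proposition~\ref{prop:monoid-unit-dif-color}.
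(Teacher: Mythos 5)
Your proposal contains two circular dependencies that invalidate it as a proof of Proposition~\ref{prop:loop} within the paper's development. First, you propose to eliminate the residual self-loop using ``the loop-removal behaviour of the A block together with equation~(O)''. But equation~(O) is exactly Corollary~\ref{cor:loop-removal}, which the paper \emph{derives from} Proposition~\ref{prop:loop}; this proposition is the lemma that makes (O) available in the first place, so (O) cannot be an ingredient in its proof. The only loop-removal principle you may assume is the axiom (A14) itself. Second, you invoke Theorem~\ref{thm:sound-complete-monotone-rel-FBA} to ``guarantee a derivation exists''. Completeness is not available at this point: Proposition~\ref{prop:loop} sits among the derived laws of Appendix~\ref{sec:derived-laws} on which the completeness proof relies --- in particular, loop removal is what the proof of Lemma~\ref{lem:pre-normal-form} uses to delete the highlighted loops when rewriting to pre-normal form. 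Computing $\intBProf{\cdot}$ of both sides only establishes that the equation is \emph{sound}; without completeness (which you may not yet assume) it says nothing about derivability, and derivability is the entire content of the proposition.

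There is also a semantic misreading behind your anticipated ``main obstacle''. The white loop does not collapse to the scalar $\tikzfig{wunit-wcounit}$, which denotes the empty (unsatisfiable) relation. On the contrary, in this theory the loop denotes the \emph{full} relation --- reflecting the SAT reading that a clause containing both $x$ and $\lnot x$ is always satisfiable and may be dropped --- which is precisely why the paper's version of (O) sets the loop equal to the black composite $\discard\;\codiscard$ rather than to the identity or to $\counit\;\unit$ as in the cited literature. Note also that Proposition~\ref{thm:spider} applies only to connected diagrams built from the white generators alone, so it cannot be applied wholesale once a black generator is threaded through the loop; and the spider normal form still contains loops, so the spider theorem by itself defers, rather than solves, the difficulty. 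The paper's own proof is far more elementary than your plan: it is a short, direct chain of (in)equalities applying the axioms of Figure~\ref{fig:ineq-axiom}, with (A14) doing the crucial work in the second inequality --- no spider theorem, no (O), and no appeal to completeness.
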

\begin{proof}
Crucially we need (A14) for the second inequality.
    \begin{equation}
        \tikzfig{lem-loop-2}
    \end{equation}
\end{proof}
\begin{corollary}\label{cor:loop-removal}
\begin{equation*}
\tikzfig{not-special}
\end{equation*}
\end{corollary}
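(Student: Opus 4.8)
The plan is to read (O) as an equality in the poset-enriched category $\catSata$, and to establish it by proving the two comparisons $\leq_{\catSata}$ and $\geq_{\catSata}$ separately, then appealing to antisymmetry of the order on hom-sets. One of these inequalities is exactly what Proposition~\ref{prop:loop} delivers: it compares the looped white configuration on the left-hand side of (O) with the target on the right, and its proof is precisely where axiom (A14) is used to absorb the loop. I would invoke that proposition as a black box for this direction, so that the genuinely new work is confined to the opposite comparison.

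For the reverse inequality I would exploit the adjunctions of the D block, whose defining inequations have the form $id \leq f\then g$, together with Proposition~\ref{prop:monoid-unit-dif-color}, which lets me interchange white and black units. Because the cup and cap are mere syntactic sugar for white (co)multiplication composed with unit and counit, the loop in (O) can be unbent using the snake equation and the Frobenius laws (B9)-(B10); this rewrites the configuration into a shape on which the relevant unit inequation applies and supplies the missing comparison. Combining the two directions via antisymmetry then yields the equality (O). If it turns out that Proposition~\ref{prop:loop} already packages both inequalities, the corollary is immediate and only spider fusion (Proposition~\ref{thm:spider}) is needed to present the result in the stated shape.

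The step I expect to be the main obstacle is the compact-closed bookkeeping. The loop is assembled from cups and caps, and one must track carefully which white node absorbs it under spider fusion and (A14), keeping the dangling wires in the correct order and avoiding spurious symmetries introduced when bending wires. Once the loop has been correctly localised at a single white node, both inequalities become routine applications of the unit and Frobenius laws, and antisymmetry of $\leq_{\catSata}$ closes the argument.
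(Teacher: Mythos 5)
Your proposal is correct and follows essentially the same route as the paper: the corollary is obtained from Proposition~\ref{prop:loop}, which carries all of the (A14)-dependent work, completed by the opposite comparison coming from the order-theoretic axioms (the D-block unit inequation $\id \leq \discard \then \codiscard$ together with absorption of the white node into the black (co)units, cf.\ Proposition~\ref{prop:monoid-unit-dif-color}) and antisymmetry of $\leq_{\catSata}$. The only inessential difference is that your sketch of the remaining direction invokes snake/Frobenius unbending, which is not actually needed for that step.
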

\begin{proof}
\tikzfig{not-special-proof}
\end{proof}
\begin{proposition}\label{prop:conj-dist-disj}
The following distributivity equation holds:
\begin{equation}
    \tikzfig{lem-conj-dist-disj}
\end{equation}
\end{proposition}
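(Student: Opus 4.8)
The plan is to recognise the statement as the diagrammatic form of the lattice distributive law. Under the interpretation of Definition~\ref{def:undir-sem-int}, the white node $\conjj$ denotes meet ($\{((x_1,x_2),y)\mid x_1\land x_2\leq y\}$) and the black node $\cocopy$ denotes join ($\{((x_1,x_2),y)\mid x_1\lor x_2\leq y\}$), so the two sides of the displayed equation are the diagrams for $a\land(b\lor c)$ and $(a\land b)\lor(a\land c)$ as morphisms $3\to 1$. Since $\axBARel$ is poset-enriched and, as noted in the footnote to Definition~\ref{def:undir-syn-cat}, we reason about the ordering externally, I would prove the equation by establishing the two inclusions separately.

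First I would dispatch the inclusion $(a\land b)\lor(a\land c)\leq a\land(b\lor c)$, which holds in \emph{every} lattice and should therefore follow purely from the (co)monoid axioms of the A block and the adjunctions of the D block. Concretely, the projection inequalities $a\land b\leq a$ and $a\land b\leq b\leq b\lor c$ --- each an instance of the adjunction 2-cells relating $\conjj$, $\copyy$ and $\cocopy$ --- give $a\land b\leq a\land(b\lor c)$, and symmetrically for $a\land c$; taking the join of the two then yields the inclusion. Diagrammatically this amounts to inserting discards/units via the adjunction inequalities and fusing the results with the bimonoid and spider laws, so it is routine bookkeeping rather than a real difficulty.

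The reverse inclusion $a\land(b\lor c)\leq (a\land b)\lor(a\land c)$ is the genuine content, and the step I expect to be the main obstacle: it is exactly the direction that fails for arbitrary lattices, so the derivation must invoke the axioms that force $\Bool$ to be Boolean rather than merely a lattice. My intended route is to copy the input $a$ using the black comonoid $\copyy$ and push the meet across the two branches of the join by means of the white--black bimonoid law (C block), then recombine the branches with $\cocopy$; the place where genuine distributivity (equivalently, complementation) must enter is the Frobenius axioms (B9)--(B10), which the discussion after Theorem~\ref{thm:sound-complete-monotone-rel-FBA} identifies as upgrading the lattice structure to a \emph{complemented} distributive one. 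I would then clean up the resulting diagram using spider fusion and the loop-removal equation~(O).

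Finding the precise sequence of rewrites for this hard inclusion is the crux, since one must deploy exactly those axioms that distinguish $\Bool$ from a generic lattice, and no purely A/C/D manipulation can suffice. Should a direct derivation prove unwieldy, a safe fallback is to verify the equation semantically --- both sides denote $\{((a,b,c),w)\mid a\land(b\lor c)\leq w\}$, using that $\Bool$ is distributive --- and then appeal to completeness (Theorem~\ref{thm:sound-complete-monotone-rel-FBA}). This is legitimate here precisely because this low-level auxiliary proposition is not itself used in establishing completeness, so no circularity arises.
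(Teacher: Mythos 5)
Your semantic reading of the equation is the natural one ($\conjj$ as meet, $\cocopy$ as join, both sides denoting $\{((a,b,c),w)\mid a\land(b\lor c)\leq w\}=\{((a,b,c),w)\mid (a\land b)\lor(a\land c)\leq w\}$), and your instinct that the inclusion $a\land(b\lor c)\leq(a\land b)\lor(a\land c)$ is where the Boolean-specific axioms must enter agrees with the paper's own remark that the D axioms only give a lattice while the Frobenius axioms (B9)--(B10) upgrade it to a distributive one. But your primary route stops exactly there: ``copy $a$, push the meet across the join via the C block, let Frobenius enter somewhere, clean up with spiders and (O)'' is a description of the difficulty, not a derivation, and you say yourself that the precise rewrite sequence is not found. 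The paper's proof is precisely such an explicit chain of axiomatic rewrites, presented as a single displayed derivation rather than as two separate inequalities; since producing that chain is the entire content of the proposition, your primary proposal has a genuine gap at its only nontrivial step.

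The fallback does close the gap, but by a genuinely different argument, and its legitimacy is more delicate than you acknowledge. The semantic computation is correct, and Proposition~\ref{prop:conj-dist-disj} is indeed never invoked anywhere in the completeness development, so appealing to Theorem~\ref{thm:sound-complete-monotone-rel-FBA} is not circular --- but this is a fact about the paper's dependency structure that you could not have verified blind; you assert it rather than establish it. The same move would be fatal for the neighbouring derived laws in the same appendix (Propositions~\ref{prop:monoid-unit-dif-color}, \ref{prop:idempotent-equalities}, \ref{prop:gen-bialg}, \ref{prop:gen-bialg-mix}), which \emph{are} used inside the proofs of Lemmas~\ref{lem:pre-normal-form} and~\ref{lem:eliminate-redun-ineq}. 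So the fallback buys a short, correct proof that the equation holds in $\catSata$, at the cost of the very thing the proposition is meant to exhibit --- derivability of the law by explicit equational reasoning in $\axBARel$ --- and it survives only because this particular law happens to sit outside the completeness dependency chain.
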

\begin{proof}
\begin{equation}
    \tikzfig{proof-conj-dist-disj-2}
\end{equation}
\end{proof}
\begin{proposition}\label{prop:idempotent-equalities}
The converse inequalities to (D2), (D5), and (D10) are derivable.
\end{proposition}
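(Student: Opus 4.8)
The plan is to read (D2), (D5), and (D10) as three of the adjunction inequalities of Figure~\ref{fig:ineq-axiom}, namely the counit of $\conjj\dashv\copyy$, the unit of $\copyy\dashv\cocopy$, and the counit of $\cocopy\dashv\coconj$, i.e. $\copyy\then\conjj\leq\id_1$, $\id_1\leq\copyy\then\cocopy$, and $\coconj\then\cocopy\leq\id_1$. Their converses assert that these three composites are in fact equal to $\id_1$, which semantically are exactly the idempotency laws: for instance $\intBProf{\copyy\then\conjj}=\{(x,y)\mid x\land x\leq y\}=\{(x,y)\mid x\leq y\}=\intBProf{\id_1}$, and similarly $\copyy\then\cocopy$ and $\coconj\then\cocopy$ encode $x\lor x=x$. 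Since one inequality in each case is already an axiom, it suffices to derive the reverse inequality in $\axBARel$.

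All three derivations follow one template, which I illustrate for $\id_1\leq\copyy\then\conjj$. First, unitality of the white monoid $(\conjj,\unit)$ rewrites the identity as $\id_1=(\unit\tensor\id_1)\then\conjj$. Second --- and this is the crux --- I bound the inserted unit below the comultiplication, $\unit\tensor\id_1\leq\copyy$. This intermediate inequality is obtained by sandwiching: using counitality of the black comonoid $(\copyy,\discard)$ and bifunctoriality of $\tensor$, one has the genuine equality $\copyy\then((\discard\then\unit)\tensor\id_1)=\copyy\then(\discard\tensor\id_1)\then(\unit\tensor\id_1)=\unit\tensor\id_1$; then the unit-level adjunction axiom $\discard\then\unit\leq\id_1$ (from $\unit\dashv\discard$) together with monotonicity of composition gives $\unit\tensor\id_1=\copyy\then((\discard\then\unit)\tensor\id_1)\leq\copyy$. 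Composing on the right with $\conjj$ yields $\id_1=(\unit\tensor\id_1)\then\conjj\leq\copyy\then\conjj$, as required.

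The remaining two converses are the evident dualisations. For $\copyy\then\cocopy\leq\id_1$ one uses unitality of the black monoid $(\cocopy,\codiscard)$ to write $\id_1=(\id_1\tensor\codiscard)\then\cocopy$, derives $\copyy\leq\id_1\tensor\codiscard$ from black-comonoid counitality and the axiom $\id_1\leq\discard\then\codiscard$ (from $\discard\dashv\codiscard$), and composes with $\cocopy$. For $\id_1\leq\coconj\then\cocopy$ one writes $\id_1=(\codiscard\tensor\id_1)\then\cocopy$, derives $\codiscard\tensor\id_1\leq\coconj$ from counitality of the white comonoid $(\coconj,\counit)$ and the axiom $\counit\then\codiscard\leq\id_1$ (from $\codiscard\dashv\counit$), and composes. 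In each case the same three ingredients appear: a (co)unitality law, one auxiliary adjunction inequality, and bifunctoriality plus monotonicity.

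The main obstacle is not any single derivation --- each is a short zig-zag --- but getting the bookkeeping of the sandwiching step right: one must pick the correct auxiliary adjunction inequality and the matching (co)unitality law so that bifunctoriality collapses the composite onto the desired (co)unit-decorated wire. I would also emphasise that the three object-level converses reduce to three \emph{different} unit-level adjunction axioms, drawn from the chain $\unit\dashv\discard\dashv\codiscard\dashv\counit$, all distinct from (D2), (D5), (D10); this is what guarantees that no circularity creeps into the argument.
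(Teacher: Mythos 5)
Your proof is correct and takes essentially the same route as the paper's: each converse inequality is obtained by a sandwich argument combining the relevant monoid unitality and comonoid counitality laws with a single auxiliary adjunction inequality drawn from the chain $\unit\dashv\discard\dashv\codiscard\dashv\counit$, together with monotonicity of composition, exactly as in the paper's diagrammatic derivation for (D2) and its ``replace (D3) by (D7)'' recipe for (D5) and (D10). The only differences are expository --- you spell out all three cases where the paper draws one diagram and declares the others similar --- and your identification of the auxiliary axiom for the third case, namely $\counit\then\codiscard\leq\id_1$ (the counit of $\codiscard\dashv\counit$), is precisely the inequality that makes the stated unitality of $(\cocopy,\codiscard)$ and counitality of $(\coconj,\counit)$ bookkeeping go through.
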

\begin{proof}
For (D2) we can derive the converse inequality as follows:
\tikzfig{bcomult-wmult-equal-id}
The other two are similar. For (D5) replace (D3) by (D7) and apply unitality (A6) of $\cocopy,\codiscard$ and counitality (A6) of $\copyy,\discard$. For (D10) replace (D3) by (D7) and apply unitality (A6) of $\cocopy,\codiscard$ and counitality (B2) of $\coconj,\counit$. 
\end{proof}

\begin{proposition}
\label{prop:gen-bialg}
The following general bimonoid equation holds for all natural numbers $k, n$.
\begin{equation}
\label{eq:gen-bialg}
    \tikzfig{gen-bialg}
\end{equation}
\end{proposition}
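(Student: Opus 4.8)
The plan is to prove the general bialgebra equation~\eqref{eq:gen-bialg} by a double induction on the arities $k$ and $n$, using only the bialgebra laws of Figure~\ref{fig:undir-BW-ax} together with the (co)associativity, (co)unitality, and (co)commutativity laws (A1)--(A8) of the black structure. The left-hand side is a $k$-ary multiplication built from $\conjj$ composed with an $n$-ary comultiplication built from $\copyy$, while the right-hand side is the complete bipartite diagram in which each of the $k$ input wires is copied into $n$ branches and each of the $n$ output wires multiplies together one branch coming from every input.

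First I would clear the degenerate base cases. When $k = 0$ the multiplication is $\unit$, and the claim reduces to $\unit$ followed by an $n$-ary copy being equal to $n$ parallel copies of $\unit$; this follows by an inner induction on $n$ from the unit/comultiplication compatibility $\unit \then \copyy = \unit \tensor \unit$ of the C block and coassociativity (A1). The case $n = 0$ is exactly dual, using the counit/multiplication law $\conjj \then \discard = \discard \tensor \discard$, and the case $k = n = 1$ is the identity on both sides.

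For the inductive step I would peel off a single wire on the comultiplication side, writing the $n$-ary copy as one $\copyy$ whose two outputs feed an $(n-1)$-ary copy and a bare wire. Sliding this outermost $\copyy$ through the $k$-ary multiplication is precisely the core bialgebra law of the C block: it duplicates the $k$-ary multiplication into two copies, each now fed by a fresh copy of every input wire. Applying the induction hypothesis at the instances $(k, n-1)$ and $(k, 1)$ rewrites the two branches into complete bipartite form, after which a reorganisation using (co)associativity and (co)commutativity of $\copyy$ gathers the duplicated copies on each input wire and permutes everything into the target order.

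The main obstacle will be the wire bookkeeping in this final reorganisation: each input wire accumulates copies produced at different stages of the induction, and these must be fused and the whole $k \times n$ grid of connections rebracketed to match the right-hand side. The manipulation is routine in principle -- it only invokes naturality of copying over the monoid together with laws (A1)--(A8) -- but tracking the interleaving of wires cleanly is where the diagrammatic argument demands care. As a sanity check one may note that both sides ought to denote the all-ones $n \times k$ Boolean matrix, consistent with the matrix reading of these diagrams discussed around Proposition~\ref{prop:matrix-completeness}.
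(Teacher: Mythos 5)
Your overall strategy (a nested induction driven by the C-block bialgebra law together with (A1)--(A8)) is the same as the paper's, but there is a genuine gap at the heart of your inductive step. You justify sliding the peeled-off $\copyy$ through the $k$-ary multiplication by saying this ``is precisely the core bialgebra law of the C block''. That axiom is only the \emph{binary} case: it commutes one $\copyy$ past one $\conjj$. What your step actually needs is the $(k,2)$ instance of the proposition itself --- a single $\copyy$ commuting past an entire tree of $\conjj$'s, duplicating the whole $k$-ary multiplication --- and for $k \geq 3$ this is not an axiom but a statement requiring its own induction on $k$ (iterating the bialgebra law through the tree, with the unit law $\unit \then \copyy = \unit \tensor \unit$ and the trivial identity case covering $k = 0, 1$). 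Moreover, inside your induction on $n$ this instance is not supplied by your induction hypothesis at the point where it is first needed: to prove the case $n = 2$ you would be invoking exactly the case $(k,2)$ you are trying to establish, so the argument as described is circular there, and for larger $n$ the $(k,2)$ instance is only available if the case $n = 2$ has been settled separately beforehand.

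This missing piece is precisely where the paper does its real work: after the bases $n = 0, 1$ (for arbitrary $k$), it singles out $n = 2$ and runs an inner induction on $k$ --- bases $k = 0, 1$ as before, $k = 2$ being the bialgebra axiom, then the step $k \to k+1$ --- and only then performs the outer induction step on $n$, which can now legitimately slide a $\copyy$ across the $k$-ary multiplication. If you add this inner induction on $k$ as a lemma (equivalently, as the $n = 2$ base of a properly ordered double induction), the rest of your argument --- the peeling, the application of the hypothesis at $(k, n-1)$, and the wire bookkeeping via (A1)--(A8) --- goes through and coincides with the paper's proof.
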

\begin{proof}
For readability, in the rest of the proof we will not use the dashed wires but instead use $\vdots$ and natural numbers to intuitively express that `there are $k$ duplicates of certain pattern'. 

We prove by induction on $n$, and leave $k$ arbitrary. 
When $n = 0$ and $n = 1$, we have
\begin{equation*}
    \tikzfig{gen-bialg-proof-1}
\end{equation*}
When $n = 2$, we make an induction on $k$. 
The base case for $k = 0, 1$ are exactly the same as the previous proof for $n = 0, 1$. 
For $k = 2$, this is exactly the bimonoid axiom. 
Now suppose it holds for $k \geq 2$, and we consider $k + 1$:
\begin{equation*}
    \tikzfig{gen-bialg-proof-3}
\end{equation*}

We continue with the induction proof on $n$. Suppose equation \eqref{eq:gen-bialg} holds for $n$, then for $n + 1$, 
\begin{equation*}
    \tikzfig{gen-bialg-proof-2} 
\end{equation*}
\end{proof}

\begin{proposition}
\label{prop:gen-bialg-mix}
The following bimonoid equation between $\coconj$ and $\cocopy$ (resp. $\conjj$ and $\copyy$) holds for arbitrary natural numbers $k, n$ 
\begin{equation}
\label{eq:gen-bialg-BW}
    \tikzfig{gen-bialg-BW}
\end{equation}
\end{proposition}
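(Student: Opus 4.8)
The plan is to mirror the double induction used in the proof of Proposition~\ref{prop:gen-bialg}. The key observation is that the argument there is purely formal: it invokes only the laws of a bimonoid --- the basic bialgebra axiom together with (co)associativity, (co)unitality and (co)commutativity of the monoid and comonoid involved. By the C block of $\axBARel$, the tuple $(\coconj, \counit, \codiscard, \cocopy)$ forms a bimonoid (the white comonoid $\coconj,\counit$ paired with the black monoid $\codiscard,\cocopy$), exactly as $(\copyy, \discard, \unit, \conjj)$ does. Hence the same sequence of rewrites that established \eqref{eq:gen-bialg} applies after substituting the relevant generators, and it suffices to replay that proof with the appropriate cross-color bialgebra axiom in the role of the base case.

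Concretely, I would argue by induction on $n$, leaving $k$ arbitrary. The cases $n=0$ and $n=1$ are handled by the unit/counit compatibility laws of the bimonoid (the $n=0$ case being degenerate, the $n=1$ case an identity). For $n=2$ I would run an inner induction on $k$: the cases $k=0,1$ coincide with the base cases just treated, $k=2$ is precisely the cross-color bialgebra axiom from the C block, and the step $k \to k+1$ uses the (co)associativity and (co)commutativity of $\cocopy$ and $\coconj$ together with the $k=2$ instance, exactly as in Proposition~\ref{prop:gen-bialg}. Finally, the outer step $n \to n+1$ follows from the coassociativity of $\coconj$ and the already established $n=2$ case.

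For the companion equation between $\conjj$ and $\copyy$ (the ``resp.'' statement), I would simply re-run the identical induction for the bimonoid $(\copyy, \discard, \unit, \conjj)$, whose base case is now the other cross-color bialgebra axiom of the C block. Since nothing in the template was specific to a particular bimonoid, both equations follow from the same scheme; one need not appeal to any reflection symmetry of $\axBARel$ (which, given the adjunction-based black structure, would require care anyway).

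The only delicate part is bookkeeping rather than algebra: one must track the grey wire-permutation blocks carefully so that, at each inductive step, the complete bipartite crossing pattern on the right-hand side lines up correctly with the newly introduced wire. Following the convention in Proposition~\ref{prop:gen-bialg}, I would suppress the dashed wires in favour of $\vdots$ and explicit counts to keep this manageable. No algebraic identity beyond the bimonoid laws is needed, which is exactly why I expect the wire-routing bookkeeping --- and not any conceptual difficulty --- to be the main obstacle.
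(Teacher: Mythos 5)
Your proposal is correct and matches the paper's proof, which likewise observes that the argument for Proposition~\ref{prop:gen-bialg} uses only the bimonoid laws and therefore transfers verbatim to the pairs $(\copyy, \conjj)$ and $(\cocopy, \coconj)$, whose bimonoid structure is guaranteed by the B and C axioms. Your explicit replay of the double induction (outer on $n$, inner on $k$, with the cross-colour bialgebra axiom as the $k=n=2$ base case) is exactly the intended argument, just spelled out in more detail than the paper's one-line reference.
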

\begin{proof}
The proof is the same as that for Proposition~\ref{prop:gen-bialg}. Note that there only properties of $(\copyy, \cocopy)$ forms a bimonoid is used, which also holds for $(\copyy, \conjj)$ and $(\cocopy, \coconj)$, see the B and C axioms in Figure~\eqref{fig:ineq-axiom}. 
\end{proof}


\section{Matrices, diagrammatically}
\label{sec:matrix-diagrams}
In this section we summarise the diagrammatic representation of Boolean matrices as $\catSata$ diagrams. A similar representation appears in many related theories (\emph{e.g.}, in Graphical Linear Algebra, see~\cite[Section 3.2]{zanasi2015interacting}), but the connection between matrices and the semantics of $\catSata$ diagrams as monotone relations is not as immediate and deserves further explanations. 
\begin{definition}
\label{def:matrix-diagram}
We say that $d$ is a \emph{matrix diagram} if it factorises as follows:
\begin{equation}
    \tikzfig{matrix-diagram}
\end{equation}
\end{definition}
Matrix diagrams $m\to n$ denote relations that are the satisfying assignments of sets of Horn clauses, \emph{i.e.} clauses  with at most one positive (unnegated) literal. For each Horn clause, the negated literals correspond to a subset of the left wires, while the only positive literal corresponds to the right wire to which they are connected. For this reason,  matrix diagrams can also be thought of simply as matrices with Boolean coefficients: each row encodes the negative literals of a Horn clause through its coefficients. An arbitrary $n\times m$ matrix $A$ can be represented by a matrix diagram with $m$ wires on the left and $n$ wires on the right---the left ports can be interpreted as the columns and the right ports as the rows of $A$. The $j$-th wire on the left is connected to the $i$-th wire on the right whenever $A_{ij}=1$; when $A_{ij}=0$ they remain disconnected. For example,
\[\text{the Boolean matrix } A = 
\begin{pmatrix}
0 & 1 & 1\\
0 & 1 & 0
\end{pmatrix} \text{ can be depicted as } \tikzfig{matrix-ex}\]
Conversely, given a diagram of this form, we can recover the corresponding matrix by looking at which ports are connected to which.
\begin{definition}
\label{def:canonical-mat-diag}
A \emph{canonical matrix diagram} $d \colon m \to n$ is a diagram of the following form:
\begin{equation}
    \scalebox{0.8}{\tikzfig{mat-diag-canonical-1}}
\end{equation}
where each $\matEntry{i}{j}$ is either $\tikzfig{id-morphism}$ or $\tikzfig{disconnect}$, for $i = 0, \dots, m-1$ and $j = 0, \dots, n-1$; $\sigma$ contains suitably many $\swap$ that move the $j$-th output wire of the $i$-th $\copyy$ to the $i$-th input wire of the $j$-th $\conjj$.
\end{definition}
The equational theory $\axBARel$ is complete for the monoidal subcategory generated by $\copyy, \discard, \unit, \conjj$.
\begin{proposition}[Matrix completeness]
Let $c,d \colon m\to n$ be two diagrams formed only of the generators $\copyy, \discard, \unit, \conjj$. We have $\intBProf{c} = \intBProf{d}$ iff $c=d$ in $\axBARel$.
\end{proposition}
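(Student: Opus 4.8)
The plan is to identify the monoidal subcategory generated by $\copyy, \discard, \unit, \conjj$ with the PROP $\mathsf{Mat}_{\Bool}$ of Boolean matrices, whose arrows $m\to n$ are $n\times m$ matrices over the Boolean semiring, composed by Boolean matrix multiplication and juxtaposed by direct sum. The soundness direction ($c=d$ in $\axBARel$ implies $\intBProf{c}=\intBProf{d}$) is immediate, since each axiom is a valid semantic equality, so the content lies in completeness. For this I would reduce to three facts: \textbf{(a)} every diagram built from $\copyy, \discard, \unit, \conjj$ is equal in $\axBARel$ to a canonical matrix diagram $c_A$ (Definition~\ref{def:canonical-mat-diag}); \textbf{(b)} a canonical matrix diagram is literally determined by its Boolean matrix $A$, so two canonical diagrams with the same matrix are syntactically identical; and \textbf{(c)} the monotone relation $\intBProf{c_A}$ determines the matrix $A$. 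Granting these, if $\intBProf{c}=\intBProf{d}$ I put both diagrams in canonical form, with matrices $A$ and $B$; then $\intBProf{c_A}=\intBProf{d_B}$ forces $A=B$ by (c), hence $c_A=d_B$ by (b), and therefore $c=d$ in $\axBARel$.

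The core of the argument is (a), which I would prove by induction on the construction of the diagram from generators, composition, and monoidal product. The generators $\copyy, \discard, \unit, \conjj$, together with identities and symmetries, are directly canonical, with the evident matrices $\begin{psmallmatrix}1\\1\end{psmallmatrix}$, the empty $0\times 1$ and $1\times 0$ matrices, $\begin{pmatrix}1&1\end{pmatrix}$, the identity, and permutation matrices. For the monoidal product, the canonical diagram of $c_A\tensor c_B$ is obtained by juxtaposition and represents the block-diagonal matrix $A\oplus B$, using only the (co)monoid structure. The decisive case is composition: I must show that $c_A\then c_B$ rewrites, using the axioms of $\axBARel$, to the canonical diagram whose matrix is the Boolean product of $A$ and $B$. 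This is where the bimonoid (C block) is used crucially: the generalised bialgebra law (Proposition~\ref{prop:gen-bialg}) lets me slide the $\conjj$-block coming from $c_A$ through the $\copyy$-block of $c_B$, turning the interface into a grid of parallel paths, each passing through a $\copyy$ and then a $\conjj$, indexed by the summation in matrix multiplication; the counit and unit laws discard paths through zero entries; and, crucially, the idempotency equation $\copyy\then\conjj=\id$, obtained from (D2) together with its converse (Proposition~\ref{prop:idempotent-equalities}), collapses multiple parallel connections between a fixed pair of ports into a single one. It is exactly this last step that makes the bookkeeping Boolean ($1+1=1$) rather than natural-number-valued, and hence yields $\mathsf{Mat}_{\Bool}$ rather than $\mathsf{Mat}_{\mathbb{N}}$.

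Fact (b) is immediate from Definition~\ref{def:canonical-mat-diag}, while (c) is a routine semantic verification: reading off the interpretation, $\intBProf{c_A}=\{(\mathbf x,\mathbf y)\in\Bool^m\times\Bool^n\mid \bigwedge_{j:A_{ij}=1}x_j\le y_i \text{ for all } i\}$, so fixing a row $i$ and setting $y_i=0$, $y_{i'}=1$ for $i'\ne i$ shows that the complement of the relation in that coordinate is the up-set $\{\mathbf x\mid x_j=1 \text{ for all } j \text{ with } A_{ij}=1\}$, which recovers the set $\{j\mid A_{ij}=1\}$ and hence the whole matrix; distinct matrices therefore give distinct relations. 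I expect the main obstacle to be the composition step of (a), namely verifying that the rewriting genuinely implements Boolean matrix multiplication once all wire permutations are accounted for; this is the Boolean analogue of the classical completeness of the bialgebra PROP for $\mathbb{N}$-matrices, and can be carried out along the lines of~\cite{zanasi2015interacting,bonchi2017interacting}, the idempotency law being the only extra ingredient beyond the bimonoid axioms.
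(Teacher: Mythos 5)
Your proposal is correct, and its skeleton is the same as the paper's: identify the $\copyy, \discard, \unit, \conjj$-fragment with Boolean matrices, with idempotency --- (D2) strengthened to an equality, exactly as in Proposition~\ref{prop:idempotent-equalities} --- as the one ingredient that turns the $\mathbb{N}$-matrix bialgebra into the Boolean one, and then show the semantics is injective on matrices. Where you differ is in how the two halves are discharged. For the syntactic half, the paper does \emph{not} carry out the normal-form induction of your fact (a): it cites \cite[Proposition~3.9]{zanasi2015interacting} (completeness of the commutative bimonoid theory for matrices over a principal ideal domain), observes that the proof there never uses additive inverses and hence works over any semiring, and only verifies that the commutative \emph{idempotent} bimonoid axioms hold for $(\copyy, \discard, \unit, \conjj)$; your inline induction via the generalised bialgebra law plus idempotency is precisely the content of that citation (note that for the pair $\conjj$/$\copyy$ the relevant statement in this paper is Proposition~\ref{prop:gen-bialg-mix} rather than Proposition~\ref{prop:gen-bialg}, a harmless mis-citation). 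For the semantic half, the paper factors the interpretation through monotone \emph{functions}: a matrix $A$ determines the map with components $f_i(\mathbf{x}) = \bigwedge_{j : A_{ij}=1} x_j$, distinct matrices give distinct maps, and monotone functions embed faithfully into monotone relations by Proposition~\ref{prop:rel-rep-faithful}; your fact (c) shortcuts this by recovering each row of $A$ directly from the relation as the minimum of the up-set you exhibit, which is equally valid. The trade-off is clear: your argument is self-contained but obliges you to do the grid/collapsing bookkeeping in the composition case of (a), while the paper's is shorter at the price of outsourcing that combinatorial core to the literature.
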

\begin{proof}
The symmetric monoidal category of matrices with Boolean coefficients and the direct sum as monoidal product has a well-known complete axiomatisation: the theory of a commutative and idempotent bimonoid. This result can be found in~\cite[Proposition~3.9]{zanasi2015interacting}: therein the third author of this paper gives a complete axiomatisation of the symmetric monoidal category of matrices over a principal ideal domain (again with the direct sum as monoidal product). However the proof in~\cite{zanasi2015interacting} does not make any use of additive inverses and can be adapted without changes to the case of an arbitrary semiring, like the Booleans.

In $\axBARel$ the axioms defining a commutative and idempotent bimonoid are given in Fig.~\ref{fig:ineq-axiom} by equations (A1)-(A13). Note however that we care about matrices formed using $\copyy, \discard, \unit, \conjj$ here, so we need to check that the same equations hold with $\cocopy$ replaced by $\conjj$ and $\codiscard$ replaced by $\unit$. For this, we only need to show that we can strengthen (D2) to an equality (for idempotency). The converse inequality can be derived as follows:
\[\tikzfig{bcomult-wmult-equal-id}\]
This shows that the symmetric monoidal subcategory of $\catSata$ generated by $\copyy, \discard, \unit, \conjj$ embeds faithfully into the symmetric monoidal category of Boolean matrices. 

The desired completeness result follows because the symmetric monoidal category of Boolean matrices embeds faithfully in our target category of monotone relations and that $\intBProf{\cdot}$ (or rather, the restriction of $\intBProf{\cdot}$ to the symmetric monoidal subcategory generated by $\copyy, \discard, \unit, \conjj$) factors through this embedding. To see this, notice that an $n\times m$ Boolean matrix $A$ defines a monotone function $f\colon \Bool^m\to \Bool^n$ whose $j$-th component is given by $f_j(x_1,\dots,x_m) =  x_{j_1}\land \cdots \land  x_{j_k}$ where $\{j_1,\dots, j_k\}$ is the set of indices in the $j$-th row of $A$ which are equal to $1$. Clearly, any two matrices that define the same monotone function are equal. In turn, we can embed monotone functions faithfully into monotone relations, as explained in Proposition~\ref{prop:rel-rep-faithful}.
\end{proof}
\begin{corollary}
\label{cor:matrix-diag-comp}
Any diagram formed only of the generators $\copyy, \discard, \unit, \conjj$ is equal to a canonical matrix diagram. 
\end{corollary}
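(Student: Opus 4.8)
The plan is to leverage the Matrix completeness proposition we have just established: since it shows that two diagrams built from $\copyy, \discard, \unit, \conjj$ are equal in $\axBARel$ as soon as they have the same interpretation, it suffices to exhibit, for an arbitrary such diagram $c$, a \emph{canonical} matrix diagram with the same semantics, and then invoke completeness.

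First I would show that the interpretation of any diagram $c \colon m \to n$ formed from $\copyy, \discard, \unit, \conjj$ (together with the structural identities and symmetries) represents a monotone function $f \colon \Bool^m \to \Bool^n$ each of whose $n$ components is a conjunction of a subset of the $m$ inputs. This is a straightforward induction on the structure of $c$: the four generators individually represent such functions --- $\copyy$ the diagonal $x \mapsto (x,x)$, $\discard$ the terminal map, $\unit$ the constant $1$, and $\conjj$ the binary meet $(x_1,x_2) \mapsto x_1 \land x_2$ --- and the class of ``conjunction-of-a-subset'' functions is visibly closed under both monoidal product and sequential composition. For composition one uses Proposition~\ref{prop:rep-comp} to see that the composite interpretation again represents the composite function, so the inductive hypothesis transfers along the two ways of building up $c$. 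Recording $f$ as the Boolean matrix $A$ with $A_{ij} = 1$ exactly when the $j$-th input occurs in the conjunction defining the $i$-th output gives the matrix associated with $c$.

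Next I would build the canonical matrix diagram $d_A \colon m \to n$ of Definition~\ref{def:canonical-mat-diag} directly from $A$, placing an identity wire where $A_{ij} = 1$ and a disconnection where $A_{ij} = 0$. By construction $d_A$ is again formed only from $\copyy, \discard, \unit, \conjj$ (and the structural swaps $\swap$), and a direct reading of its semantics shows that $\intBProf{d_A}$ represents the very same function $f$, hence encodes the same matrix $A$. Consequently $\intBProf{c} = \intBProf{d_A}$, and the Matrix completeness proposition immediately yields $c = d_A$ in $\axBARel$, which is the desired conclusion.

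The only real content lies in the inductive claim that the semantics stays within the class of Boolean-matrix functions, where a little care is needed to track how the ``being represented by'' relation behaves under composition. But this bookkeeping is exactly what Proposition~\ref{prop:rep-comp} packages, so I expect no genuine obstacle: the corollary is essentially a repackaging of Matrix completeness together with the observation that canonical matrix diagrams realise every Boolean matrix.
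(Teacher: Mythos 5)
Your proposal is correct and follows essentially the route the paper intends: the paper states this corollary without proof as an immediate consequence of Proposition~\ref{prop:matrix-completeness}, the implicit argument being exactly yours --- every diagram built from $\copyy, \discard, \unit, \conjj$ denotes (the relation representing) a monotone function whose components are conjunctions of inputs, i.e.\ a Boolean matrix, the canonical matrix diagram for that matrix has the same interpretation, and matrix completeness upgrades the semantic equality to an equality in $\axBARel$. Your inductive bookkeeping via Proposition~\ref{prop:rep-comp} is the natural way to fill in the detail the paper leaves implicit, and it is sound.
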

\begin{example}
\label{eg:canonical-matrix-diag}
The following matrix diagram is equivalent to a canonical matrix diagram on the right hand side:
\begin{equation}
    \scalebox{0.8}{\tikzfig{mat-diag-canonical-eg-3}}
\end{equation}
In particular, the grey diagram is the $\sigma$ in Definition~\ref{def:canonical-mat-diag}; $\matEntry{2}{1} = \matEntry{1}{3} = \matEntry{2}{3} = \tikzfig{disconnect}$, while the rest $\matEntry{i}{j}$'s are all $\tikzfig{id-morphism}$.
\end{example}
Proposition~\ref{prop:matrix-completeness} will allow us to simplify our reasoning in the proof of completeness. In light of this result, we will assume  -- wherever it is convenient -- that we can always rewrite any diagram formed of $\copyy, \discard, \unit, \conjj$ to its equivalent matrix canonical form. This often means that we can be as lax as necessary when identifying diagrams modulo the (co)associativity of $\copyy$ or $\conjj$ and the (co)unitality of $(\copyy, \discard)$ and $(\conjj, \unit)$.

\begin{remark}\label{rem:three-bimonoids}
The equational theory of Fig.~\ref{fig:ineq-axiom} contains three commutative and idempotent bimonoids: $(\copyy, \discard, \unit, \conjj)$ as we have already explained, but also $(\copyy, \discard, \codiscard, \cocopy)$ and $(\coconj, \counit, \codiscard, \cocopy)$. Indeed, one can check that the relevant axioms also hold for them: they are easy consequences of (A1)-(A13), (C1)-(C6), and (D1)-(D12). We use the encoding of Boolean matrices using $(\coconj, \counit, \codiscard, \cocopy)$ once in Section~\ref{sec:sat}, when translating SAT instances to a diagrams.
\end{remark}

Moreover, the canonical form of matrix diagrams provides a simple and formal definition of the intuitive notion of connected wires.
\begin{definition}
\label{def:connect-canonical-mat-diag}
Let $d$ be a canonical matrix diagram as in Definition~\ref{def:canonical-mat-diag}. 
The $i$-th input wire and the $k$-th output wire are \emph{connected} if $\matEntry{i}{k} = \tikzfig{id-morphism}$.
The $i$-th and the $j$-th input wires are \emph{connected} if there exists $k \in \{ 0, \dots, n-1 \}$ such that $\matEntry{i}{k} = \matEntry{j}{k} =  \tikzfig{id-morphism}$. 
\end{definition}
Intuitively, two wires are connected if there is a path between these two wires. This notion of connectivity turns out to be crucial in the proof of completeness (Theorem~\ref{thm:sound-complete-monotone-rel-FBA}). 
\begin{example}
    Consider the diagram in Example~\ref{eg:canonical-matrix-diag}. The second input wire and the first output wire are intuitively `disconnected', and this corresponds to $\matEntry{2 1} = \tikzfig{disconnect}$. 
\end{example}

\end{document}